\newcommand{\be}{\begin{equation}}
\newcommand{\ee}{\end{equation}}
\newcommand{\ba}{\begin{array}}
\newcommand{\ea}{\end{array}}
\newcommand{\bea}{\begin{eqnarray}}
\newcommand{\eea}{\end{eqnarray}}
\newcommand{\calF}{{\cal F }}
\newcommand{\FF}{\mathbb{F}}
\renewcommand{\ker}[1]{\mathsf{ker}{(#1)}}
\newtheorem{prop}{Proposition}
\newcommand{\app}[1]{\hyperref[app:#1]{Appendix~\ref*{app:#1}}}
\newcommand{\eq}[1]{Eq.~(\ref{eq:#1})}
\renewcommand{\sec}[1]{\hyperref[sec:#1]{Section~\ref*{sec:#1}}}
\newcommand{\ssec}[1]{\hyperref[ssec:#1]{Subsection~\ref*{ssec:#1}}}
\newcommand{\fig}[1]{\hyperref[fig:#1]{Figure~\ref*{fig:#1}}}
\newcommand{\tab}[1]{\hyperref[tab:#1]{Table~\ref*{tab:#1}}}
\newcommand{\lem}[1]{\hyperref[lem:#1]{Lemma~\ref*{lem:#1}}}
\newcommand{\propos}[1]{\hyperref[propos:#1]{Proposition~\ref*{propos:#1}}}
\newcommand{\thm}[1]{\hyperref[thm:#1]{Theorem~\ref*{thm:#1}}}
\newcommand{\alg}[1]{\hyperref[alg:#1]{Algorithm~\ref*{alg:#1}}}
\title{Fail fast: techniques to probe rare events in quantum error correction}
\author[1]{Michael E. Beverland}
\author[1]{Malcolm Carroll}
\author[1]{Andrew W. Cross}
\author[1]{Theodore J. Yoder}
\affil[1]{IBM Quantum}
\begin{document}
\maketitle

\begin{abstract}
The ultimate goal of quantum error correction is to create logical qubits with very low error rates (e.g.~$10^{-12}$) and assemble them into large-scale quantum computers capable of performing many (e.g.~billions) of logical gates on many (e.g.~thousands) of logical qubits. 
However, it is necessarily difficult to directly assess the performance of such high-quality logical qubits using standard Monte Carlo sampling because logical failure events become very rare. 
Building on existing approaches to this problem, we develop three complementary techniques to characterize the rare-event regime for general quantum low-density parity-check (qLDPC) codes under circuit noise. 
(I) We propose a well-motivated, low-parameter ansatz for the failure spectrum (the fraction of fault sets of each size that fail) that empirically fits all the QEC systems we studied and predicts logical error rates at all physical error rates.
(II) We find min-weight logical operators of syndrome measurement circuits and exactly compute the number of min-weight failing configurations. 
(III) We generalize the splitting method to qLDPC codes using multi-seeded Metropolis sampling to improve convergence for systems with many inequivalent logical operators. 
We apply these tools to distance-6, -12, and -18 bivariate bicycle codes under circuit noise, observing strong low-error-rate performance with the recently proposed Relay decoder but also considerable scope for further improvement.
\end{abstract}

\newpage
\tableofcontents
\setlength{\parskip}{\medskipamount} 

\clearpage
\section{Introduction and overview}
\label{sec:intro}

Reliable estimates of the failure probability of components of a quantum computer is crucial for assessing different quantum computing approaches and architectures~\cite{BicycleArchitecturePaper}. 
However, quantum error-correcting (QEC) codes are typically implemented using large circuits with an exponential number of fault configurations, making accurate analytic characterization infeasible. 
As a result, performance is generally assessed through sampling-based simulations. 
However, direct sampling, often referred to as `standard Monte Carlo', becomes problematic in the regime of practical interest where logical failure rates below $10^{-12}$ are anticipated for large scale fault tolerant quantum computing~\cite{beverland2022assessing,BicycleArchitecturePaper}. 
In such regimes, estimating failure rates with direct sampling would require an infeasibly large number of simulation runs, making alternative approaches essential.
In this work, we study three complementary techniques for characterizing a system across all error rates, including the low error rate regime which is inaccessible to direct sampling, as illustrated in \fig{low-logical-error-methods}.

\begin{figure}[ht!]
    \centering
    \includegraphics[width=1.0\linewidth]{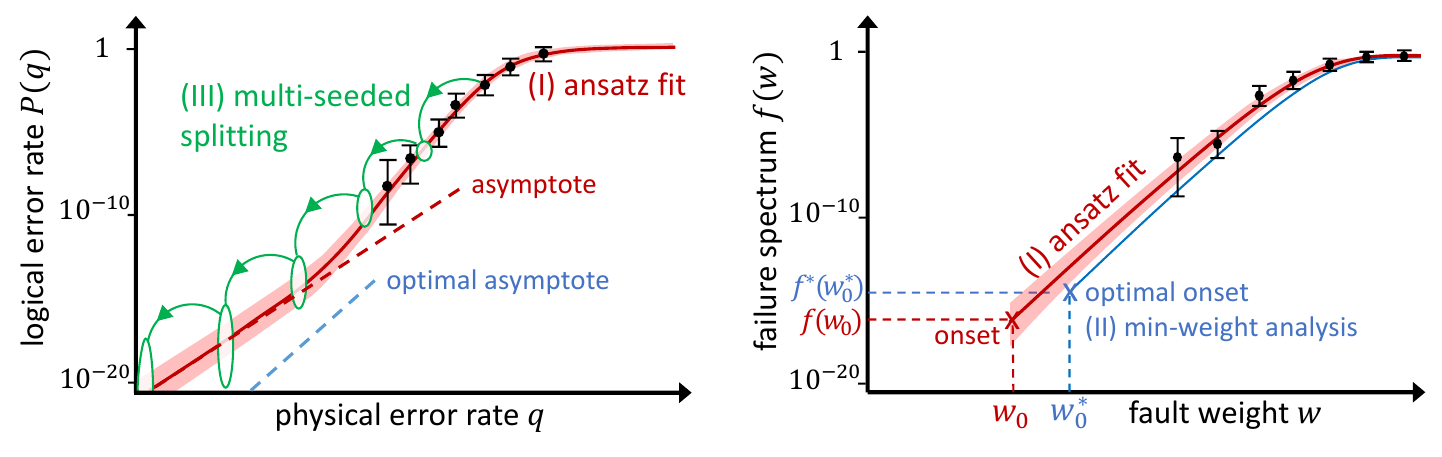}
    \caption{
    Illustration of our techniques for characterizing the logical error rate $P(q)$ as a function of physical error rate $q$ for a given QEC system (a decoding problem and decoder).
    Sampling (black dots) can estimate $P(q)$ directly (left) or indirectly via the failure spectrum $f(w)$ (right), but both become infeasible when failures are too rare.
    Technique~I fits the data to a failure spectrum ansatz (red line), which predicts $P(q)$ across all $q$, with the low $q$ limit (red dashed) determined by the onset point $(w_0, f(w_0))$.
    Technique~II identifies the optimal onset $(w^*_0, f^*(w^*_0))$ achievable by an ideal min-weight decoder. 
    This constrains the failure spectrum since $w_0 \leq w^*_0$ and $f(w^*_0) \geq f^*(w^*_0)$, and identifies the potential gain from an improved decoder.
    Technique~III applies the splitting method to estimate ratios of logical error rates across a sequence of $q$ values using Metropolis sampling.
    Each technique has limitations, but agreement among them strengthens confidence in the predicted low-error regime.
    }
    \label{fig:low-logical-error-methods}
\end{figure}

To describe our alternatives to direct sampling, it is helpful to introduce some notation (see \sec{definitions} for details). 
We assume a fixed quantum circuit with a stochastic noise model where each of $N$ possible faults occurs independently with probability $q$. 
A fault configuration is represented by a bitstring $e \in \mathbb{F}_2^N$, where each bit indicates whether a fault occurs. 
Each $e$ occurs with probability $q^{|e|}(1 - q)^{N - |e|}$ (where $|e|$ is the Hamming weight) and produces a \emph{syndrome} $\sigma(e)$, which is passed to a decoder that attempts a correction that either succeeds or fails. 
We refer to the combination of this decoding problem along with a specific decoding algorithm as the \emph{QEC system}. 
The \emph{onset weight} $w_0$ is the smallest weight of any $e$ that leads the QEC system to fail.
Let $f(w)$ denote the \emph{failure spectrum}, i.e., the fraction of weight-$w$ faults that cause failure, and let the logical error rate $P(q)$ be the expectation of failure given $q$.
Note that $f(w) = 0$ for $w < 0$ and $f(w)$ completely specifies $P(q)$ via 
\begin{equation}
  P(q) = \sum_{w=0}^{N} f(w)\,\binom{N}{w}\, q^{w}\,(1-q)^{N-w}.
  \label{eq:sum-formula}
\end{equation}
We will demonstrate that this formalism applies broadly, from code-capacity analyses to logical operations under circuit-level noise.

Now let us review some already-known approaches that seek to overcome the challenges of direct sampling.
\emph{Importance sampling} restricts the sum in \eq{sum-formula} to a range $w_{\text{min}} \leq w \leq w_{\text{max}}$, while bounding the contribution from terms outside this range.
This works well in some scenarios, for example, in the low $q$ limit, the logical error rate is dominated by the leading term $f(w_0) q^{w_0}$, so estimating the onset $f(w_0)$ alone may suffice. 
However, importance sampling has two main limitations. 
First, $f(w)$ in the relevant range may be extremely small; for instance (as described below) the toric code under bit flip noise has $f(w_0) \approx 10^{-18}$ at $d = 21$, which would be infeasible to estimate by sampling. 
Second, many terms may contribute significantly, requiring estimates of $f(w)$ over a wide range of $w$, which can be prohibitively expensive to obtain by sampling. 
These limitations frequently arise for large QEC systems.

For some QEC families with well-understood structure, the onset term $f(w_0)$ can be computed directly rather than estimated by sampling. 
Notable examples include surface and toric codes under bit-flip noise with minimum-weight decoders, where the structure of logical operators is well characterized~\cite{dennis2002topological}. 
For instance, in the unrotated toric code of even distance $d$ (which has $n=2 d^2$ qubits), min-weight X-type logical operators correspond to topologically nontrivial paths wrapping around the torus. 
There are exactly $2d$ such operators: $d$ wrapping horizontally and $d$ vertically.
The weight-$w_0 = d/2$ failures correspond to exactly half of the weight-$w_0$ restrictions of these operators, giving $f(w_0) = d \binom{d}{d/2}/\binom{n}{d/2}$. 
The rotated toric code presents more complexity due to overlapping logical operators, which makes enumeration of unique restrictions harder, but tight bounds remain achievable~\cite{beverland2019role}. 
These techniques do not generalize easily to other QEC systems, as they rely on specific structural features of toric codes.

The splitting method, a general technique for analyzing rare events~\cite{bennett1976}, was applied to surface codes in Ref.\cite{bravyi2013simulation}. 
It uses a Metropolis algorithm to sample failing configurations at a sequence of error rates and estimates the relative change in logical failure between successive points. 
The method begins at a reference error rate where the logical failure probability is known, typically at large fault probability $q$ where direct Monte Carlo is feasible. 
While the splitting method approach can be readily generalized to arbitrary QEC systems (see \sec{splitting-method} and Ref.~\cite{mayer2025}), it faces two key limitations as pointed out in Ref.~\cite{bravyi2013simulation}. 
First, the Metropolis walk may not be ergodic for a given QEC system; that is, the space of failing configurations may be disconnected under single fault transitions. 
Second, the algorithm’s mixing time is unknown for a given QEC and can grow exponentially with code distance, making the method impractical for some large codes.

We address the limitations for these established approaches, resulting in the following extensions (see \fig{low-logical-error-methods}):

\begin{itemize}
    
    \item[(I)] \textbf{Fitting a failure-spectrum ansatz.}
    We empirically observe that the failure spectrum $f(w)$ varies smoothly with $w$ and exhibits similar behavior across all QEC systems studied. 
    This observation motivates a simple model of a QEC system with a closed form expression for the failure spectrum, from which we derive a low parameter ansatz that fits our data across all regimes and QEC systems.
    The ansatz $f_{\text{ansatz}}(w)$ can be calibrated using data from accessible regimes and provides predictions across the full parameter range for a given QEC system.

    \item[(II)] \textbf{Computing or bounding min-weight properties.}
    We develop methods to compute or lower bound the min-weight decoder’s failure-spectrum onset $f^*(w_0^*)$. 
    For any other decoder, $w_0 \le w_0^*$ and $f(w_0^*) \ge f^*(w_0^*)$.
    Our approach combines new techniques for identifying minimum weight logical operators in general LDPC codes with carefully designed sampling procedures.
    We use this to study a set of notable bivariate bicycle codes \cite{bravyi2024high} under bitflip and circuit noise models.
    
    \item[(III)] \textbf{Multi-seeded splitting for general QEC systems.} 
    We investigate the applicability of the splitting method introduced for surface codes~\cite{bravyi2013simulation} to more general quantum LDPC codes, specifically the unrotated toric and bivariate bicycle codes. 
    To partially address the problem of exploring the entire space of failing configurations in LDPC codes, we seed Markov chains at a high error rate (accessible by Monte Carlo) with multiple typical failing configurations. 
    Then, we use the final failing configurations from chains at high error rates to seed chains at lower error rates where Monte Carlo is infeasible.
    
\end{itemize}

To apply our techniques to error models where different classes of faults can have different probabilities of occurring (such as circuit noise), we include additional copies of each fault in proportion to its probability, yielding an equal-probability representation.

There are a number of directions to improve upon these three methods.
It would be valuable to extend each method to adaptive fault-tolerant protocols~\cite{heussen2024dynamical}, including magic state cultivation~\cite{chamberland2018fault,gidney2024magic}, and to more sophisticated noise models, including for example, Pauli\textsc{+} noise \cite{Google2025quantum} and coherent noise \cite{miller2025efficient}. 
Regarding the splitting method, we are still limited in our understanding of mixing times of the Markov chains for general QEC systems. 
As remarked on in some more detail in Section~\ref{sec:splitting-method}, future work could focus both on bringing in insights from the extensive literature on Metropolis sampling, e.g.~Refs.~\cite{gelman1992inference,neal1996sampling}, and also on exploiting the structure of QEC systems, like that studied here, to increase the chain transition frequency.

The rest of this paper is structured as follows. 
In \sec{definitions}, we provide background and definitions used throughout. 
In \sec{ansatz-approach}, \sec{computing-min-weight-properties}, and \sec{splitting-method}, we present our three techniques, each of which can be read independently. 
Finally, in \sec{validate-relay}, we apply our techniques to evaluate the circuit-noise performance of the recently proposed Relay decoder~\cite{RelayPaper} on bivariate bicycle codes.
This work also supports the resource estimates in Ref.~\cite{BicycleArchitecturePaper} for quantum architectures based on bivariate bicycle codes.

\clearpage
\section{Background and definitions}
\label{sec:definitions}

Here we provide the main definitions and notation we will use throughout the paper.
Throughout this work, binary matrices and vectors use arithmetic over $\mathbb{F}_2$.

\subsection{Quantum error correction systems}

In this work, we represent QEC with a check matrix $H \in \mathbb{F}_2^{M \times N}$, an action matrix $A \in \mathbb{F}_2^{K \times N}$, and a decoding algorithm $\mathcal{C}\colon \mathbb{F}_2^{M} \to \mathbb{F}_2^{N}$. 
Each error $j \in \{1, \dots, N\}$ occurs independently with probability $q$, flipping checks and logical generators specified by the $j$th columns of $H$ and $A$, respectively.
When a set of errors represented by bitstring $e\in \mathbb{F}_2^{N}$ occurs, it produces an observed syndrome $\sigma = H e$.
The decoding algorithm $\mathcal{C}$ provides a correction $c = \mathcal{C}(\sigma)$ such that $H c = \sigma$, succeeding iff $A c = A e$.
We use the same mathematical representation for fault-tolerant circuits, but with a different interpretation of objects (errors $\to$ faults, checks $\to$ detectors etc.). See section 2 in Ref.~\cite{DTDpaper} for an overview of how these objects arise in different settings. 

We define a \emph{QEC system} as the combination of decoding objects $H$, $A$ and $\mathcal{C}$.
We consider a range of QEC systems in this work, which we specify and list in \sec{system-examples}.

A \emph{logical bitstring} (`logical' for short) is a set of faults $e$ such that $He=0$ but $Ae\neq0$. 
The \emph{distance} $D$ of the QEC system is the minimum Hamming weight $|e|$ of a logical bitstring, which is independent of the decoder $\mathcal{C}$.  
Note that $D$ can differ from the \emph{code distance} $d$, as it depends on the noise model and circuit implementation. 

The \emph{onset weight} $w_0$ of the QEC system is the minimum weight of an error which causes the decoder to fail. 
We say that a decoder is \emph{min-weight} if the decoder always produces a minimum-weight correction consistent with the observed syndrome. 
For a min-weight decoder, the onset weight is $w^*_0 := \lceil D/2 \rceil$, and for any decoder $w_0\le w^*_0$.

\paragraph{Non-uniform fault probabilities:}
In the description above, we assume each column in $H \in \mathbb{F}_2^{M \times N}$ and $A \in \mathbb{F}_2^{K \times N}$ is selected with uniform probability $q$. 
We call $(H,A,q)$ the QEC system's \emph{expanded representation}.
Non-uniform fault probabilities are captured by including multiple copies of columns.
For example, in circuit noise (see \sec{system-examples}) with error rate $p$, we set $q=p/15$: each non-trivial two-qubit Pauli after a CNOT has probability $p/15=q$ (one column), whereas a measurement flip has probability $p=15q$ (15 identical columns).
 
More generally, we can form an expanded representation for any QEC system with check/action matrices $\tilde H\in\mathbb{F}_2^{M\times\tilde N}$ and $\tilde A\in\mathbb{F}_2^{K\times\tilde N}$ in which fault $j\in[\tilde N]$ occurs with probability $m_j\,p/b$, with $p \in (0,1/2)$ a global noise parameter and $b \in \mathbb{R}$ chosen so that each $m_j$ is a positive integer. 
We call $(\tilde H,\tilde A, \vec{m}/b, p)$ the \emph{compressed representation}. 
The equivalent expanded representation $(H,A,q)$ is obtained by replicating column $j$ exactly $m_j$ times (so $N=\sum_j m_j$) and setting $q=p/b$. 
Sampling columns of $(H,A)$ independently with uniform probability $q=p/b$ is equivalent to sampling $(\tilde H,\tilde A)$ with probabilities $m_j\,p/b$, up to $O(p^2)$ corrections that we neglect. 

In this work, we use the expanded representation for definitions, but it is often convenient to work with the (typically smaller) compressed representation and then map across. 
For some systems (e.g., bitflip noise), $m_j\equiv1$ and the two representations coincide.

\subsection{Failure spectrum and bitstring sets}
\label{sec:fail-frac}

The logical failure rate $P(q)$ is defined as the total probability of all fault bitstrings that cause the decoder to fail.
It is helpful for us to define the following general transform $\mathcal{T}\{g\}(q)$ of any real-valued function $g(w)$ over the integers $w = 0,1, \dots N$ to a function over the range $0 \leq q \leq 1$ as:
\begin{eqnarray}
\mathcal{T}\{g\}(q) = \sum_{w = 0}^{N} g(w) \binom{N}{w} q^w (1-q)^{N-w}.
\label{eq:transform}
\end{eqnarray}
With this notation, note that \eq{sum-formula} can be written as 
$P(q) = \mathcal{T}\{f\}(q)$, where the function $f(w) \in [0,1]$ is the \emph{failure spectrum}, namely the fraction of weight-$w$ fault bitstrings that fail.
The failure spectrum $f(w)$ therefore fully characterizes the QEC system's error correction performance for any value of $q$.
While $f(w)$ can, in principle, be determined by testing all $\binom{N }{W}$ weight-$w$ errors, this is computationally infeasible except for very small $N$. 
We refer to the first non-zero value $f(w_0)$ of the failure spectrum as the \emph{onset fraction}, where $w_0$ is the onset weight.

\begin{figure}[ht!]
    \centering  (a)\includegraphics[width=0.45\linewidth]{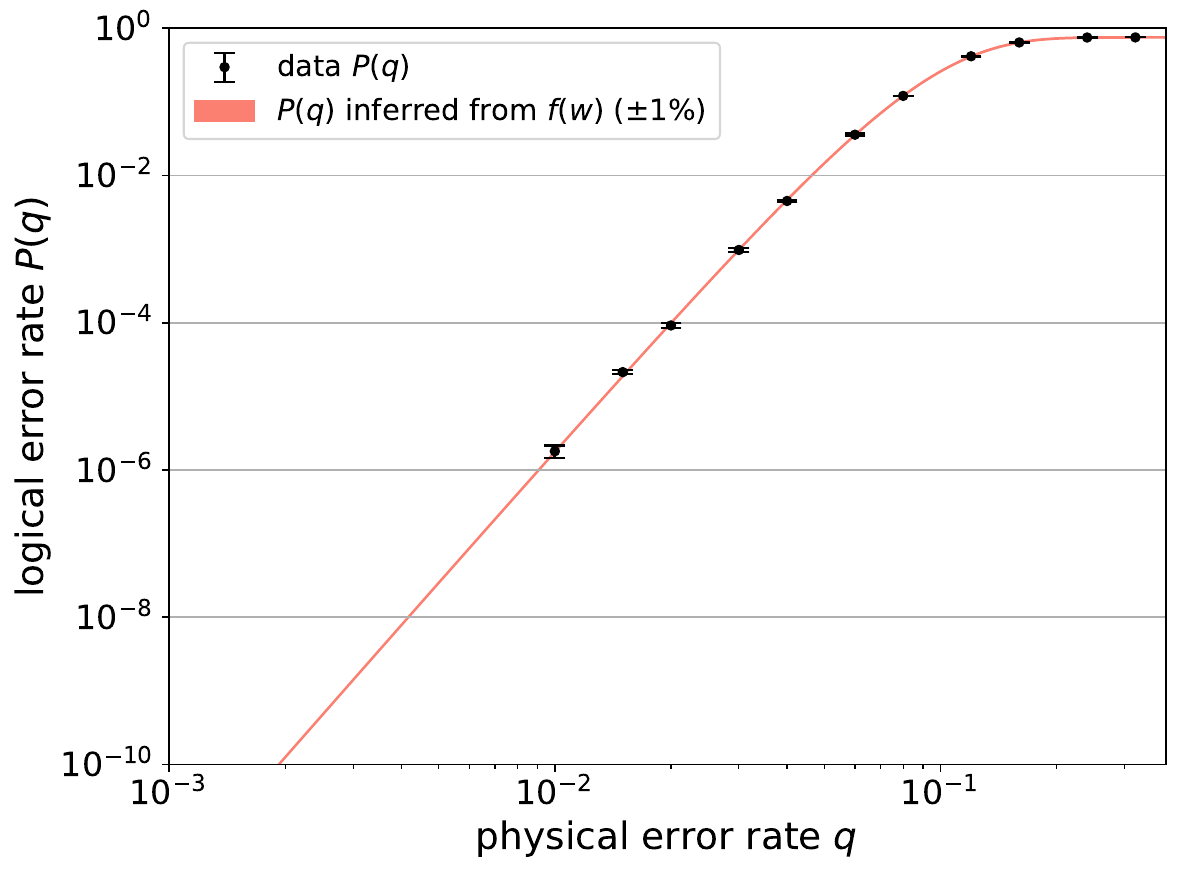}(b)\includegraphics[width=0.45\linewidth]{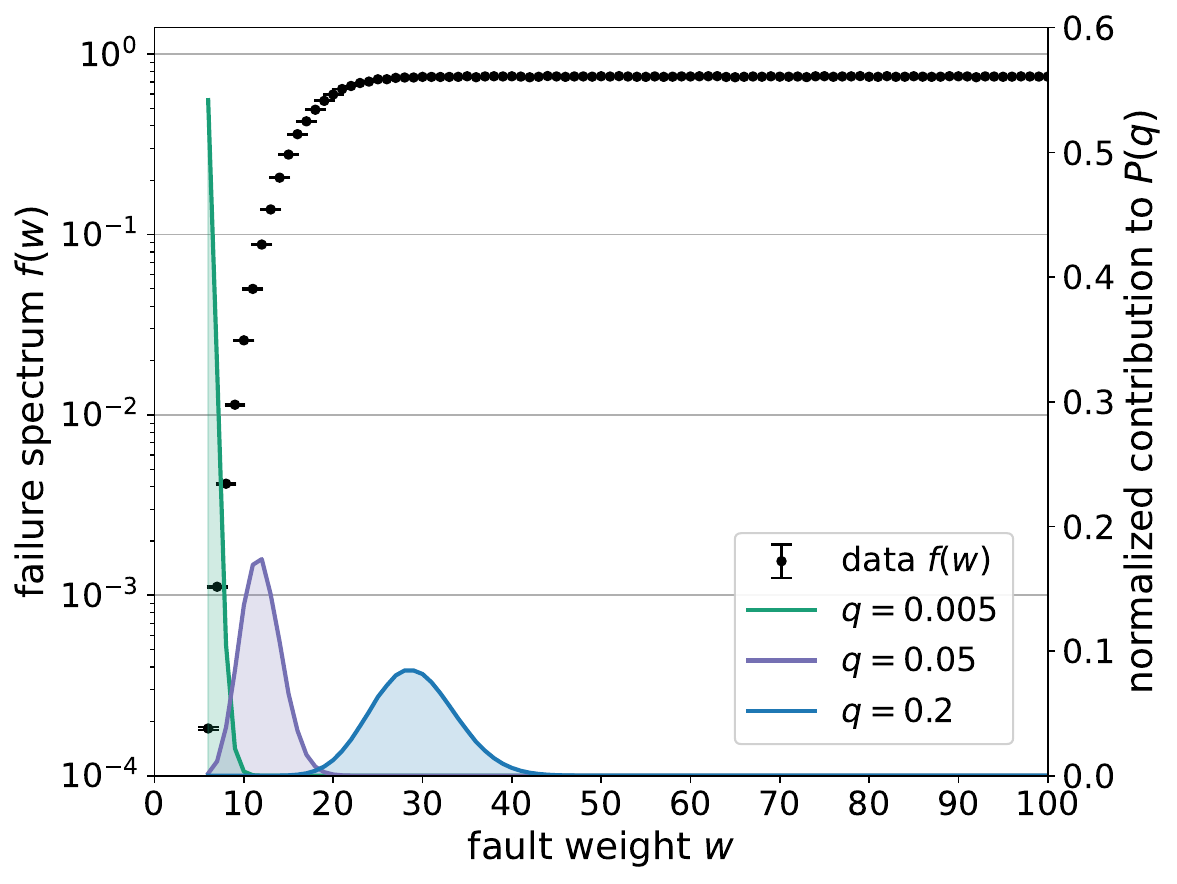}
    \caption{
    Logical error rate and failure spectrum for a distance-12 rotated toric code (see RT(12)-bitflip in \sec{system-examples}).
    (a) Direct sampling estimates $\hat{P}(q)$ of the logical error rate (black), with the overlaid sum $\mathcal{T}\{\hat{f}\}(q)$ (shaded red to include error bars) derived from the failure spectrum estimated in (b).
    (b) Direct sampling estimates $\hat{f}(w)$ of the failure spectrum, showing the normalized contribution of each weight $w$ to $\mathcal{T}\{\hat{f}\}(q)$ for three values of $q$ (colors).
    At $q = 0.005$ (well below pseudo-threshold), the sum is dominated by the minimum-weight term $w_0$, while higher $q$ values show significant contributions from a broader weight range.
    We sample sufficiently to estimate each $f(w)$ to within 1\% relative error for all $w=6,7,\dots,N$, observing that $f(w)$ rises monotonically to $\sim\!0.75$ near $w\approx 30$ and remains near this value. 
    We note that $f(w)$ eventually declines near $w\approx 115$ (beyond the plotted range), but for $q<\tfrac12$ this non-monotonic tail modifies $\mathcal{T}\{\hat{f}\}(q)$ only negligibly.
    \label{fig:failure-spectrum-transform}
    }
\end{figure}

\textbf{Bitstring sets: }
It is helpful to define several sets.
The set of \emph{weight-$w$ bitstrings} is $\mathcal{E}(w) = \{e \in \mathbb{F}_2^N \mid |e| = w \}$.
The set of \emph{weight-$w$ logical bitstrings} is $\mathcal{L}(w) = \{e \in \mathcal{E}(w) \mid He = 0 \text{ and } Ae \neq 0\}$.
The set of \emph{weight-$w$ failing bitstrings} is $\mathcal{F}(w) = \{e \in \mathcal{E}(w) \mid A\mathcal{C}(He) \neq Ae\}$.
These sets relate to some of our earlier definitions.
The failure spectrum $f(w) = |\mathcal{F}(w)|/\binom{N}{w}$.
The system distance $D = \min_{|\mathcal{L}(w)| > 0}(w)$, and min-failing weight $w_0 = \min_{|\mathcal{F}(w)| > 0}(w)$.
We can further define the min-weight failing bitstrings $\mathcal{F}(w_0)$ and min-weight logical bitstrings $\mathcal{L}(D)$.

\subsection{Direct (Monte Carlo) sampling and importance sampling}

Here we briefly review some standard sampling methods that can be used to estimate the logical error rate $P(q)$.

\textbf{Direct sampling: } 
Direct sampling, also known as Monte Carlo sampling,
is the most direct approach to estimate $P(q)$, which involves performing many trials, where in each trial a fault bitstring $e$ is drawn with probability $q^{|e|} (1-q)^{N-|e|}$, and then given the syndrome $\sigma = H e$, the decoder is applied to obtain correction $c = \mathcal{C}(\sigma)$ and we test whether or not $c$ corrects $e$.

Two disadvantages of the Monte Carlo approach to estimate $P(q)$ are: (1) it only provides estimates for a specific value $q$, and (2) for low error rates we may sample many errors which are anyway guaranteed to succeed (for example if the weight of the fault bitstring $e$ is lower than $w_0$ in a trial).
Importance sampling aims to overcome these disadvantages.

\textbf{Importance Sampling: } 
This approach, based on the concentration of contributions in a small range of weights to the sum $P(q) = \mathcal{T}\{f\}(q)$ as illustrated in \fig{failure-spectrum-transform}(b), approximates $P(q)$ by identifying a small set $W$ of fault weights $w$ where the failure spectrum $f(w)$ contributes most to the sum at the relevant $q$.
Then, $f(w)$ is estimated as $\hat{f}(w) = F(w)/T(w)$ for each $w \in W$, by sampling $T(w)$ trials of weight-$w$ noise, of which $F(w)$ result in decoder failure.
For importance sampling, the estimate $\hat{P}_\text{IS}(q)$ of $P(q)$ is then
\begin{eqnarray}
\hat{P}_\text{IS}(q) =  \sum_{w \in W} \hat{f}(w) \binom{N}{W} q^w (1-q)^{N-w}.
\end{eqnarray}
This can deviate from the true value $P$ due to two effects: statistical uncertainty from $\hat{f}(w)$, and the exclusion of terms which are not in $W$, resulting in a missing contribution $0 \leq \Delta P_\text{IS}(q) \leq  \sum_{w \notin W} \binom{N}{w} q^w (1-q)^{N-w}$.
When this missing contribution is small compared with $\hat{P}_\text{IS}(q)$ and the set of estimated $\hat{f}(w)$ values have low statistical error, importance sampling is a very powerful method, requiring far fewer samples than Monte Carlo in the regime $q N \ll w_0$ to obtain comparably accurate estimates of $P(q)$, see \app{LMmethod}.

Importance sampling can be challenging for large codes with moderate physical error rates but very low logical error rates, as is typical in large-scale quantum computing. 
First, in this regime, $f(w)$ can be very small for significantly contributing weights $w \in W$, and an impractically large number of trials may be needed to estimate it accurately. 
Second, the set of significantly contributing weights $W$ can be very large, requiring a correspondingly large number of samples.
These challenges motivate the alternative approach to characterize codes we propose in this work.

\paragraph{Numerical methods:}
When sampling either with fixed error rate $q$ or fixed error weight $w$, when $T$ trials are performed, $F$ of which are observed to fail, our estimate of the failure fraction is $F/T$, and our estimate of the statistical uncertainty on our estimated failure fraction is $\sqrt{(F/T)(1-F/T)/T}$.

\subsection{QEC system examples}
\label{sec:system-examples}

We study a number of examples QEC systems to test our methods, which are summarized in \tab{example-set}.
Here we specify details to allow the reconstruction of each example.

Each example is fully specified by its check matrix $H$, its logical action matrix $A$ and its decoder $\mathcal{C}$.

\textbf{QEC codes: }
We build each of our examples from a QEC code.
Specifically, we consider bivariate bicycle codes~\cite{kovalev2013quantum} with distances $d = 6$, $12$, and $18$ from Ref.\cite{bravyi2024high}, denoted $\text{BB}(d)$; unrotated toric codes~\cite{kitaev2003fault}, denoted $\text{UT}(d)$; rotated toric codes~\cite{wen2003quantum}, denoted $\text{RT}(d)$; and rotated surface codes~\cite{bravyi1998quantum}, denoted $\text{RS}(d)$.
We also consider an asymmetric toric code $\text{UT}(d_1,d_2)$, which is on a rectangular torus with distances $d_1$ and $d_2$ associated with the two cycles around the torus.
Note that all these codes are Calderbank-Shor-Steane (CSS) codes~\cite{calderbank1996good,steane1996multiple}, allowing $X$ and $Z$ errors to be detected and corrected independently.

\textbf{Noise models: }
We consider two noise models.
The first noise model (called \emph{bitflip} in \tab{example-set}) we apply a $Z$ error to each qubit independently with uniform probability $p$, then measure the $X$-type stabilizers of the code.

The second noise model (called \emph{circuit} in \tab{example-set}) is the standard circuit noise model. 
Each way a failure can occur is modeled as a non-trivial Pauli operator being applied to the operation's support, along with a bit flip of measurement outcomes.
State preparations and measurement outcomes fail independently with probability $p$. 
Idle gates fail independently as $X$, $Y$, or $Z$ errors, each with probability $p/3$. 
CNOT gates fail independently as one of the 15 nontrivial two-qubit Pauli errors, each with probability $p/15$.
For circuit noise we parameterize the logical error rate as $P(p)$. 
When computing $P(p)$ inferred by the failure spectrum $f(w)$,
we compute $\mathcal{T}\{f\}(q)$ and then substitute $q \leftarrow p/15$.

All but one of our circuit noise examples are for the fault-tolerant memory of the code.
For this, we use a circuit formed by repeating a QEC cycle (which measures a complete set of stabilizer generators of the code) $d$ times subjected to the noise, followed by a single fault-free 
cycle.\footnote{The fault-free cycle is a standard modeling assumption used to return to the codespace and determine the presence or absence of a logical error. 
See, for instance, \cite{fowler2009high}.}
We use Stim’s built-in rotated-surface-code circuit~\cite{gidney2021stim} and the bivariate-bicycle circuits of Refs.~\cite{bravyi2024high,bravyi2024github}.
Example $\text{BB}(12)$-circuit-$Y_1$ is the measurement of the $Y_1$ logical operator using an ancilla system as specified in Ref.~\cite{BicycleArchitecturePaper}.

\textbf{CSS decoding: }
While the techniques used in this work do not require it, some of the decoders we consider rely on, or perform better with, separate decoding using the syndrome from the $X$-type and $Z$-type checks, which is possible since all of our examples use CSS codes.
We follow the naming convention of Ref.~\cite{bravyi2024github}, where a subscript $X$ is used on an object involved in decoding $X$-type noise (which is detected by $Z$ type checks) etc.

To extract the objects for separate $X$ and $Z$ decoding, start from the full check matrix $H$ (rows ordered: $X$-detectors first, then $Z$-detectors) and the logical action matrix $A$ (rows ordered: logical-$X$ flips, then logical-$Z$ flips). 
Build $H_X$ by selecting from $H$ the columns that trigger at least one $Z$ check and keeping only the $Z$-check rows; define $H_Z$ analogously. 
Some columns appear in both $H_X$ and $H_Z$ when a fault triggers both types. 
Form $A_X$ and $A_Z$ by taking the same column selections from $A$ and then restricting rows to logical-$Z$ (for $A_X$) or logical-$X$ (for $A_Z$).

For simplicity, we generate errors separately for the $X$ and $Z$ systems to estimate logical error rates $P_X$ and $P_Z$, and approximate the total logical error rate of the combined system by $P_X + P_Z$.
This is an approximation because some columns of the original check matrix appear in both $H_X$ and $H_Z$, inducing correlations between the $X$ and $Z$ errors.
We relax this approximation in \app{XYZ-vs-XZ}.

\textbf{Decoder choices: }
For all surface and toric code QEC systems, we use the Matching decoder~\cite{dennis2002topological,Higgott2025sparseblossom} implemented in Ref.~\cite{higgott2022pymatching}. 
For bivariate bicycle codes under bitflip noise, we use BP-OSD \cite{roffe2020} with 100 rounds of BP, followed by combination sweep order 10.
For bivariate bicycle codes under circuit noise, we use the Relay decoder~\cite{RelayPaper} with the parameters therein. 
Specifically, for BB(6), BB(12), and BB(18) we set $S=6$, run 80 iterations in leg~1 and 60 per subsequent leg, up to 600 legs. 
The first leg uses memory strength $\gamma=0.125$ for all nodes. 
For each subsequent leg, each node's $\gamma$ is drawn independently from $\mathrm{Unif}[-0.24,0.66]$ for BB(6) and BB(12), and from $\mathrm{Unif}[-0.161,0.815]$ for BB(18).
We also apply the order-zero version BP-LSD~\cite{hillmann2024lsd} to the same BB-circuit decoding problems for comparison in \sec{validate-relay}, using min-sum BP as a pre-decoder with 30 iterations with a scale factor 0.625.
The decoders are instantiated with the compressed representation at $p=0.001$ for all cases.
In \sec{computing-min-weight-properties}, we use decoding algorithms not as part of a QEC system but as subroutines for finding min-weight properties of the codes, and we provide details therein.

\begin{table}[h!]
\centering
\begin{tabular}{l l l l c c c c}
\toprule
\textbf{QEC system} & \textbf{code} & \textbf{noise} & \textbf{decoding} & $\boldsymbol{\tilde{N}}$ & $\boldsymbol{N}$ & $\boldsymbol{M}$ & $\boldsymbol{K}$ \\
\midrule
$\text{BB}(6)$-bitflip            & bivariate bicycle         & bitflip        & BP-OSD-Z & 72 &  72 & 36 & 12 \\
$\text{BB}(6)$-circuit            & bivariate bicycle       & circuit         & Relay-Z & 2233 &  $46224$ & $252$ & $12$  \\
$\text{BB}(12)$-bitflip           & bivariate bicycle       & bitflip        & BP-OSD-Z & 144 & 144 & 72 &  12 \\
$\text{BB}(12)$-circuit           & bivariate bicycle        & circuit         & Relay-Z & 8785 &  $184896$ & $936$ & $12$  \\
$\text{BB}(12)$-circuit-$Y_1$     & bivariate bicycle      & circuit   & Relay-XYZ & 79591 & 400117 & 2144 & 24 \\
$\text{BB}(18)$-bitflip           & bivariate bicycle        & bitflip        & BP-OSD-Z & 288 & 288 & 144 & 12 \\
$\text{BB}(18)$-circuit           & bivariate bicycle        & circuit         & Relay-Z & 26209 &   $554688$ & $2736$ & $12$  \\
$\text{UT}(d)$-bitflip            & unrotated toric       & bitflip        & Matching-Z & $2d^2$ &  $2d^2$ & $d^2$   & 2 \\
$\text{UT}(d_1,d_2)$-bitflip            & asymmetric toric       & bitflip        & Matching-Z & $2d_1d_2$ &  $2d_1d_2$ & $d_1d_2$   & 2 \\
$\text{RT}(d)$-bitflip            & rotated toric        & bitflip        & Matching-Z & $d^2$ & $d^2$ & $d^2/2$ & 2 \\
$\text{RS}(6)$-circuit            & rotated surface  & circuit         & Matching-Z & 554 & 16556  & 126 & 1
\\
$\text{RS}(12)$-circuit            & rotated surface  & circuit         & Matching-Z & 4778 &  134426 & 936 & 1 
\\
$\text{RS}(18)$-circuit            & rotated surface  & circuit         & Matching-Z & 16562 & 455984 & 3078 & 1 \\
\bottomrule
\end{tabular}
\caption{
QEC system examples.
All entries are memory experiments, except $Y_1$ which measures a single logical-$Y$ operator on one of the 12 logical qubits.
We use -Z to denote CSS decoding of $(H_Z, A_Z)$, while -XYZ denotes non-CSS decoding.
$M$ and $K$ denote the row counts of the check and action matrices respectively. 
In the expanded (compressed) representation, both matrices have $N$ ($\tilde N$) columns.
}
\label{tab:example-set}
\end{table}

\clearpage
\section{Technique I: Fitting to a failure-spectrum ansatz}
\label{sec:ansatz-approach}

In this section, we observe that the failure spectrum $f(w)$, which fully characterizes the performance of a QEC system, appears to have a very consistent behavior across different QEC systems.
We introduce a simple model for quantum error correction in \sec{qec-system-model}, and derive a closed form expression for the failure spectrum for that model.
Then in \sec{model-ansatz}, inspired by this closed form expression, we introduce an ansatz for the failure spectrum for any QEC system, where the ansatz has a number of free parameters that can be fit to data.

\begin{figure}[ht!]
    \centering
    \includegraphics[width=0.47\linewidth]{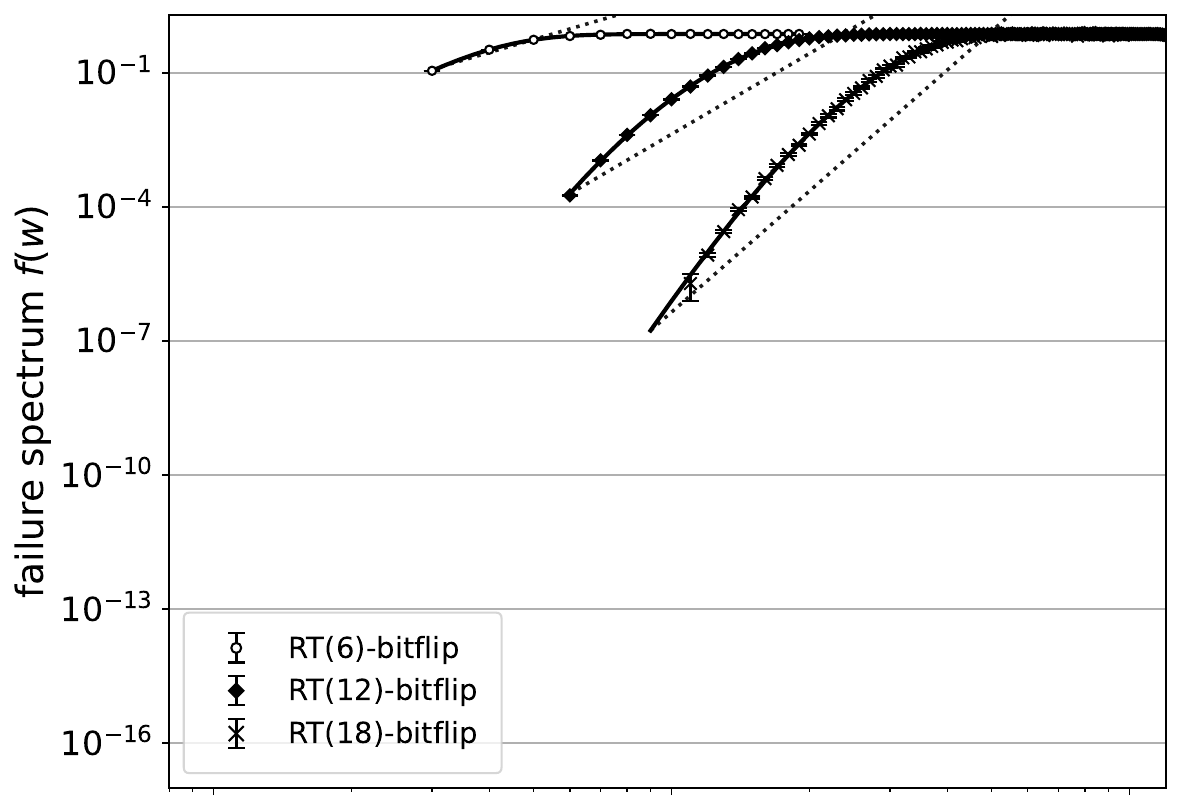}
    \includegraphics[width=0.47\linewidth]{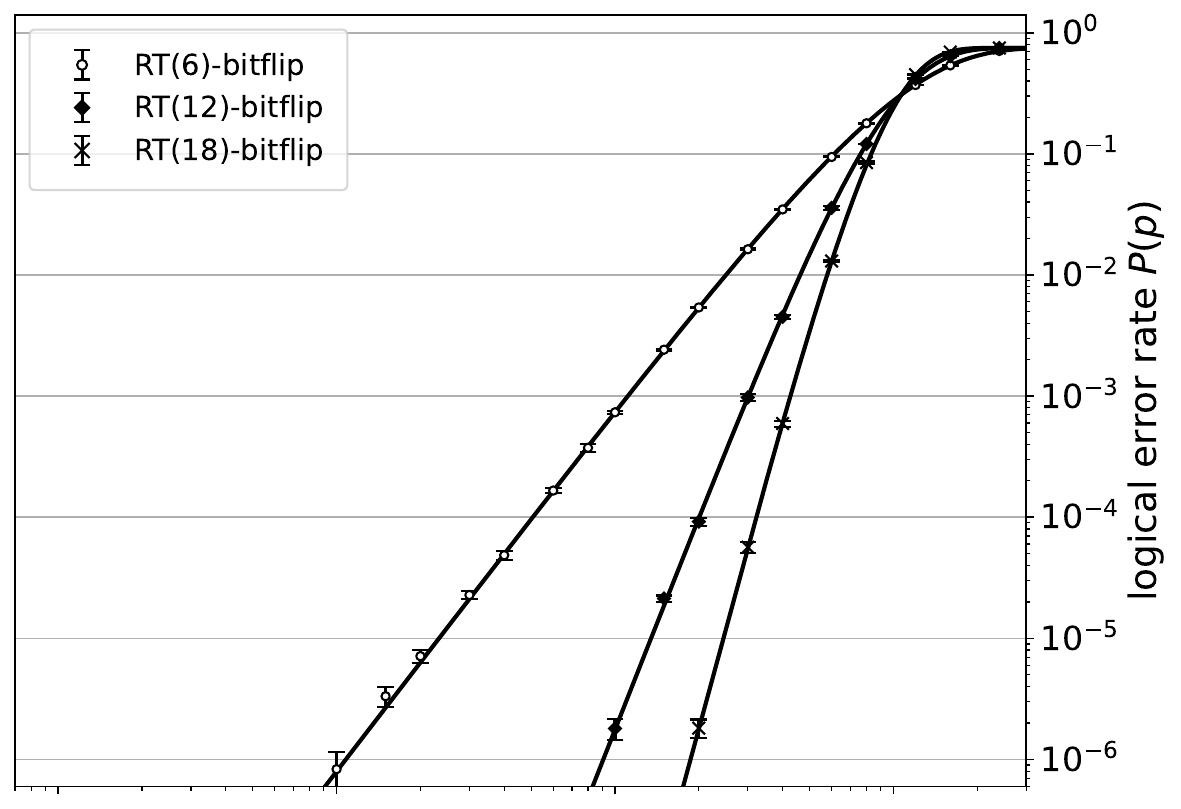}
    \includegraphics[width=0.47\linewidth]{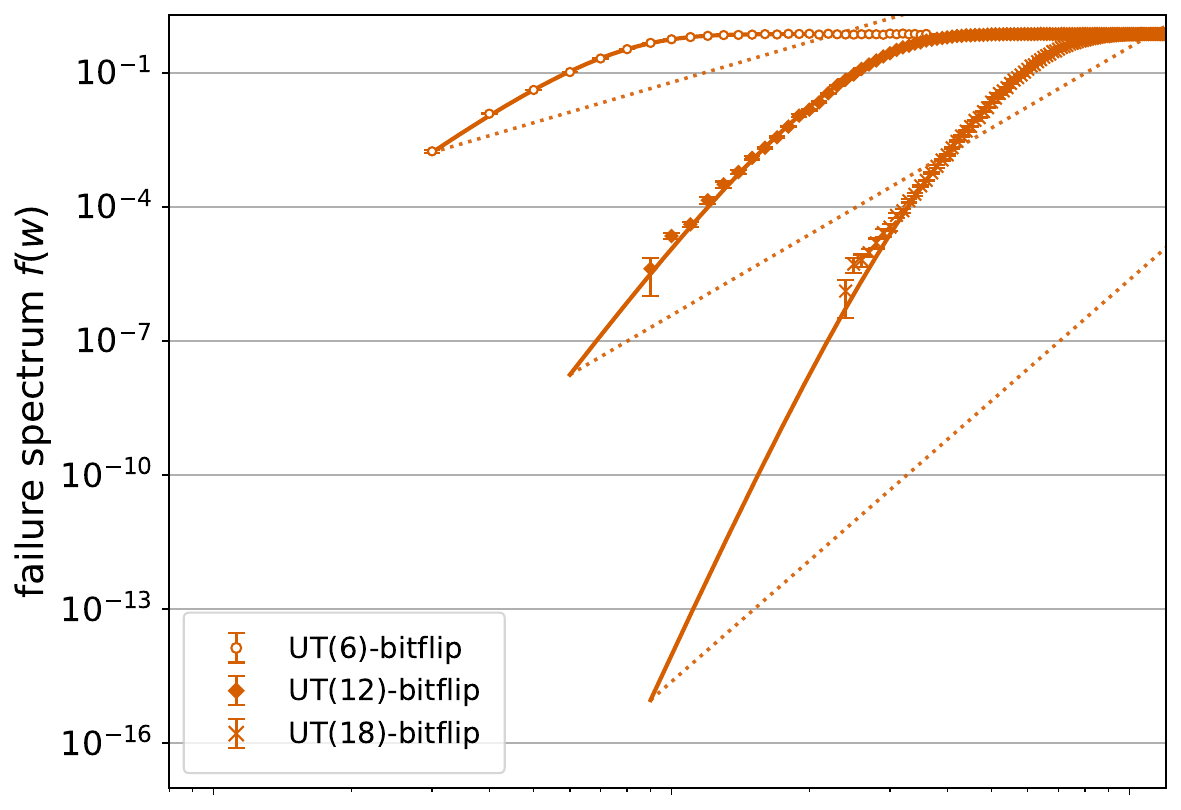}
    \includegraphics[width=0.47\linewidth]{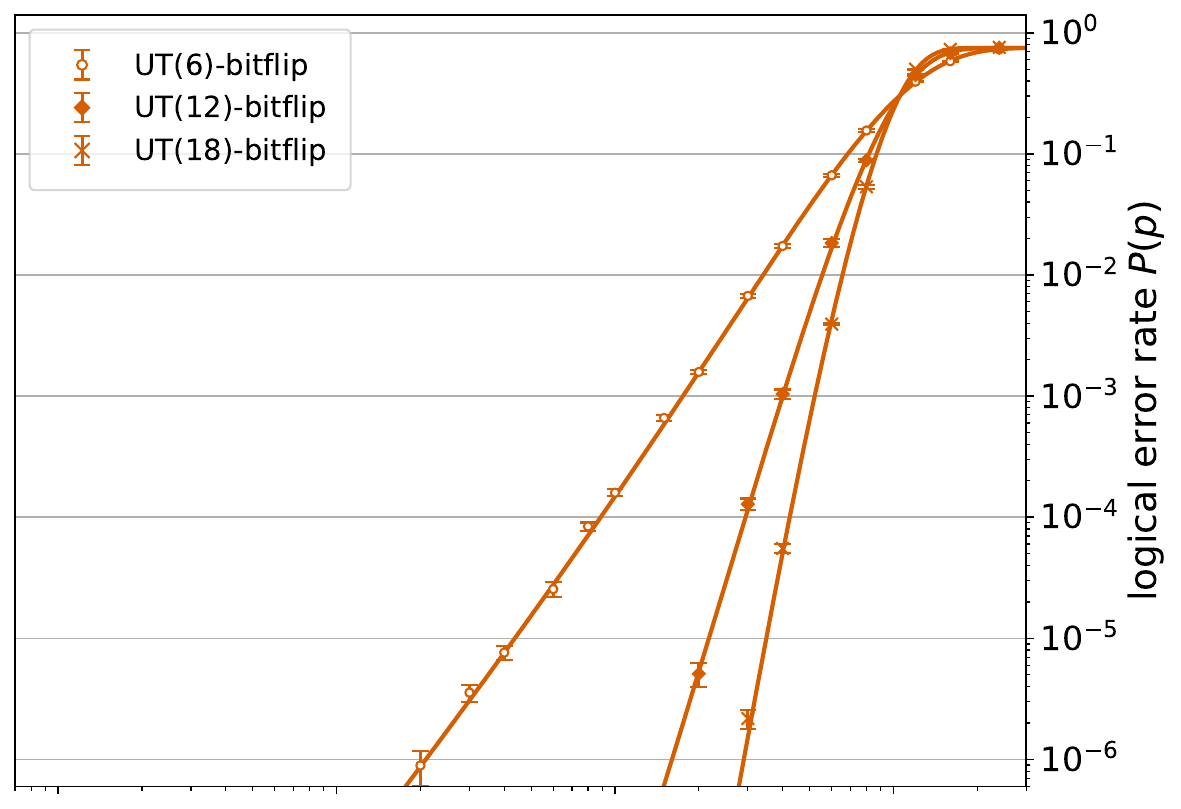}
    \includegraphics[width=0.47\linewidth]
    {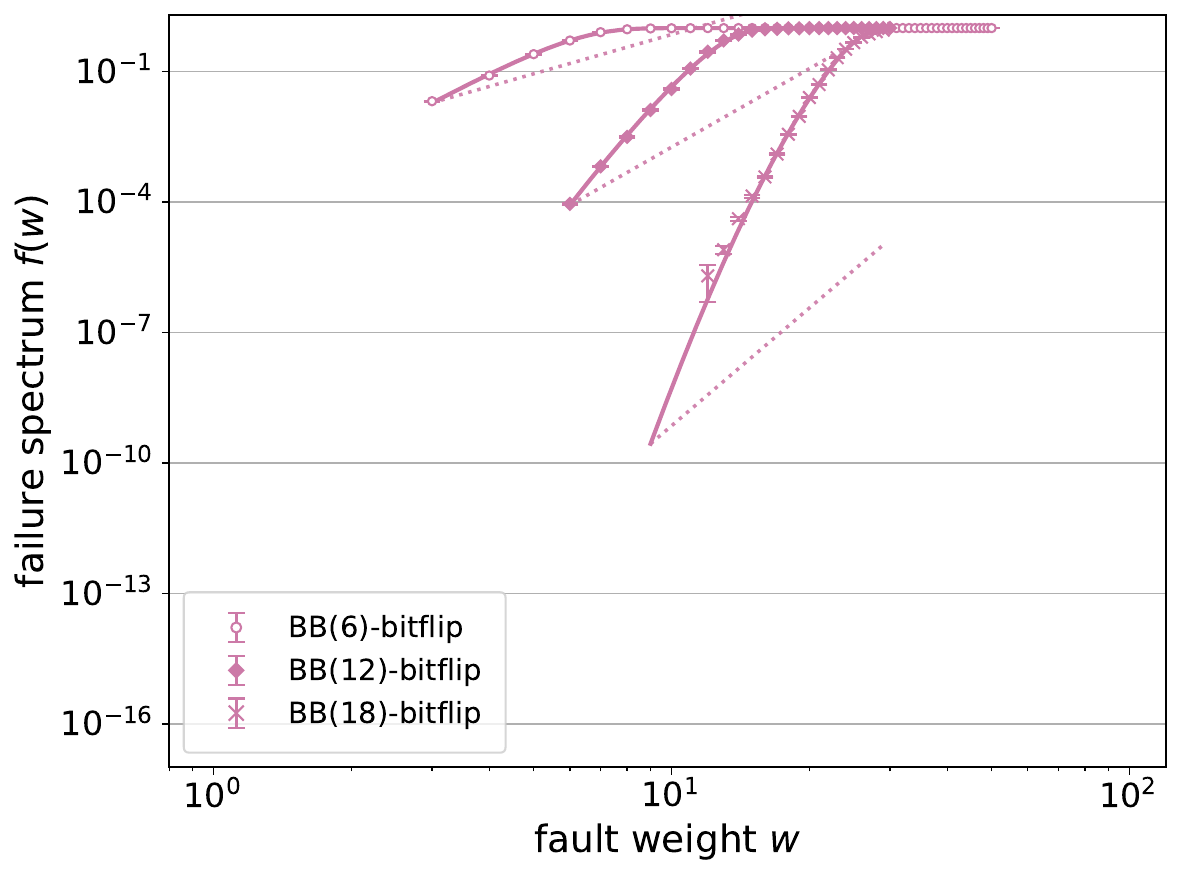}
    \includegraphics[width=0.47\linewidth]
    {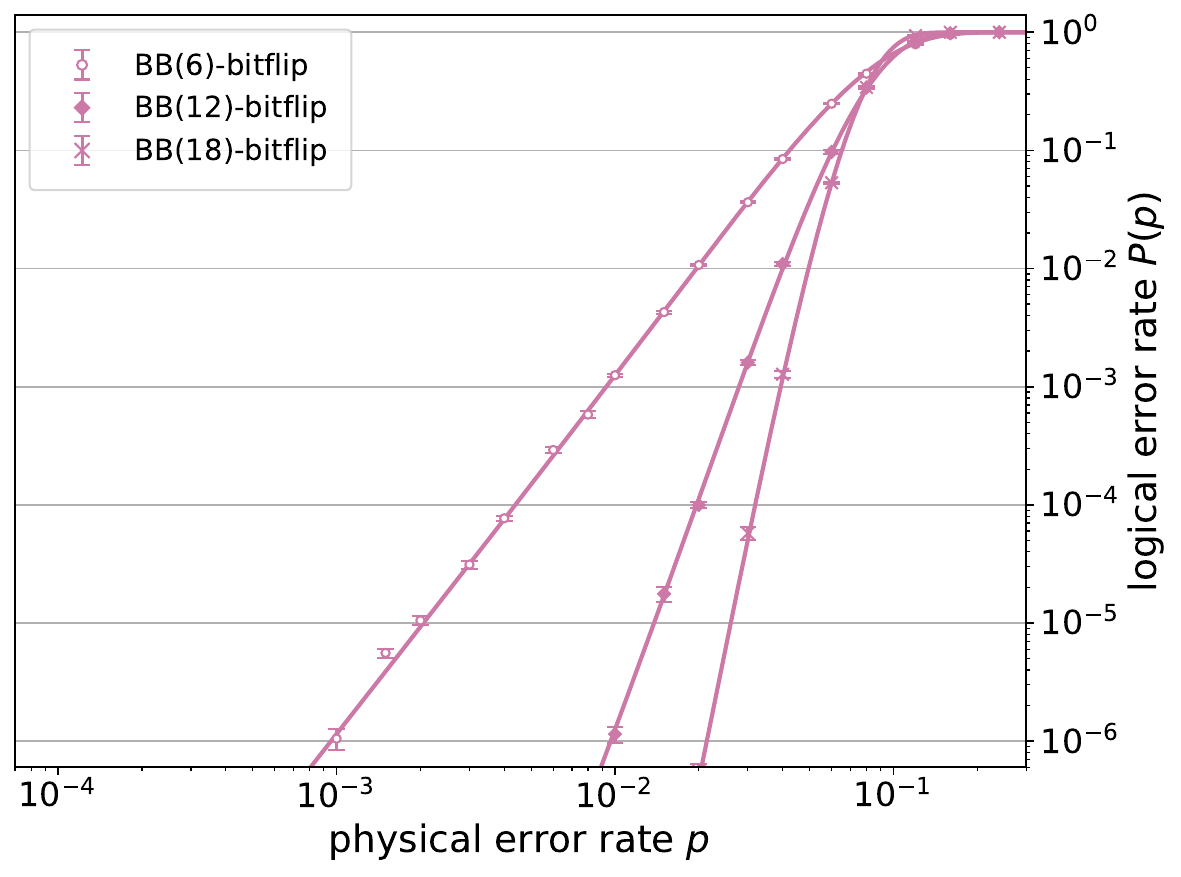}
    \caption{
    Failure spectrum and logical error rate curves for example QEC systems under bit-flip noise from \sec{system-examples}.
    Solid lines in the left subfigures show fits of the ansatz $f^{(5)}_\text{ansatz}(w)$ from \sec{model-ansatz}, and we include dashed lines showing $f_0\cdot (w/w_0)^{w_0}$ for reference (with $f_0$ set by the onset of $f^{(5)}_\text{ansatz}(w_0)$).
    Solid lines in the right subfigures show the corresponding logical error curve $P^{(5)}_\text{ansatz}(p):=\sum_w f^{(5)}_\text{ansatz}(w) \binom{N}{w} p^{w}(1-p)^{N-w}$.
    Subfigures in the same column share a common $x$-axis.
    }
    \label{fig:QEC-system-examples-bitflip}
\end{figure}

\begin{figure}[ht!]
    \centering
    \includegraphics[width=0.47\linewidth]{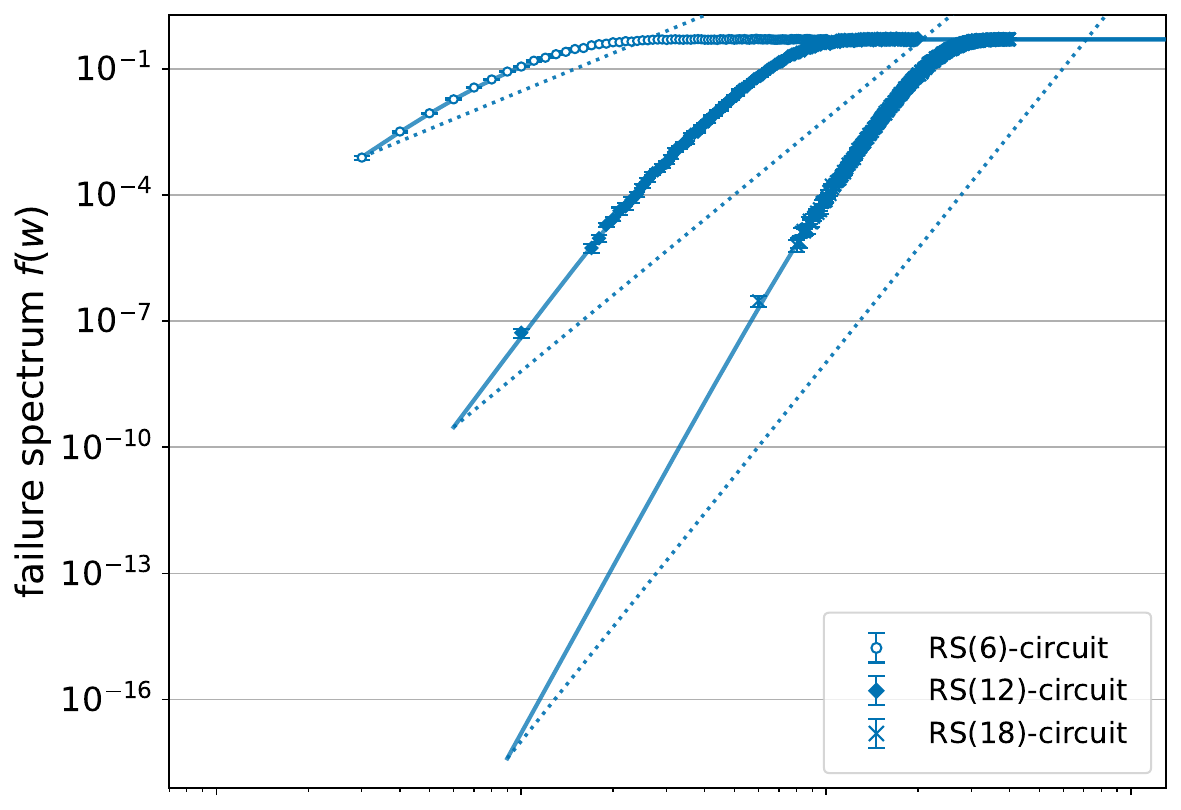}
    \includegraphics[width=0.47\linewidth]{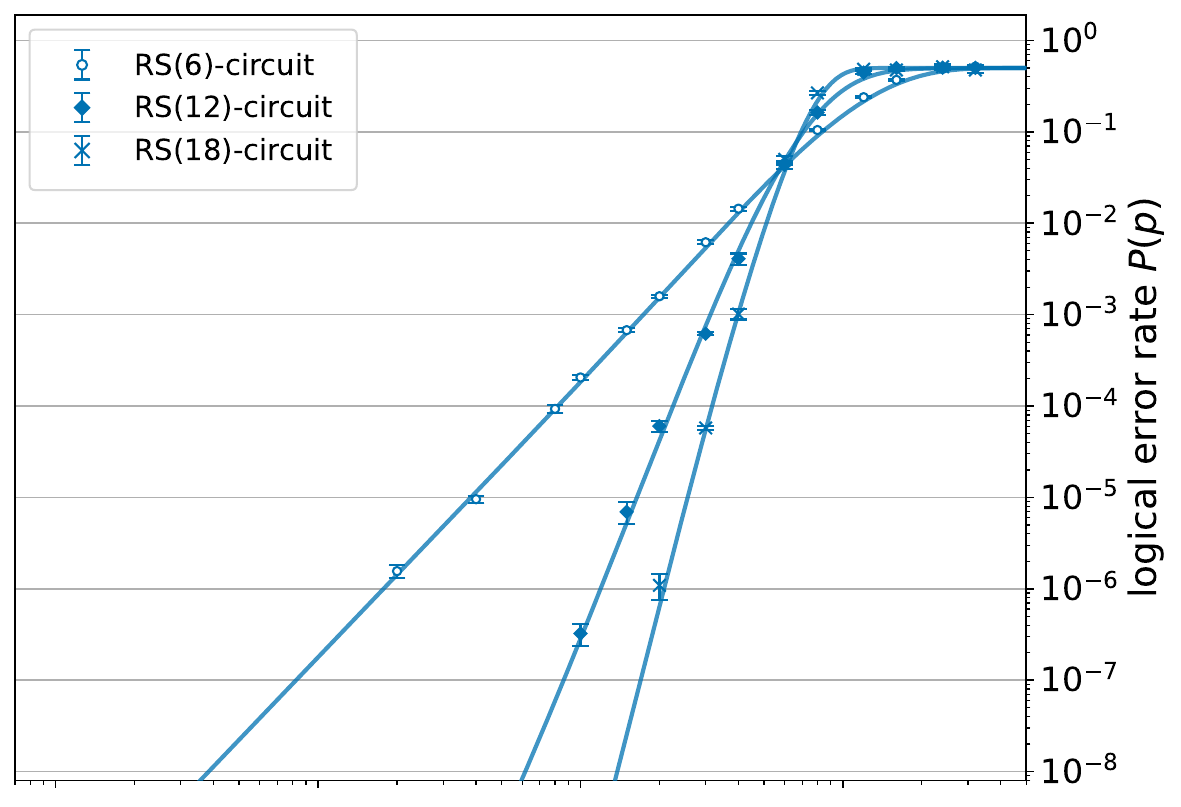}
    \includegraphics[width=0.47\linewidth]{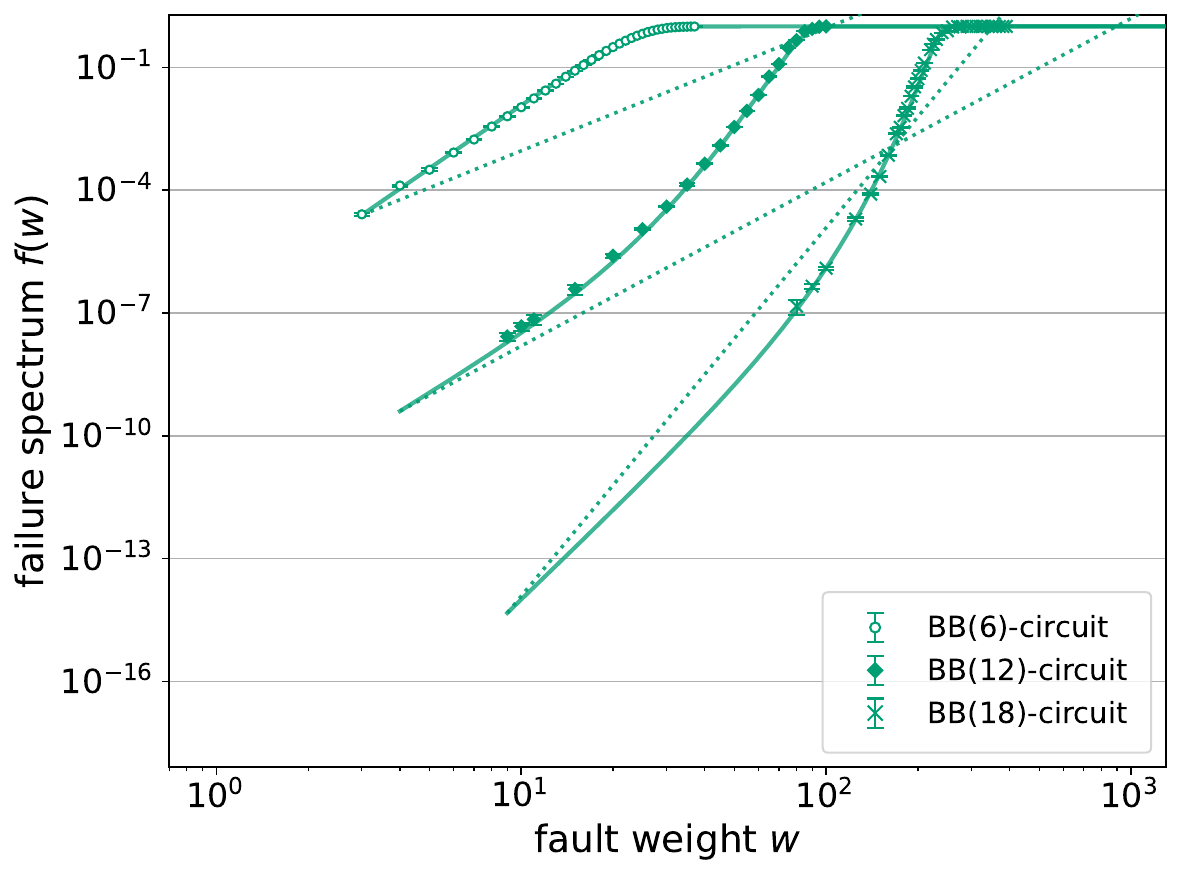}
    \includegraphics[width=0.47\linewidth]{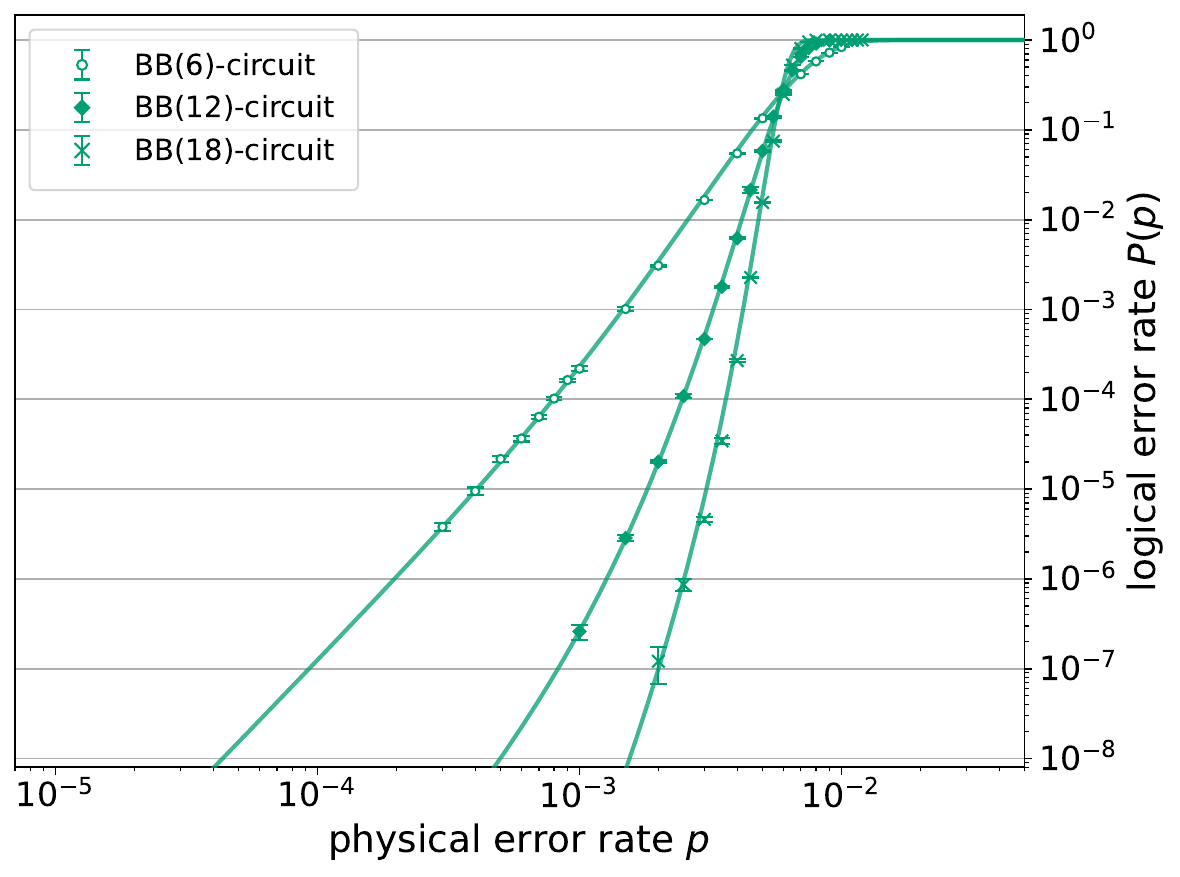}
    \caption{
    Failure spectrum and logical error rate curves for example QEC systems under circuit noise.
    Labeling conventions are as in \fig{QEC-system-examples-bitflip}.
    }
    \label{fig:QEC-system-examples-circuit}
\end{figure}

We summarize the following empirical observations about the failure spectrum $f(w)$ across our examples: 
\begin{itemize} 
    \item[(i)] $f(w)$ increases monotonically with $w$. 
    \item[(ii)] $f(w)$ asymptotes to a value $a = 1 - \frac{1}{2^{K}}$. 
    \item[(iii)] $f(w)$ varies smoothly with $w$. 
\end{itemize}
Some of these features can be understood intuitively.
(i) Adding more errors generally makes decoding harder, leading to a higher failure rate.
(ii) At large $w$, errors start to behave like random bitstrings, which succeed with probability $\frac{1}{2^{K}}$.
As was noted in \fig{failure-spectrum-transform}, for very large values of $w$ comparable to $N$, the value of $f(w)$ may eventually drop, but we ignore this since it has negligible effect on logical error rates in the regime of interest.
(iii) While less obvious, the apparent smoothness of $f(w)$ may arise from a counting argument.
Moreover, the shape of the failure spectrum curves for each system looks very similar. 
This encourages us in this section to look for some universal behavior to formulate an ansatz for $f(w)$.

\subsection{Min-fail enclosure model of QEC systems}
\label{sec:qec-system-model}

Here we propose a simple model of error correction which gives insight into the functional form that can be expected for the fail fraction $f(w)$.

We say a bitstring $e'$ \emph{encloses} $e$ if $e'_i = 1$ for all $i$ where $e_i = 1$, denoted $e \subset e'$.
Recall that $\mathcal{F}(w_0)$ denotes the set of all min-weight failing bitstrings.  
Our model is based on two simplifying assumptions.
First, we assume that a bitstring $b$ fails if and only if it encloses some element of $\mathcal{F}(w_0)$. 
Second, we assume that the bitstrings in $\mathcal{F}(w_0)$ are independent (in a sense that will be made clear below). 
In what follows, we estimate the failure spectrum that results from these assumptions.

First consider the probability $\mathbb{P}(F \subset B)$ that a randomly selected weight-$w$ bitstring $B$ encloses a particular min-weight failing bitstring $F \in \mathcal{F}(w_0)$:
\begin{eqnarray}
\mathbb{P}(F \subset B) = \binom{N-w_0}{w-w_0}/\binom{N}{w},
\end{eqnarray}
since there are $\binom{N}{w}$ weight-$w$ bitstrings to choose from, and $\binom{N-w_0}{w-w_0}$ of them enclose $F$.

Next, consider a pair of bitstrings $F,F' \in\mathcal{F}(w_0)$.
The probability that a randomly selected weight-$w$ bitstring $B$ encloses \emph{either} $F$ or $F'$ is:
\begin{eqnarray}
\mathbb{P}(F \subset B ~\text{or}~ F' \subset B) \approx 1- \left(1-\mathbb{P}(F \subset B) \right) \cdot \left(1-\mathbb{P}(F' \subset B) \right) = 1 - \left(1- \frac{\binom{N-w_0}{w-w_0}}{\binom{N}{w}} \right)^2, \nonumber
\end{eqnarray} 
where, in our model, we treat the events ${F \subset B}$ and ${F' \subset B}$ as independent. 
Extending this analysis to the entire set $\mathcal{F}(w_0)$, and assuming independence of the enclosure events for all $F \in \mathcal{F}(w_0)$, the failure spectrum of the model is
\begin{eqnarray}
f_\text{model}(w) &=& 1 - \left(1-\frac{ \binom{N-w_0}{w-w_0} }{\binom{N}{w}} \right)^{|\mathcal{F}(w_0)|}, \label{eq:model-line1}\\
&\approx& 1 - \exp\left(-  |\mathcal{F}(w_0)| \frac{ \binom{N-w_0}{w-w_0} }{\binom{N}{w}} \right), \label{eq:model-line2}\\ 
&=& 1 - \exp\left(-  f(w_0) \binom{w}{w_0} \right), \label{eq:model-line3}\\
&\approx& 1 - \exp\left(-  f(w_0) \left( \frac{w}{w_0} \right)^{\!\! w_0} \right). \label{eq:model-line4}
\end{eqnarray}
Going from \eq{model-line1} to \eq{model-line2} we have made use of the general bound 
$e^{-np - n p^2} \leq (1-p)^n \leq e^{-np}$  for $p \in [0, 1]$.
Going from \eq{model-line2} to \eq{model-line3} we have made use of the equality $\binom{N-w_0}{w-w_0} /\binom{N}{w} = \binom{w}{w_0} / \binom{N}{w_0}$ and that $f(w_0) = |\mathcal{F}(w_0)|/\binom{N}{w_0}$ by definition.
Going from \eq{model-line3} to \eq{model-line4} we replace the binomial coefficient $\binom{w}{w_0}$ in~\eq{model-line3} with a power law $\left(\frac{w}{w_0}\right)^{w_0}$, which can be easier to work with for large values of $w$ and $w_0$. 
This replacement is motivated by the asymptotic expansion (from~\cite{wikipedia_binomial}, citing~\cite{abramowitz_stegun_1965}), which states that for large $w$ with $w_0/w \to 0$,
\begin{eqnarray}
\binom{w}{w_0} \sim \left( \frac{w e}{w_0} \right)^{w_0} (2 \pi w_0)^{-1/2} \exp\left(- \frac{w_0^2}{2w}(1 + o(1)) \right).\label{eq:binomial-asymptotic}
\end{eqnarray}
The leading $w$-dependence is thus $\sim w^{w_0}$. 
We fix the prefactor by using the form $\left(w / w_0\right)^{w_0}$ to ensure the expected behavior of the failure fraction at $w = w_0$.

While simplistic, the expression in~\eq{model-line4} captures several essential features of realistic QEC failure spectra. 
It takes the correct value at $w = w_0$ (up to the approximation $1 - e^{-x} \approx x$ for small $x$) and increases smoothly and monotonically with $w$. 
Nonetheless, it also exhibits several oversimplifications. 
The large-$w$ asymptote is 1, exceeding the expected saturation value $a$. 
Furthermore, although we observe that the $w_0$-power behavior aligns approximately with many QEC systems, it fails to capture important deviations present in others.
We adopt this model as a starting point and address its oversimplifications in the next subsection to develop a flexible ansatz for the failure spectrum.

\subsection{Failure-spectrum ansatzes}
\label{sec:model-ansatz}

The closed-form expression for the model failure spectrum in~\eq{model-line4} provides a natural functional form. 
By introducing fit parameters, we obtain the following ansatz versions that can be tuned to accurately capture the failure spectra of specific QEC systems:
\begin{eqnarray}
f_\text{ansatz}^{(2)}(w) &=& a \left[ 1 - \exp\left(- \frac{1}{a} f_0  \left(\frac{w}{w_0}\right)^{w_0}  \right) \right],\nonumber\\
f_\text{ansatz}^{(3)}(w) &=& a \left[ 1 - \exp\left(- \frac{1}{a} f_0  \left(\frac{w}{w_0}\right)^\gamma  \right) \right],\nonumber \\
f_\text{ansatz}^{(5)}(w) &=& a \left[ 1 - \exp\left(- \frac{1}{a} f_0  \left(\frac{w}{w_0}\right)^{\gamma_1} \left(\frac{1+(w/w_c)^c}{1+(w_0/w_c)^c} \right)^{\frac{\gamma_2 - \gamma_1}{c}}  \right) \right], ~~\text{with}~c=2.\label{eq:model-ansatz}
\end{eqnarray}
Each ansatz $f_\text{ansatz}^{(l)}(w)$ assumes the failure spectrum is zero for $w < w_0$, with the superscript $l$ indicating the number of parameters. 
The parameters $w_0 \in \mathbb{Z}_{>0}$ and $f_0 \in (0,1)$ denote the onset weight and the corresponding failure fraction; either or both may be treated as fit parameters if not known a priori.

\begin{figure}[ht!]
    \centering
    (a)\includegraphics[width=0.45\linewidth]{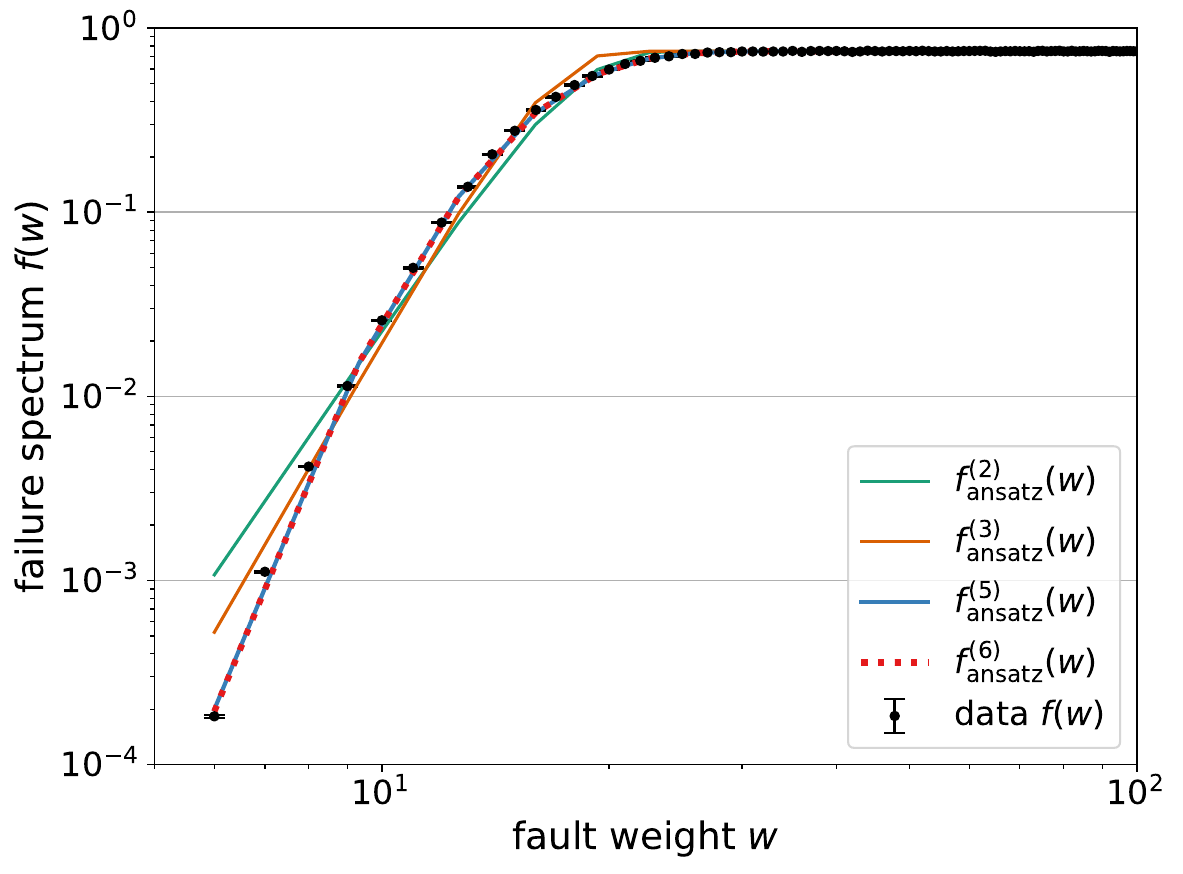}
    (b)\includegraphics[width=0.45\linewidth]{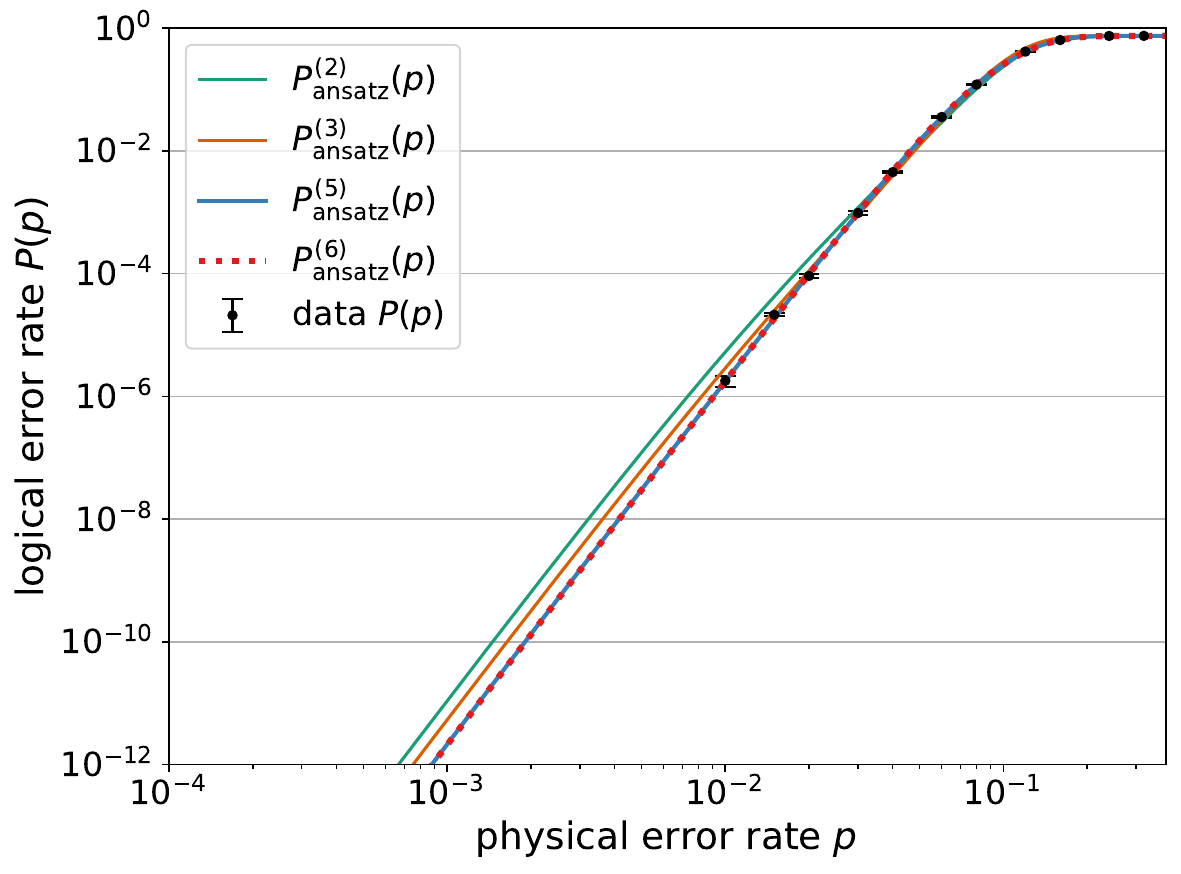}
    (c)\includegraphics[width=0.95\linewidth]{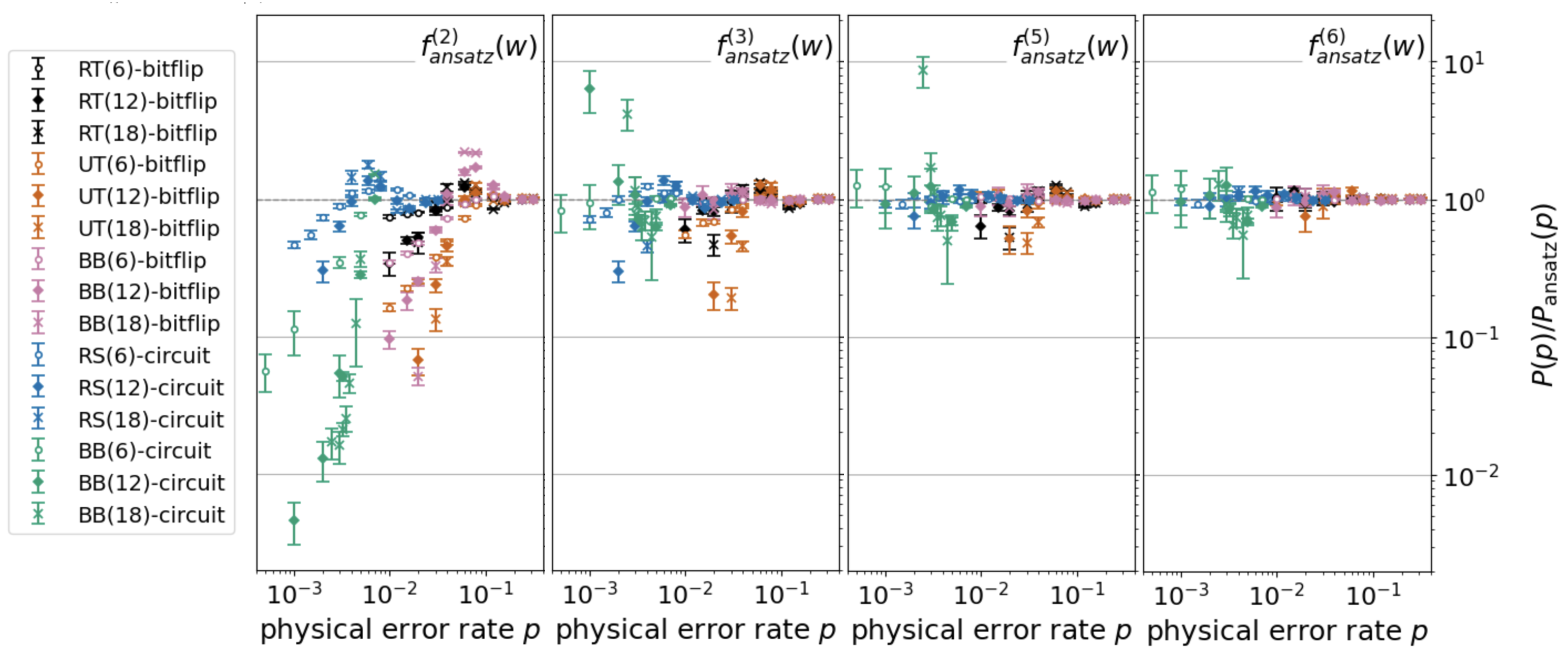}
    \caption{
    (a) Comparison of four ansatz models $f^{(l)}_\text{ansatz}(w)$ fitted to the failure spectrum of the $\mathrm{RT}(12)$ surface code under bitflip noise, with the corresponding inferred logical error rates $P^{(l)}_\text{ansatz}(p)$ shown in (b).
    The fits for $l=5$ and $l=6$ both match the data very closely.
    (c) Ratio of empirical logical error rates to those predicted by each ansatz across all QEC systems shown in \fig{QEC-system-examples-bitflip} and \fig{QEC-system-examples-circuit}. 
    For each system, we fix $w_0$ and fit the remaining $l-1$ parameters of $f^{(l)}_\text{ansatz}(w)$ to failure spectrum data. 
    Agreement improves systematically with increasing ansatz complexity.
    }
    \label{fig:data-ansatz-logical-error-ratios}
\end{figure}

The baseline ansatz, $f_\text{ansatz}^{(2)}(w)$, corresponds to the model in~\eq{model-line4}, modified to include the parameter $a = 1 -1/2^K$, reflecting the expectation that the success probability for random high-weight bitstrings saturates at $1/2^K$ rather than vanishing, as implied by~\eq{model-line4}. 
The exponential argument is scaled by $1/a$ to preserve the behavior of the spectrum in the regime where values are much smaller than one.
In principle, replacing the power-law term $(w / w_0)^{w_0}$ in $f_\text{ansatz}^{(2)}(w)$ with a general function would allow the resulting ansatz to represent any failure spectrum exactly.
However, our aim is to achieve accurate fits for typical QEC systems using as few parameters as possible.
The three-parameter form $f_\text{ansatz}^{(3)}(w)$ extends the baseline ansatz by promoting the fixed exponent $w_0$ to a free parameter $\gamma$.
A further generalization, $f_\text{ansatz}^{(5)}(w)$, replaces the single power law with a smooth interpolation between $(w / w_0)^{\gamma_1}$ at small $w$ and $\sim w^{\gamma_2}$ at large $w$, with $w_c$ controlling the crossover.
If needed, additional generalizations can be introduced to better capture specific system behavior.
For example, our choice of $c=2$ sets how sharply the transition from $w^{\gamma_1}$ to $w^{\gamma_2}$ scaling occurs\footnote{We use $c = 2$ as it closely approximates the binomial for $2 \leq w_0 \leq 20$: $\binom{w}{w_0} \approx (w/w_0)^{\gamma} \left[(1 + (w/w_0)^2)/2\right]^{(w_0 - \gamma)/2}$, with $\gamma = w_0 + (2w_0 - \ln(2\pi w_0))/\ln 2$. 
This reproduces the exact value at $w = w_0$ and the correct large-$w$ prefactor and scaling of \eq{binomial-asymptotic}.
}; $c$ can be promoted to another free parameter to form $f_\text{ansatz}^{(6)}(w)$. 
We compare these four ansatz forms in \fig{data-ansatz-logical-error-ratios}.

As shown in \fig{data-ansatz-logical-error-ratios}, the baseline ansatz $f_\text{ansatz}^{(2)}(w)$ yields $P_\text{ansatz}^{(2)}(p)$ that agrees with some data but deviates significantly elsewhere, reflecting the departure of the min-fail inclusion model’s $w^{w_0}$ scaling in \fig{QEC-system-examples-bitflip} and \fig{QEC-system-examples-circuit}.
The more flexible forms $f_\text{ansatz}^{(3)}(w)$, $f_\text{ansatz}^{(5)}(w)$ and $f_\text{ansatz}^{(6)}(w)$ fit the data well empirically making them useful in practice.
We highlight two limitations of the min-fail enclosure model that likely underlie its shortcomings.
First, the $w^{w_0}$ scaling approximates the combinatorial factor $\binom{w}{w_0}$ in \eq{model-line4}, which rises more steeply at small $w$ before asymptotically approaching $\sim w^{w_0}$ for $w \gg w_0$.
Second, the model assumes enclosures of irreducible failures of weight $w_0$ dominate for all $w$, whereas a sufficiently large number of irreducible failure modes of weight $w_0' > w_0$ could lead to a crossover from $\sim w^{w_0}$ to $\sim w^{w_0'}$ beyond some $w_c$.
These effects are discussed further in \app{beyond-monomial}.

The ansatz family can be extended in several natural directions.
One option is to generalize the progression from the three-parameter to the five-parameter form in \eq{model-ansatz} by introducing a sequence of log–log slopes $\gamma_1,\gamma_2,\gamma_3,\dots$ with transition points at $w_1,w_2,\dots$.
Another option (which we use later in \fig{final-results-decoder-comparison} to model BB(6)-circuit decoded using the bplsd decoder) is to adopt a hybrid model in which the low-weight region $w\approx w_0$, where finite-size effects may disrupt smooth scaling, is captured by explicit terms in $f(w)$, while the tail is described by a smooth ansatz.

\paragraph{Fitting considerations.}
In \app{LMmethod} and \app{fit-strategies-all-p} we use analytic and numerical approaches to better understand strategies for fitting a failure spectrum ansatz to data in different settings.
Based on this analysis, we propose the following general guidelines for sample allocation when the goal is to characterize QEC behavior over a broad range of error rates:
\begin{enumerate}
    \item Distribute samples across weights to yield a roughly uniform number of observed failures per point.
    In particular, avoid collecting vastly more failures at high weights than at low weights, which can bias the fit toward the former regime.
    \item Choose a sufficiently wide range of weights such that the highest-weight data remain distinguishable from $a$, while the lowest-weight data accumulate enough failures to ensure low uncertainty.
    Avoid densely sampling beyond the point where convergence on $a$ occurs.
    \item Select weights with approximately uniform spacing on a logarithmic scale to prevent over-fitting in the high-weight region.
    \item Fit to $P(p)$ as well as $f(w)$ data, particularly when the onset weight $w_0$ is unknown and treated as a fit parameter.
\end{enumerate}

\clearpage
\section{Technique II: Computing or bounding min-weight properties}
\label{sec:computing-min-weight-properties}

In this section, we describe computational methods for characterizing the min-weight features of a QEC system with check and action matrices $H \in \mathbb{F}_2^{M \times N}, A \in \mathbb{F}_2^{K \times N}$.
In \sec{distance}, we review existing techniques to compute the distance $D = \min\{\,|l| : l \in \ker{H}\setminus\ker{A}\,\}$, and to upper-bound it in cases where exact computation is difficult. 
The distance sets the optimal onset $\lceil D/2 \rceil$ which is achieved by a min-weight decoder.
In \sec{logicals}, we review and improve upon existing tecnhiques to find min-weight logicals $\mathcal{L}(D) = \{\,l \in \ker{H}\setminus\ker{A} : |l|=D\,\}$.
In \sec{fails}, we provide techniques to find or estimate the smallest number of min-weight fails $|\mathcal{F}(\lceil D/2 \rceil)|_\text{min}$ that can be achieved by any decoder, which is achieved by a specific min-weight decoder\footnote{The max-class min-weight decoder outputs a correction from the largest set of logically equivalent minimum-weight solutions consistent with the syndrome.
}. 

In \tab{min-weight-properties} we show results for a number of our example QEC systems.
The numbers reported here are for $Z$-type noise for $H_Z$ and $A_Z$. 
We drop the subscripts $Z$ for brevity throughout this section.
Reported numbers are for the expanded representation, but to compute those numbers we often work in the compressed representation and map back.

\begin{table}[h]
    \centering
    \begin{tabular}{lcccccc}
        \toprule
        \textbf{QEC System} & \textbf{Distance} & \textbf{Compressed Logicals} & \textbf{Logicals} & \textbf{Restrictions} & \textbf{Fails}  \\
         & $D$ & $|\tilde{\mathcal{L}}(D)|$ & $|\mathcal{L}(D)|$ & $|\mathcal{L}(D)|_{D/2}|$ & $|\mathcal{F}(D/2)|$ \\
        \midrule
        BB(6)-bitflip      & $6$      & $84$ & $84$                      & $1392$                & -- \\
        BB(12)-bitflip     & $12$     & $1884$ & $1884$                    & $1580496$             & -- \\
        BB(18)-bitflip     & $18$     & $\ge 336$ & $336$                 & $16334304$            & -- \\
        UT$(d)$-bitflip    & $d$      & $2d$ & $2d$                      & $2d\,\binom{d}{d/2}$ & $d\,\binom{d}{d/2}$ \\
        RT$(d)$-bitflip    & $d$      & $d + d\,\binom{d}{d/2}$ & $d + d\,\binom{d}{d/2}$  & --                    & -- \\
        BB(6)-circuit      & $6$      & $\ge 1524$ & $6.01\times 10^{12}$  & $8.12\times 10^{8}$   & $3.83\times 10^{8}$ \\
        BB(12)-circuit     & $\le 10$ &  $\ge3456$ & $1.19\times 10^{17}$  & $6.55\times 10^{12}$  & $9.54\times 10^{11}$ \\
        BB(18)-circuit     & $\le 18$ & $\ge 1981650$ & $5.99\times 10^{34}$  & $5.7\times 10^{23}$ & $2.3\times 10^{23}$ \\
        RS(6)-circuit      & $6$      & $\ge 3562$ & $2.42 \times 10^{12}$ & $9.78 \times 10^{8}$  & $2.13\times 10^{8}$ \\
        RS(12)-circuit     & $12$     & $\ge 4.02\times 10^{7}$ & $3.11\times 10^{24}$                        & $2.6 \times 10^{18}$                   & $3.1 \times 10^{17}$ \\
        RS(18)-circuit     & $18$     & $7.9\times 10^{11*}$ & $9.1\times 10^{37*}$                        & $1.7\times 10^{28*}$                    & $1.0\times 10^{27*}$ \\         
        \bottomrule
    \end{tabular}
    \caption{
    Min-weight properties for QEC systems ($D$ even for all).
    $\mathcal{L}(D)$: set of all minimum-weight logical bitstrings.
    $\mathcal{L}(D)|_{D/2}$: weight-$D/2$ restrictions of $\mathcal{L}(D)$.
    $\mathcal{F}(D/2) \subset \mathcal{L}(D)|_{D/2}$: minimum-weight failing bitstrings for an optimal min-weight decoder.
    The upper bound on $D$ is the weight of the lowest-weight logical operator found; the lower bound on $|\mathcal{L}(D)|$ is the number of unique logical operators found of that weight.
    The quantities $|\tilde{\mathcal{L}}(D)|$, $|\mathcal{L}(D)|_{D/2}|$ and $|\mathcal{F}(D/2)|$ are computed assuming the found subset of $\mathcal{L}(D)$ is complete.
    We estimate the properties of RS(18)-circuit by extrapolation (\app{min-weight-extensions}) since $\tilde{\mathcal{L}}(D)$ is prohibitively large.
    }
    \label{tab:min-weight-properties}
\end{table}

\subsection{Distance}
\label{sec:distance}

Here we describe approaches to compute or upper bound the distance $D$.

\textbf{Exact distance computation: }
We use essentially the same approach as is used to find distances exactly in Ref.~\cite{bravyi2024high} but noting that the integer-programming decoder from \cite{landahl2011fault} can be replaced by any general-purpose min-weight decoder.
First construct an extended check matrix $H^{(i)} \in \mathbb{F}_2^{(M+1) \times N}$ for each $i \in {1, \dots, K}$ by appending the $i$th row of $A$ to $H$.
Set the syndrome $\sigma^{(i)} \in \mathbb{F}_2^{M+1}$ such that $\sigma^{(i)}_j = 1$ only for $j = M + 1$ and use the decoder to find a min-weight correction $F^{(i)} \in \mathbb{F}_2^N$, which represents a logical bitstring for $H$ that is non-trivial for the $i$th row of $A$.
The distance is then $D = \min_{i} |F^{(i)}|$ since any non-trivial logical bitstring must be non-trivial for at least one row of $A$.

\textbf{Upper-bounding the distance: }
We review the approach in \cite{bravyi2024high}.
First, fix integer $T$, and then for each $t =1,2,\dots,T$ do the following. 
Pick a random element $h_t$ in the row space of $H$ and, independently, a random $l_t\neq0$ in the row space of $A$. 
Add them to form $g_t=h_t+l_t$.
Next, construct an extended check matrix $H^{(t)} \in \mathbb{F}_2^{(M+1) \times N}$ by appending $g_t$ to $H$.
Set the syndrome $\sigma^{(t)} \in \mathbb{F}_2^{M+1}$ such that $\sigma^{(t)}_j = 1$ only for $j = M + 1$ and use any general purpose decoder to find a correction $F^{(t)} \in \mathbb{F}_2^N$, which represents a logical bitstring for $H$ that has non-trivial action since it satisfies $HF=0$ and $AF \neq 0$.
An upper bound is then $D_\text{max} = \min_{t} |F^{(t)}|$.

\subsection{Low-weight logicals}
\label{sec:logicals}

Here we introduce approaches to find min-weight logical bitstrings $\mathcal{L}(D)$.
We also describe how to find low-weight logicals, including $\mathcal{L}(D+1)$ which can be useful for odd $D$ as we discuss in \app{min-weight-extensions}.

\textbf{Exact enumeration of all min-weight logical bitstrings: }
Here we briefly describe the approach in Appendix A.5 of Ref.~\cite{DTDpaper} to find all elements of $\mathcal{L}(D)$.
A naive approach would involve exhaustively testing all weight-$D$ faults and filtering those satisfying $HF = 0$ and $AF \neq 0$, but this quickly becomes computationally infeasible. 
For instance, testing all weight-12 $X$-operators in the gross code with bit-flip noise would require evaluating approximately $10^{17}$ cases. 
This approach can be improved upon by leveraging two key insights. 
First, minimum-weight logical bitstrings induce a connected subgraph of the decoding graph $G$, where all fault vertices in a logical bitstring’s support can be connected through paths avoiding fault vertices outside the support. 
This structural property enables an efficient search using a decision tree, where at each layer in the tree, an unsatisfied check is selected, and for each adjacent fault vertex a branch is formed (unless that fault vertex was already visited higher up in the branch).
By starting this process with any single fault, the $D$th level of the tree includes all weight-$D$ logical bitstrings that contain that fault.
This restricts the search space to $N \cdot (r - 1)^{D-1}$ (approximately $10^9$ for the gross code). 
Second, a lower-bound on the weight of a correction for a given syndrome can be used to prune the decision tree, by removing nodes that cannot lead to logical bitstrings of weight $D$, reducing the number of decision-tree nodes that need to be visited to approximately $N \cdot (r - 1)^{D/2}$ (approximately $10^6$ for the gross code).

The same approach can also be applied to find $\mathcal{L}(w)$ for any $w=D+1,D+2, \dots, D+s-1$, where $s$ is the minimum weight of any `stabilizer bitstring', i.e., $s = \min_b(|b| ~\text{s.t.}~ b \in \mathbb{F}_2^N | Hb = 0, Ab=0)$.
For larger weight logicals, there is no-longer a guarantee that they induce a connected subgraph of $G$.

\textbf{Logical operator search: }
This is essentially the same approach as we use to upper bound the distance, but we store all weight-$D$ the logical bitstrings found that have weight $D$ are stored and added to a set. 
The process can be repeated until no new distinct logical bitstrings are found (see \fig{min-weight-estimates}).
The same approach can also be applied in principle to find a subset of $\mathcal{L}(w)$ for any $w$, by simply storing weight-$w$ logicals that are found.
However, if the decoder typically returns minimum-weight corrections, it may be helpful to artificially modify the fault weights fed to the decoding algorithm to find higher-weight corrections more often.

\begin{figure}[ht!]
    \centering
    (a)\includegraphics[width=0.45\linewidth]{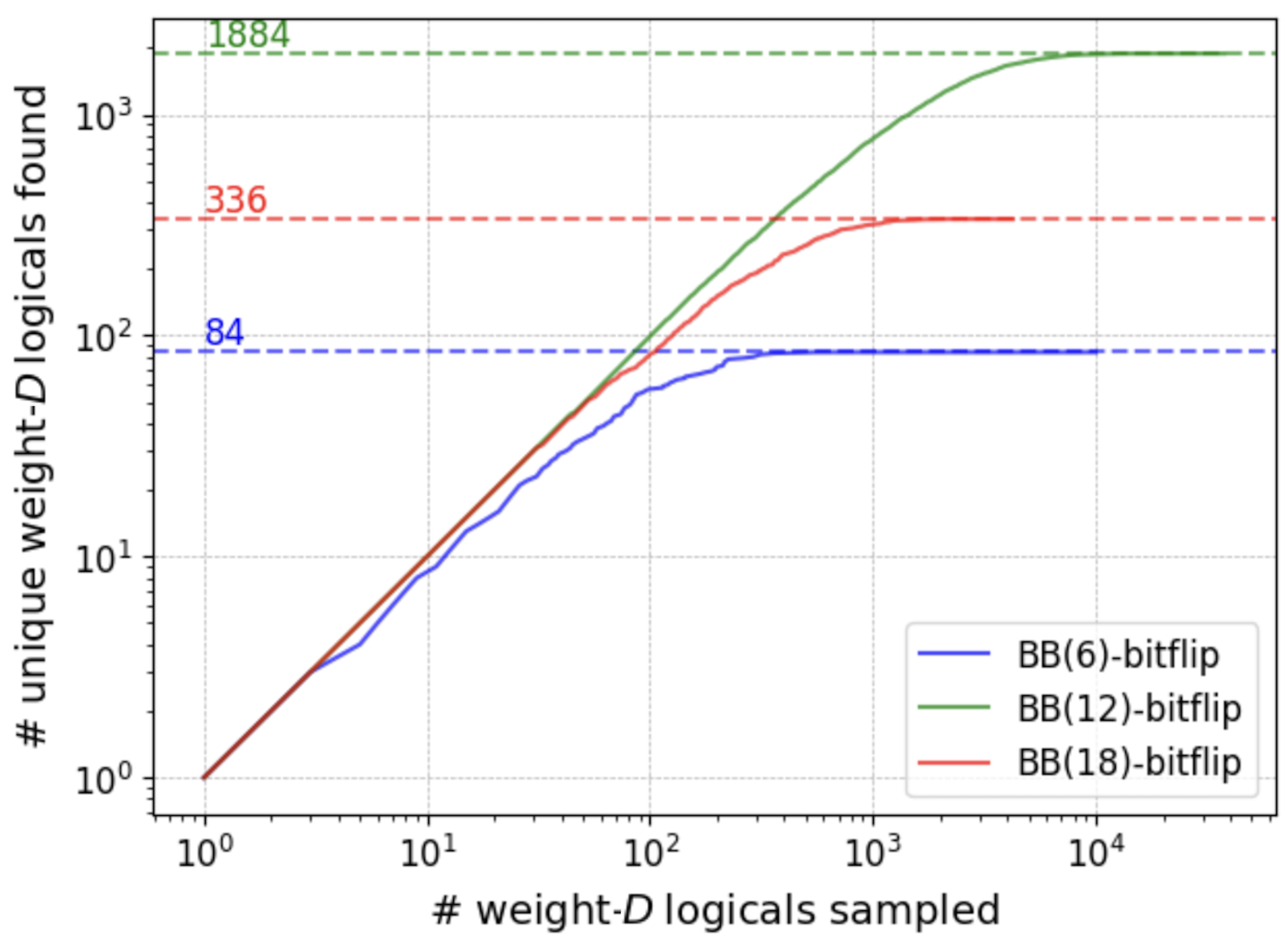}
    (b)\includegraphics[width=0.45\linewidth]{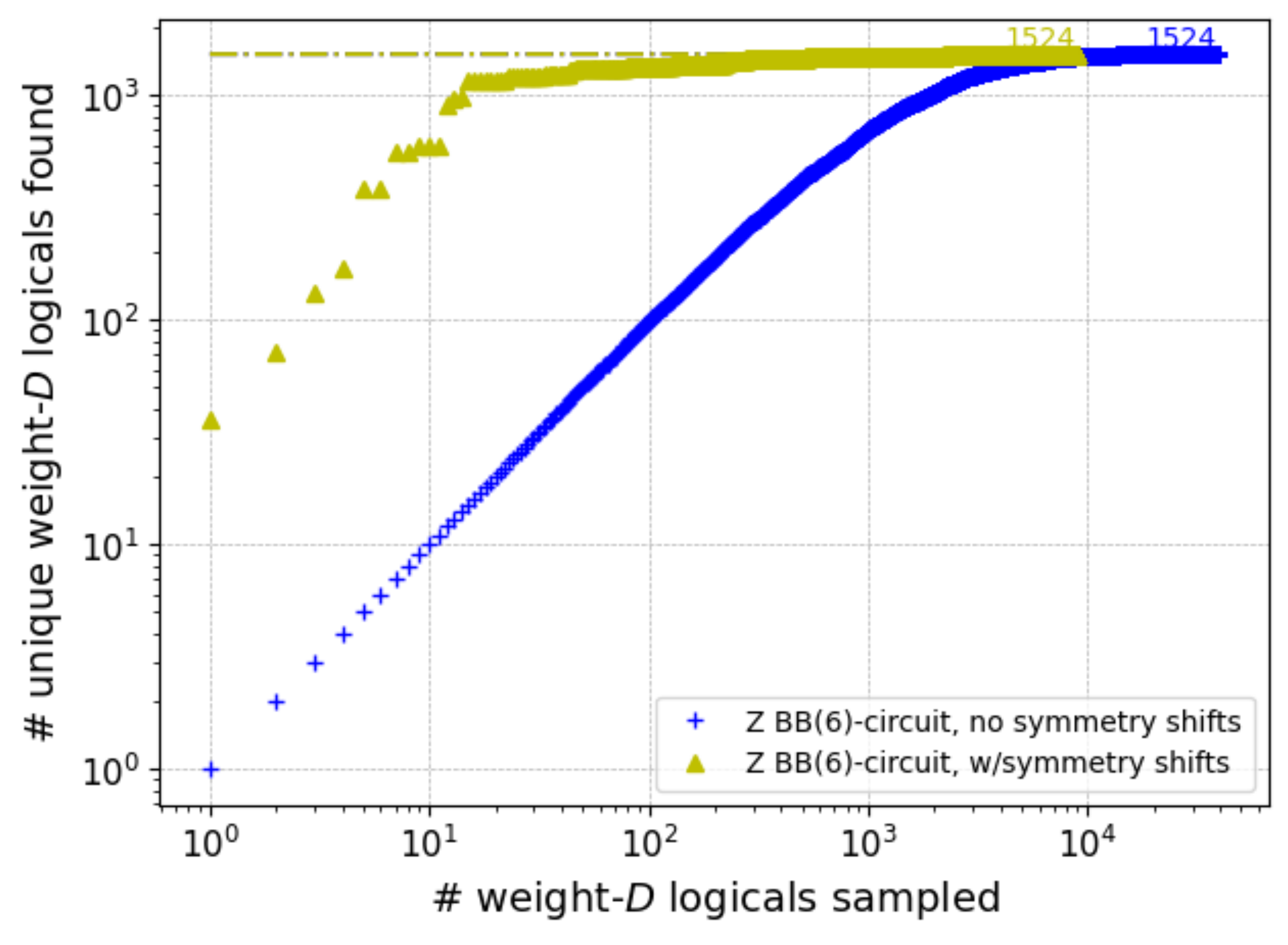}
    \caption{
    (a) Search for $\mathcal{L}(D)$, the set of weight-$D$ logicals, proceeds by sampling (as described in the text) until no new ones appear. 
    For BB(6)-bitflip and BB(12)-bitflip, the method recovers all logicals found by the exact approach, supporting its reliability for larger cases such as BB(18)-bitflip, where it finds $336$ weight-$D$ logicals, conjectured to equal $|\mathcal{L}(D)|$.
    (b) For BB(6)-circuit (compressed $\tilde{H}$, $\tilde{A}$), the symmetry-based method (yellow), which identifies multiple min-weight logicals from each sampled operator, converges substantially faster than the method in (a) (blue).
    }
    \label{fig:min-weight-estimates}
\end{figure}

For some QEC systems and decoder choices, these techniques may be too slow to identify the target logical operator sets in a reasonable time, motivating faster search methods.
Many of the techniques described below are heuristic and lack a clean theoretical justification. 
This is inconsequential for our purposes: every candidate min-weight logical operator is explicitly verified, and any additional ones we uncover only serve to tighten the resulting bounds.

\textbf{Faster search by decimating the appended row:}  
Our method to search for min-weight logicals appends to $H$ an additional check row $g = h + l$, where $h$ is chosen uniformly at random from the row space of $H$ and $l \neq 0$ is chosen uniformly at random from the row space of $A$.  
In practice, $g$ often has high weight, which can degrade decoder performance, as most decoders are designed for check matrices with low column and row weight.  
This can reduce the likelihood of finding a min-weight logical for each random choice of $g$.  
To address this, we employ \emph{decimation}~\cite{mezard2002analytic,montanari2007solving}, in which certain bit positions are fixed prior to decoding.  
Specifically, we fix all bits indexed by $\mathrm{supp}(g)$, the support of the new row $g$.
We assign them so that an odd number are set to $1$, ensuring that the newly added check is unsatisfied as required.  
This assignment modifies the values of other checks, producing a new syndrome $\sigma \in \mathbb{F}_2^M$. 
The decoder is then run on the reduced problem of $N -|\text{supp}(g)|$ unfixed bits and remaining $M$ checks, all now of low weight.  
The decoded solution, combined with the fixed bits, yields a logical operator.  
We find that sampling from low odd-weight assignments can substantially increase the frequency of finding low-weight corrections in the presence of high-weight appended checks. We empirically observe that decimation accelerated the search for min-weight logicals restricted to two QEC cycles, the BB(18)-circuit, for example, after $10^{4}$ decoding instances decimation found 7 logicals of weight 18, the upper bounded min-weight, while without decimation no logicals of weight 18 were found. 
The total time to complete the min-weight logical search instances was also $\approx3\times$ greater than when using decimation, $\approx4.5\times10^{4}$ seconds. 
Finding low weight logicals with decimation is parameter sensitive and an initial parameter optimization was required. 
We note appreciable speed up using decimation was not always found for other circuits.

Another effective technique we found helpful when searching for min-weight logicals is to assign a low uniform prior to all columns in the compressed representation, then perturb these priors randomly between runs. This encourages the search procedure to explore different regions of the space and recover distinct logical operators.

\textbf{Faster search by fault restrictions: } 
Another approach that we find useful is to set a large fraction of bits to zero in the original check matrix.
One way that we make use of this, which has also been considered in prior works, is to use a reduced number of QEC cycles in circuit noise settings, which is mathematically equivalent to setting the bits for faults for all locations in later QEC cycles to zero, eliminating them from the decoding problem. 

\textbf{Faster search using symmetries: }
In some cases, symmetries may be used to simplify the enumeration of logical bitsrings. 
We call a column permutation $\pi:\{1, \dots, N \}\rightarrow \{1, \dots, N \}$ a \emph{symmetry} of $H$ and $A$ if it holds that $\pi(l)$ is a logical bitstring if and only if $l$ is a logical bitstring. 
Here, $\pi(l)$ means that the support of $l$ is permuted according to $\pi$.  
Suppose we have identified a group (under function composition) of symmetries $\Pi$.  
If a min-weight logical bitstring $l$ is obtained by any method, then every element of $\Pi(l) := \{\, \pi(l) : \pi \in \Pi \,\}$ is also a min-weight logical bitstring.  
Thus, many logical bitstrings can be generated from a single example without finding each individually. 

We also define \emph{partial symmetries}, where $\pi : \{1, \dots, N \} \rightarrow \{1, \dots, N \}$ is a permutation such that $\pi(l)$ is a logical bitstring for at least one logical $l$.  
Unlike symmetries, partial symmetries need not form a group and are not guaranteed to map all logicals to logicals.  
However, they can still generate candidates: if $\pi(l)$ satisfies $H\,\pi(l) = 0$ and $A\,\pi(l) \neq 0$, it is stored as a logical operator.

\paragraph{Distance and logical operator computations for \tab{min-weight-properties}:}
All listed distance values and bounds are from prior literature; RT($d$)-bitflip and UT($d$)-bitflip logical operator counts are from~\cite{beverland2019role}.  
Exact logical operator enumeration was used for BB(6)-bitflip and BB(12)-bitflip, while BB(18)-bitflip was handled with the search method, yielding a lower bound on $|\mathcal{L}(D)|$ via BP-OSD. 

Generally, we observe that min-weight decoders (e.g., MaxSat~\cite{noormandipour2024maxsat} or height-bound DTD~\cite{DTDpaper}) were typically slower than heuristic decoders (e.g., BP-OSD~\cite{roffe2020} or BP-LSD~\cite{hillmann2024localized}) and may fail to terminate in reasonable time for large codes. 
For BB-circuit logical operator searches, we used BP-OSD with a maximum of 10,000 BP iterations and an order-100 combination sweep~\cite{roffe2020}.
For RS-circuit logical operator searches, we found a much lighter version of the decoder was sufficient (perhaps because there were so many more logical operators): we used BP-OSD with a maximum of 5 BP iterations and an order-3 combination sweep.

For BB-circuit systems we enumerate min-weight logicals using exact search augmented by symmetries, partial symmetries, and fault restrictions. 
For both bit-flip and circuit noise we exploit shift automorphisms~\cite{bravyi2024high} (as in~\cite{BicycleArchitecturePaper}, App.~A.1). 
For BB(18)-circuit we accelerate the search via decimation, by restricting faults to the first two QEC cycles and applying code automorphism symmetries (finding 230194 compressed logicals), then we generate additional logicals by applying partial time-translation symmetries. 

For all values from logical operator searches in \tab{min-weight-properties}, when we were confident that convergence was close, we ran BP-OSD (without applying symmetries or restrictions) to obtain an additional independent sample set, and ensured that a large fraction of this new sample set were already found.

\subsection{Min-weight fails}
\label{sec:fails}

In this subsection we study the optimal onset of a QEC system given a set of min-weight logical operators 
$\tilde{\mathcal{L}}_\text{found}(D)$ of the compressed representation.
Here we present the approach for even $D$, in which case the the lowest weight that fails is $\lceil D/2 \rceil=D/2$, and describe the extension to odd $D$ in \app{min-weight-extensions}.

Recall that the optimal onset $f^*(D/2)$ is determined by $\mathcal{F}(D/2)$, the set of weight-$D/2$ bitstrings in the expanded representation that fail under a max-class min-weight decoder. 
For even $D$, $\mathcal{F}(D/2)$ is contained entirely within $\mathcal{L}(D)|_{D/2}$.  
We present two techniques. 
First, a direct approach computes $\mathcal{F}(D/2)$ from $\tilde{\mathcal{L}}_\text{found}(D)$, yielding the exact value of $f^*(D/2)$ when the set of found logicals is complete 
($\tilde{\mathcal{L}}_\text{found}(D) = \tilde{\mathcal{L}}(D)$), and otherwise a lower bound. 
Second, for larger QEC systems where enumeration is infeasible, we propose a sampling method that estimates $f^*(D/2)$ under the same completeness assumption, and otherwise provides an estimated lower bound, where the statistical uncertainty can be made arbitrarily small with sufficient sampling.

\paragraph{Max-class min-weight decoder:}
The optimal onset is realized by a specific min-weight decoder for the expanded representation as follows. 
Given a syndrome $\sigma$, we can define an error class $\mathcal{E}_\text{min}(\sigma,a)$ for each $a$:
\[
\mathcal{E}_\text{min}(\sigma,a) = \{\,e \in \mathbb{F}_2^N : H e = \sigma \ \text{and}\ A e = a \,\}.
\]
All errors in $\mathcal{E}_\text{min}(\sigma,a)$ are equivalent.  
The max-class min-weight decoder outputs a correction $c(\sigma)$ from among the largest error classes.
Given set sizes such that:
\[
n_a = |\mathcal{E}_\text{min}(\sigma,a)|, \quad 
n_{\max} = \max_{a'} n_{a'}, \quad 
A_{\max} = \{\,a : n_a = n_{\max}\,\},
\]
then the decoder chooses $a \in A_{\max}$ uniformly at random and sets the correction to $c(\sigma) \in \mathcal{E}_\text{min}(\sigma,a)$.

When computing the optimal onset, the decoder determines whether a given min-weight error leads to failure. 
We define an indicator function $g(e)$, which equals the probability the max-class min-weight decoder fails given an error $e$. 
We extend this definition to the compressed representation by setting $g(\tilde{e}) = 1$ if the max-class min-weight decoder fails for an error $e$ corresponding to $\tilde{e}$, and $g(\tilde{e}) = 0$ otherwise.

\paragraph{Exact optimal onset computation:}
The optimal onset is 
$f^*(D/2) = |\mathcal{F}(D/2)|/\binom{N}{D/2}$,
where $\mathcal{F}(D/2)$ is the set of failing weight-$D/2$ bitstrings in the expanded representation under the max-class min-weight decoder.  
A key property is provided in \propos{ld2-characterization}

\begin{prop}\label{propos:ld2-characterization}
Let $H \in \FF_2^{M \times N}$ and $A \in \FF_2^{K \times N}$, with even distance $D = \min\{\,|l| : l \in \ker{H}\setminus\ker{A}\,\}$.
Let $\mathcal{L}(D) = \{\,l \in \ker{H}\setminus\ker{A} : |l|=D\,\}$ be the set of weight-$D$ logicals, and
$\mathcal{L}(D)|_{D/2}$ their weight-$D/2$ restrictions.
Let $\mathcal{F}(D/2)$ be the set of weight-$D/2$ failing errors under a min-weight decoder.  
Define the set
\[
\mathcal{X} = \{\,x \in  \FF_2^{N} : |x|=D/2,\ \exists f\in \mathcal{F}(D/2)\ \text{with } Hx=Hf\,\}.
\]
Then the following holds:
\[
\mathcal{L}(D)\big|_{D/2} \,=\, \mathcal{X} \, \supset \, \mathcal{F}(D/2).
\]
\end{prop}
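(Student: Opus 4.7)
The plan is to handle $\mathcal{F}(D/2) \subseteq \mathcal{X}$ as an immediate observation (each $f\in\mathcal{F}(D/2)$ is its own syndrome-witness), and then to prove $\mathcal{L}(D)|_{D/2}=\mathcal{X}$ by two inclusions that both rely on a single recurring move: any two weight-$D/2$ bitstrings with equal syndrome but distinct logical action sum to a nontrivial logical of weight at most $D$, so by the distance bound they sum to weight exactly $D$ and have disjoint supports.

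For $\mathcal{L}(D)|_{D/2}\subseteq\mathcal{X}$, I would take $x\in\mathcal{L}(D)|_{D/2}$ and write its witnessing weight-$D$ logical as $l=x+x'$, so $x'$ is a second weight-$D/2$ restriction with $Hx=Hx'$ and $Ax\neq Ax'$. First I would verify that $D/2$ is exactly the minimum weight over all corrections of the syndrome $\sigma:=Hx$: any $y$ of weight less than $D/2$ with $Hy=\sigma$ would combine with $x$ or with $x'$ (one of these sums has nontrivial logical action since $Ax\neq Ax'$) to produce a logical of weight less than $D$, contradicting the distance. Hence $x$ and $x'$ sit in distinct classes of $\mathcal{E}_{\text{min}}(\sigma,\cdot)$, so the max-class min-weight decoder's selected class matches at most one of them; the unmatched element lies in $\mathcal{F}(D/2)$ and shares the syndrome $\sigma$ with $x$, proving $x\in\mathcal{X}$.

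For the reverse inclusion $\mathcal{X}\subseteq\mathcal{L}(D)|_{D/2}$, I would take $x\in\mathcal{X}$ together with a witness $f\in\mathcal{F}(D/2)$ satisfying $Hx=Hf$. If $Ax\neq Af$, then $x+f$ is a nontrivial logical of weight at most $D$, hence of weight exactly $D$ with $x$ and $f$ disjointly supported, exhibiting $x$ as a weight-$D/2$ restriction of a weight-$D$ logical. If instead $Ax=Af$, then since $f$ fails, the max-class decoder's output $c$ on syndrome $\sigma=Hf$ satisfies $Hc=\sigma$, $Ac\neq Af$, and $|c|\leq|f|=D/2$ (by min-weightness applied to the correction $f$); running the same argument with $c$ in place of $f$ produces the required weight-$D$ logical $x+c$.

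The main subtlety will be in the first inclusion: pinning down the precise meaning of ``fails'' for the randomized max-class min-weight decoder, so that one can legitimately conclude that at least one of $x,x'$ is in $\mathcal{F}(D/2)$. The cleanest reading is that $\mathcal{F}(D/2)$ consists of the weight-$D/2$ errors on which the decoder fails with positive probability (equivalently, the indicator $g(e)$ of the text is strictly positive); under this reading, the observation that a single decoder invocation selects one class and therefore disagrees with at least one of the two distinct classes $\{Ax,Ax'\}$ goes through cleanly, and the rest of the proof is pure linear algebra over $\FF_2$.
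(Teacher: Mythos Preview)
Your proposal is correct and follows essentially the same route as the paper's proof: the trivial inclusion $\mathcal{F}(D/2)\subseteq\mathcal{X}$, then the two inclusions for $\mathcal{L}(D)|_{D/2}=\mathcal{X}$ via the same ``sum two equal-syndrome, distinct-action weight-$D/2$ errors to get a weight-$D$ logical'' move. Your version is in fact slightly more careful than the paper's in two places: you explicitly verify that $D/2$ is the minimum correction weight for the syndrome $Hx$ (the paper simply asserts $x,\bar x$ are ``min-weight errors''), and you make explicit the case split $Ax=Af$ versus $Ax\neq Af$ that the paper compresses into the single line ``there exists a min-weight error $e$ with \dots $Ae\neq Ax$''. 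Your flagged subtlety about the randomized decoder is apt; the paper sidesteps it by writing ``any deterministic min-weight decoder must then fail on at least one such action,'' and your positive-probability reading is the natural extension to the randomized max-class decoder.
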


\begin{proof}
\emph{($\mathcal{X} \supset \mathcal{F}(D/2)$)}  
Let $f \in \mathcal{F}(D/2)$.  
Then $f$ must also be in $\mathcal{X}$ since $|f| = D/2$ and $Hf = Hf$.

\emph{($\mathcal{L}(D)\big|_{D/2} \subseteq \mathcal{X}$)}  
Let $x \in \mathcal{L}(D)|_{D/2}$.  
Then there exists a weight-$D$ logical $\ell \in \ker{H}\setminus\ker{A}$ with $\mathrm{supp}(x) \subseteq \mathrm{supp}(\ell)$.  
Define $\bar{x} := \ell + x$.  
We have $|x| = |\bar{x}| = D/2$, $Hx = H\bar{x}$, and $Ax \neq A\bar{x}$.  
Thus the syndrome $\sigma := Hx$ admits at least two distinct actions among its min-weight errors (those of $x$ and $\bar{x}$).  
Any deterministic min-weight decoder must then fail on at least one such action.  
Hence there exists $f \in \mathcal{F}(D/2)$ with $Hf = \sigma$.
Since $|x|=D/2$ and $Hx = Hf$, we see that $x \in \mathcal{X}$.

\emph{($\mathcal{L}(D)\big|_{D/2} \supseteq \mathcal{X}$)}  
Let $x \in \mathcal{X}$, such that $x$ is a weight-$D/2$ error with $Hx = Hf$ for some $f \in \mathcal{F}(D/2)$.  
Since $f$ fails, there exists a min-weight error $e$ with the same syndrome $Hx$ but different action, $Ae \neq Ax$.  
Then $\ell := x + e$ satisfies $H\ell = 0$ and $A\ell \neq 0$, so $\ell$ is a nontrivial logical.  
Moreover $|\ell| \le |x|+|e| = D$, while by definition of distance $|\ell|\ge D$; hence $|\ell|=D$.  
Thus $x$ is a weight-$D/2$ restriction of the logical $\ell$, i.e.\ $x \in \mathcal{L}(D)|_{D/2}$.

Therefore $\mathcal{F}(D/2)$ is contained in $\mathcal{X}$, and $\mathcal{X}$ is equal to $\mathcal{L}(D)|_{D/2}$.
\end{proof}

Thus $\mathcal{F}(D/2)$ can be identified by partitioning $\mathcal{L}(D)|_{D/2}$ by syndrome and logical action, and counting set sizes.  
Working in the compressed representation and mapping to the expanded one, the procedure is:

\begin{enumerate}
    \item \emph{Restriction.}  
    Extract $\tilde{\mathcal{L}}(D)|_{D/2}$, the set of all weight-$D/2$ restrictions of min-weight logicals $\tilde{\mathcal{L}}(D)$.
    
    \item \emph{Partition.}  
    Decompose into disjoint error classes labeled by syndrome $\sigma$ and logical action $a$:
    \[
    \tilde{\mathcal{L}}(D)\big|_{D/2}
      = \bigsqcup_{\sigma} \;\bigsqcup_{a \in \mathcal{A}(\sigma)} 
        \tilde{\mathcal{E}}_{\min}(\sigma,a),
    \]
    where $\mathcal{A}(\sigma)$ denotes the set of logical actions compatible with $\sigma$. 
    
    \item \emph{Map to the expanded representation.}  
    Each compressed bitstring $\tilde{e}$ corresponds to $\rho(\tilde{e}) = \prod_{j \in \mathrm{supp}(\tilde{e})} m_j$ expanded bitstrings with the same $(\sigma,a)$.  
    Hence $|\mathcal{E}_\text{min}(\sigma,a)| = \sum_{\tilde{e} \in \tilde{\mathcal{E}}_\text{min}(\sigma,a)} \rho(\tilde{e})$,
    and 
    \[
    |\mathcal{L}(D)|_{D/2}| = \sum_{\sigma} \sum_{a \in \mathcal{A}(\sigma)} |\mathcal{E}_\text{min}(\sigma,a)|.
    \]
    For each pair $(\sigma,a)$ in this decomposition, we record the cardinality 
    $\lvert \mathcal{E}_{\min}(\sigma,a) \rvert$.
    
    \item \emph{Count failures.}  
    For each $\sigma$, the decoder succeeds only in the largest error class $\mathcal{E}_\text{min}(\sigma,a^*(\sigma))$. 
    Here $a^*(\sigma)$ is the action $a \in \mathcal{A}(\sigma)$ such that $|\mathcal{E}_\text{min}(\sigma,a)|$ is largest (if there are multiple values of $a$ which have equally large set sizes, we take one of them).
    All bitstrings in $\mathcal{E}_\text{min}(\sigma,a)$ with $a \neq a^*(\sigma)$ are failures.  
    Hence,
    \[
    |\mathcal{F}(D/2)| = \sum_{\sigma} \sum_{a \in \mathcal{A}(\sigma)\setminus a^*(\sigma)} |\mathcal{E}_\text{min}(\sigma,a)|.
    \]
\end{enumerate}

This procedure yields the exact $|\mathcal{F}(D/2)|$, and therefore the exact onset $f^*(D/2)$.
When the same procedure is applied to a subset $\tilde{\mathcal{L}}_\text{found}(D) \subseteq \tilde{\mathcal{L}}(D)$ rather than to $\tilde{\mathcal{L}}(D)$, it yields a rigorous lower bound on $|\mathcal{F}(D/2)|$.

\paragraph{Approximate optimal onset by sampling:}
We now describe a sampling method to estimate the size of the failure set $\mathcal{F}(D/2)$.
While min-weight fail counts in \tab{min-weight-properties} were obtained by the exact method described above, we foresee this sampling method being useful for even larger QEC systems, where both $|\tilde{\mathcal{L}}(D)|$ and $D$ are large, making enumeration of all weight-$D/2$ restrictions infeasible.  
Our approach is based on the fact that $|\mathcal{F}(D/2)|$ can be obtained by the following sum:
\begin{eqnarray}
|\mathcal{F}(D/2)| &=& \sum_{\tilde{e} \in \tilde{\mathcal{L}}(D)|_{D/2}} \rho(\tilde{e})g(\tilde{e}), \label{eq:estimate-justification-1}\\
&=& \sum_{\tilde{l} \in \tilde{\mathcal{L}}(D)}  \left(\sum_{\tilde{r} \in \tilde{l}|_{D/2} } \frac{\rho(\tilde{r})g(\tilde{r})}{\mu(\tilde{r})} \right), \label{eq:estimate-justification-2}
\end{eqnarray}
Here $\tilde{l}|_{D/2}$ is the set of weight-$D/2$ restrictions of $\tilde{l}$, $g(\tilde{e})$ is the failure indicator for the max-class min-weight decoder, and $\mu(\tilde{r})$ is the number of logicals in $\tilde{\mathcal{L}}(D)$ which enclose $\tilde{r}$. 
In \eq{estimate-justification-1} each element of $\tilde{\mathcal{L}}(D)|_{D/2}$ appears once, whereas in \eq{estimate-justification-2} it appears $\mu(\tilde{r})$ times, and the factor $1/\mu(\tilde{r})$ corrects for this repetition.

Our approach to estimate $|\mathcal{F}(D/2)|$ is to uniformly sample from the $|\tilde{\mathcal{L}}(D)| \binom{D}{D/2}$ terms in the sum in \eq{estimate-justification-2}. 
To do this, we first uniformly sample among the set of logicals, to obtain an element $\tilde{l}^{(t)} \in \tilde{\mathcal{L}}(D)$, and then sample uniformly among the $\binom{D}{D/2}$ weight-$D/2$ restrictions of $\tilde{l}^{(t)}$ to form $\tilde{r}^{(t)}$.
We repeat this process to obtain $T$ samples, and our estimate is 
\begin{eqnarray}
|\hat{\mathcal{F}}(D/2)| &:=& |\tilde{\mathcal{L}}(D)| \binom{D}{D/2}  ~\frac{1}{T} \sum_{t=1}^T \frac{\rho(\tilde{r}^{(t)})g(\tilde{r}^{(t)})}{\mu(\tilde{r}^{(t)})}.
\label{eq:estimate}
\end{eqnarray}
To implement this sampling strategy, we compute terms on the fly via \alg{term_computation}.

\begin{algorithm}
\caption{Compute the contribution $\rho(\tilde{r})\,g(\tilde{r})/\mu(\tilde{r})$ for a fixed restriction $\tilde{r}$}
\label{alg:term_computation}
\begin{algorithmic}[1]
\State \textbf{Input:} $\tilde{\mathcal{L}}(D)$; an element $\tilde{\ell}_0\in\tilde{\mathcal{L}}(D)$ and a weight-$D/2$ restriction $\tilde{r}\subset \tilde{\ell}_0$.
\State \textbf{Output:} $\rho(\tilde{r})\,g(\tilde{r})/\mu(\tilde{r})$.
\State $\rho \gets \prod_{j \in \mathrm{supp}(\tilde{r})} m_j$ \Comment{compute multiplicity of $\tilde{r}$ in expanded rep.}
\State \textbf{// loop over logicals}
\State $\mu \gets 0,\quad \tilde{\mathcal{E}}_{\min}(\sigma) \gets \{\}, \quad \tilde{r}' \gets \tilde{\ell}_0 \setminus \tilde{r}$ \Comment{initialize repetition counter, error set and error complement}
\For{each $\tilde{\ell}\in\tilde{\mathcal{L}}(D)$}
    \State \textbf{if } $\mathrm{supp}(\tilde{r}) \subseteq \mathrm{supp}(\tilde{\ell})$ \textbf{ then } $\tilde{\mathcal{E}}_{\min}(\sigma) \gets \tilde{\mathcal{E}}_{\min}(\sigma) \cup \{\,\tilde{\ell}\setminus \tilde{r}\,\},\ \mu \gets \mu+1$ \Comment{add complement of $\tilde{r}$ and update count}
    \State \textbf{if } $\mathrm{supp}(\tilde{r}') \subseteq \mathrm{supp}(\tilde{\ell})$ \textbf{ then } $\tilde{\mathcal{E}}_{\min}(\sigma) \gets \tilde{\mathcal{E}}_{\min}(\sigma) \cup \{\,\tilde{\ell}\setminus \tilde{r}'\,\}$ \Comment{add complement of $\tilde{r}'$}
\EndFor 
\Comment{now $\tilde{\mathcal{E}}_{\min}(\sigma)$ consists of all weight-$D/2$ errors in the compressed rep. with same syndrome as $\tilde{r}$}
\State \textbf{// max-class failure probability} 
\State $n_a \gets \sum_{\tilde{q}\in \tilde{\mathcal{E}}_{\min}(\sigma):\ a(\tilde{q})=a}\ \prod_{j \in \mathrm{supp}(\tilde{q})} m_j \quad \text{for each action } a$ \Comment{size of error classes (in expanded rep.)} 
\State $n_{\max}\gets \max_a n_a$, \quad $A_{\max}\gets \{\,a : n_a = n_{\max}\,\}$, \Comment{identify largest error class (or classes if a tie)} \State \textbf{if } $a(\tilde{r}) \in A_{\max}$ \textbf{ then } $g \gets 1 - 1/|A_{\max}|$ \textbf{ else } $g \gets 1$ 
\State \Return $(\rho \cdot g)/\mu$ \end{algorithmic} 
\end{algorithm}

The justification that this strategy for building the set $\tilde{\mathcal{E}}_{\min}(\sigma)$ is captured in \propos{closure-correctness}.
This procedure yields an estimate of $|\mathcal{F}(D/2)|$.
When the same procedure is applied to a subset $\tilde{\mathcal{L}}_\text{found}(D) \subseteq \tilde{\mathcal{L}}(D)$ rather than to $\tilde{\mathcal{L}}(D)$, it yields an estimate of a lower bound on $|\mathcal{F}(D/2)|$.
In both cases, the accuracy of the estimate can be improved by increasing the number of samples $T$.

\begin{prop}\label{propos:closure-correctness}
Let $H \in \FF_2^{M \times N}$ and $A \in \FF_2^{K \times N}$, with even distance $D = \min\{\,|l| : l \in \ker{H}\setminus\ker{A}\,\}$.  
Let $\mathcal{L}(D) = \{\,\ell \in \ker{H}\setminus\ker{A} : |\ell|=D\,\}$ be the set of weight-$D$ logicals.  
For $\ell_0 \in \mathcal{L}(D)$ and a restriction $r \subseteq \ell_0$ of weight $D/2$, let $r'=\ell_0\setminus r$.  
Then the set of all weight-$D/2$ errors with syndrome $\sigma=Hr$ is
\[
\mathcal{E}_{\min}(\sigma) = \{\,\ell\setminus r : \ell\in\mathcal{L}(D),\ r\subseteq \ell\,\}
 \;\cup\;
\{\,\ell\setminus r' : \ell\in\mathcal{L}(D),\ r'\subseteq \ell\,\}.
\]
\end{prop}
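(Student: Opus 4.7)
The plan is to prove set equality via two inclusions. The reverse direction ($\supseteq$) is essentially a computation: if $\ell \in \mathcal{L}(D)$ with $r \subseteq \ell$, then $\ell \setminus r = \ell + r$ has Hamming weight $D - D/2 = D/2$ and syndrome $H(\ell + r) = H\ell + Hr = 0 + \sigma = \sigma$. For the second piece of the union, first note $Hr' = Hr = \sigma$ because $H(r + r') = H\ell_0 = 0$; then the same computation with $r'$ in place of $r$ gives weight $D/2$ and syndrome $\sigma$. So both sets on the right-hand side are contained in $\mathcal{E}_{\min}(\sigma)$.

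For the forward inclusion ($\subseteq$), let $x$ be any weight-$D/2$ error with $Hx = \sigma$. I would analyse $x + r$, which lies in $\ker H$ (since $Hx = Hr$), and split on its logical action. If $A(x+r) \neq 0$, then $\ell := x+r$ is a nontrivial logical of weight at most $|x| + |r| = D$; by the definition of distance it also has weight at least $D$, so $|\ell| = D$ exactly. Equality in $|x+r| = |x| + |r|$ forces $\mathrm{supp}(x) \cap \mathrm{supp}(r) = \emptyset$, so $\ell \in \mathcal{L}(D)$, $r \subseteq \ell$, and $x = \ell \setminus r$ lies in the first set of the union.

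If instead $A(x+r) = 0$, I would switch to $x + r'$. The key algebraic identity is $A(r + r') = A\ell_0 \neq 0$ (since $\ell_0 \in \mathcal{L}(D)$), which gives $A(x+r') = A(x+r) + A\ell_0 = A\ell_0 \neq 0$. Then $\ell := x+r'$ is a nontrivial logical, and by the same distance/weight argument $|\ell| = D$ with disjoint supports of $x$ and $r'$, so $\ell \in \mathcal{L}(D)$, $r' \subseteq \ell$, and $x = \ell \setminus r'$ lies in the second set. Since these two cases are mutually exclusive and exhaustive, every $x \in \mathcal{E}_{\min}(\sigma)$ is accounted for.

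The only subtle point, which is the crux of the argument, is the case where $x + r$ is a nontrivial stabilizer (weight potentially less than $D$) rather than a logical: in that situation $x + r$ itself does not land in $\mathcal{L}(D)$, but swapping $r$ for its complement $r'$ within $\ell_0$ flips the logical action and recovers a legitimate weight-$D$ logical. This is exactly why the union over both $r$- and $r'$-complements is needed, and why the proposition assumes $\ell_0$ is fixed as part of the input to the algorithm that builds $\tilde{\mathcal{E}}_{\min}(\sigma)$.
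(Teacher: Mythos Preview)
Your proof is correct and follows essentially the same approach as the paper's: both directions match, and for the forward inclusion you and the paper both consider $x+r$ and $x+r'$, observe that at least one has nontrivial logical action (the paper phrases this as ``if both were trivial then $\ell_0$ would be trivial''; you split explicitly on $A(x+r)$), and then use the distance bound to force weight exactly $D$. Your explicit remark that equality in $|x+r|=|x|+|r|$ forces disjoint supports (hence $r\subseteq\ell$) makes a step precise that the paper leaves implicit.
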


\begin{proof}
\emph{($\supseteq$)}  
By definition, every element on the right-hand side has the form 
$\ell\setminus r$ for some $\ell\in\mathcal{L}(D)$ with $r\subseteq \ell$, or the form 
$\ell\setminus r'$ for some $\ell\in\mathcal{L}(D)$ with $r'\subseteq \ell$.  
In either case this is a weight-$D/2$ restriction with syndrome $\sigma=Hr=Hr'$, so it lies in $\mathcal{E}_{\min}(\sigma)$.

\emph{($\subseteq$)}  
Conversely, consider any element on the left-hand side $q\in\mathcal{E}_{\min}(\sigma)$.  
Then $|q|=D/2$ and $Hq=\sigma$.  
Consider $q+r$ and $q+r'$.  
Both lie in $\ker{H}$, and at least one must be a nontrivial logical (if both were trivial, then $(q+r)+(q+r')=r+r'=\ell_0$ would be trivial, a contradiction).  
Such a logical has weight at least $D$, but since $|q+r|,|q+r'|\le D$, it must have weight exactly $D$.  
Thus $q$ is a weight-$D/2$ restriction of a nontrivial logical, either $q+r$ or $q+r'$, and so appears on the right-hand side.

Therefore the two sets coincide.
\end{proof}

\textbf{Min-weight fail computations for \tab{min-weight-properties}: } 
The restrictions and min-weight fails are computed using the exact optimal onset computation, except for RS(12)-circuit and BB(18)-circuit for which we use the approximate optimal onset by sampling, and RS(18)-circuit, which is estimated by extrapolation as described in \app{min-weight-extensions}.

\clearpage
\section{Technique III: Multi-seeded splitting for general QEC systems} 
\label{sec:splitting-method}

The `splitting method' forms a general framework for analyzing rare events~\cite{bennett1976}, although in this work we use the term specifically for its application to QEC based on the approach in Ref.~\cite{bravyi2013simulation}.
In \sec{splitting} we restate the method, which was written for surface codes in Ref.~\cite{bravyi2013simulation}, in a general language applicable to any binary decoding problem. 
We note that the concurrent, independent work Ref.~\cite{mayer2025} also provides a formulation of the splitting method for circuit noise and applies it to rotated surface codes. 
In \sec{asymm_toric} we highlight a problem that can arise for the splitting method which can arise for systems encoding multiple logical qubits, which we address by including a modification to the splitting method in \sec{chain_init}.

\subsection{Overview of the splitting method}\label{sec:splitting}

Here we restate the splitting method in a general language applicable to any binary decoding problem. 
Let $\mathcal{F}$ denote the set of all failing bitstrings.
For probabilities $p_0, p_1, \dots, p_t$ with $p_t = p$ and known $P(p_0)$, define $\pi_i(E) = p_i^{|E|}(1-p_i)^{N-|E|}$, and rewrite $P(p)$ as\footnote{For simplicity of presentation in this section, we will work in the expanded representation of the decoding problem in which each fault has the same probability of occurring, see Section~\ref{sec:definitions}. For our simulations, we make a straightforward generalization to the compressed representation -- take $p_0,p_1,\dots,p_t$ to be a sequence of physical error rate scalars (e.g.~in the standard circuit noise model, the error rate of each circuit component), associate to each a vector of fault probabilities $\vec p_i$, and redefine $\pi_i(E)=\prod_{\alpha=0}^{N-1}\frac12(1+(-1)^{E_\alpha}(1-2[\vec p_i]_\alpha))$, where $\alpha$ indexes into the vector of faults $E$ and vector of fault probabilities $\vec p_i$. Note that if $\vec p_i$ is uniform with all elements equal to $p_i$, $\pi_i(E)$ reduces to the stated expression in the text.}:
\begin{equation}
\label{SM1}
P(p) = P(p_0) \prod_{j=1}^{t} \frac{P(p_{j})}{P(p_{j-1})}.
\end{equation}
The splitting method estimates $P(p)$ by computing each ratio $P(p_{j})/P(p_{j-1})$ separately.
It can be applied to general decoding problems; however, to ensure accurate and efficient estimation of $P(p)$, the following conditions should hold:
\begin{enumerate}
    \item \textbf{Accurate Initialization:} The estimate of $P(p_0)$ must be reliable, which can be ensured by sufficient Monte Carlo sampling (for large $p_0$) or, when available, by exact computation (for small $p_0$).
    \item \textbf{Sufficient Overlap:} Consecutive distributions $\pi_j(E|\mathcal{F}) = \pi_j(E)/P(p_j)$ and $\pi_{j-1}(E|\mathcal{F}) = \pi_{j-1}(E)/P(p_{j-1})$ must overlap significantly.
    This can be ensured by selecting a sufficiently dense sequence $p_0, p_1, \dots, p_t$.
    \item \textbf{Ergodicity:} Any two failing bitstrings $E, E' \in \mathcal{F}$ must be connected by a sequence of bitstrings in $\mathcal{F}$, each differing by a single bit. 
    This property has been proven in certain settings (e.g. for surface codes~\cite{bravyi2013simulation}).
    \item \textbf{Fast Mixing Time:}  
    The method relies on Metropolis sampling, which generates samples from the desired distribution if the number of steps greatly exceeds the Markov process's mixing time.  
    Slow mixing demands an impractically large number of samples for convergence.  
    The authors of Ref.~\cite{bravyi2013simulation} note that while mixing time may be modest for a single-qubit surface code patch, they conjecture it scales as $p^{-\Omega(d)}$ as $p \to 0$ for multi-qubit patches.  
    They argue geometrically that to connect low-weight failing bitstrings of inequivalent logical bitstrings under local Metropolis steps will require a transition through high-weight failing bitstrings.  
    Similarly, we may worry the splitting method to struggle to converge when applied to general codes with many logical qubits. In \sec{asymm_toric}, we provide an example in which poor convergence does occur if the Markov processes are initialized naively, and propose a possible solution in \sec{chain_init}.

\end{enumerate}

The splitting method (\alg{splitting-method}) comprises two key steps. 
The first, \emph{Metropolis sampling} (\alg{metropolis-step}), generates bitstrings from the conditional distribution $\pi_j(E|\mathcal{F})$. 
The second, \emph{ratio estimation}, computes the ratio $P(p_{j})/P(p_{j-1})$ using the sampled bitstrings.

\textbf{Metropolis Sampling: }  
To sample a bitstring $E \in \mathcal{F}$ from the conditional distribution  
$\pi(E|\mathcal{F}) = \pi(E)/P(p)$ for any $p$, we use a Metropolis-type subroutine.  
Each Metropolis step takes an input bitstring $E \in \mathcal{F}$ and proposes a new bitstring $E' \in \mathcal{F}$ by flipping a single bit, as described in \alg{metropolis-step}.  
Let $\pi(E \to E')$ be the probability of transitioning from $E$ to $E'$ in a single step. This process satisfies detailed balance: 
\begin{equation}
\label{DB}
\pi(E) \pi(E \to E') = \pi(E') \pi(E' \to E) \quad \forall E, E' \in \mathcal{F}.
\end{equation}
The Metropolis step defines a reversible Markov process with transition probabilities $\pi(E \to E')$, ensuring $\pi(E|\mathcal{F})$ as the steady-state distribution. We refer to the sequence of failing bitstrings $(E_0,E_1,\dots,E_T)$ sampled this way as a (Markov) chain.

\textbf{Ratio estimation:}  
Here we explain how in \alg{splitting-method} we estimate the ratio $P(p_{j})/P(p_{j-1})$ numerically given samples $E_1, E_2, \dots, E_T$ from $\pi_j(E|\mathcal{F})$ and $E'_1, E'_2, \dots, E'_{T'}$ from $\pi_{j-1}(E|\mathcal{F})$. 
Note we choose to exclude the starting configurations of the chains $E_0$ and $E'_0$ from this computation.
For any function $f: \mathcal{E} \to \mathbb{R}$, we define:
\begin{equation}
\label{SM2}
\langle f \rangle_j = \sum_{E\in \mathcal{F}} \pi_j(E|\mathcal{F}) f(E) = \frac{1}{P(p_j)} \sum_{E\in \mathcal{F}} \pi_j(E) f(E),
~~~\text{sample estimated as}~~~ \langle \hat{f} \rangle_j = \frac{1}{T} \sum_{\alpha=1}^{T} f(E_\alpha).
\end{equation}
It is straightforward to show that the ratio 
$P(p_{j})/P(p_{j-1}) = \langle \pi_{j} \rangle_{j-1}/\langle \pi_{j-1} \rangle_{j}$, which can be sample estimated as $\langle \hat{\pi}_{j} \rangle_{j-1}/\langle \hat{\pi}_{j-1} \rangle_{j}$.
A more accurate estimate from~\cite{bennett1976} uses $g(x) = 1/(1+x)$:
\begin{equation}
\label{eq:SM3}
r_j = \frac{P(p_{j})}{P(p_{j-1})} = c \cdot \frac{\langle g(c \pi_{j-1} / \pi_{j}) \rangle_{j-1}}{\langle g(c^{-1} \pi_{j} / \pi_{j-1}) \rangle_{j}}, ~~~\text{sample estimated as}~~~ c \cdot \frac{\frac{1}{T} \sum_{\beta=1}^{T'} g(c \pi_{j-1}(E'_\beta)/\pi_{j}(E'_\beta))}{\frac{1}{T'} \sum_{\alpha=1}^{T} g(c^{-1} \pi_{j}(E_\alpha)/\pi_{j-1}(E_\alpha))},
\end{equation}
which holds for any constant $c > 0$.
We are therefore free to chose a value of $c$.
Ref.~\cite{bennett1976} shows that statistical error is minimized when $c$ is equal to $r_j$.
We use an estimate of $r_j$ to select $c$ in \alg{splitting-method}.

\textbf{Estimating Precision:}  
In \alg{splitting-method}, we estimate the precision of ratio estimates to adjust the number of Metropolis samples. 
For samples from $\pi_{j}(E|\calF)$, we consider the numerator of $r_{j-1}$ and the denominator of $r_{j}$ in \eq{SM3}.  
First, we consider the standard error in the sample estimate of each of these terms.
Second, since the Metropolis \alg{metropolis-step} is a Markov process with an unknown mixing time $\tau$, reliable sampling requires $T \gg \tau$. 
To indicate sufficient mixing has occurred, we compare the full-sample estimate of these terms with the estimate obtained from just the first half of the samples (note that this indicator can be incorrect in some scenarios).
For both estimates, we set $c = 1$ for simplicity, as the optimized values of $c$ have not yet been determined when the precision estimates are computed in \alg{splitting-method}.
Since the samples for $\pi_{j}(E|\calF)$ for each of the $t$ values of $j$ are independent, we assume that errors combine incoherently, and as such we require that the total relative error for each $j$ is at most $\epsilon/\sqrt{t}$, where $\epsilon$ is the allowed relative error for the whole algorithm. In all our experiments, $\epsilon=0.25$, with the expectation that relative error of $25\%$ is acceptable when the goal is to estimate extremely small logical error rates, e.g.~below $10^{-7}$.

\begin{algorithm}
\caption{Splitting Method for Estimating $P(p)$}
\label{alg:splitting-method}
\begin{algorithmic}[1]
\State \textbf{Input:} Probability sequence $p_1, \dots, p_t = p$, initial chain length $T_\text{init}$, initial failure probability $P(p_1)$.
\State \textbf{Parameters:} Scaling factor $\lambda = 2$, target relative error $\epsilon=0.25$, function $g(x) = 1/(1+x)$.
\State \textbf{Output:} Estimated failure probability $\hat{P}(p)$ with relative error $< \epsilon$.

\State Initialize $\hat{P} \gets P(p_1)$.

\For{$j = 1$ to $t$}

    \rule{\linewidth}{0.4pt}  
    \Statex \textbf{Gather Metropolis samples until desired precision reached}  
    \State Initialize Markov chain $\mathcal{M}_j = [E_0]$, where $E_0$ is a known failing fault configuration.
    \State Set sequence length $T_j \gets T_\text{init}$.
    \State Initialize $\lambda'\gets\lambda$
    \While{$|\mathcal{M}_j| < T_j + 1$} 
        \State Extend $\mathcal{M}_j$ using Metropolis updates (\alg{metropolis-step} with $p_j$) until length $T_j+1$: $\mathcal{M}_j = [E_0, E_1, \dots, E_{T_j}]$.
        
        \State Compute sample estimates: $\langle \hat{g}_{\pm} \rangle_j = \frac{1}{T_j} \sum_{\alpha=1}^{T_j} g(\pi_j(E_\alpha) / \pi_{j \pm 1}(E_\alpha))$. \Comment{$\langle \hat{g}_{-} \rangle_j$ for ratio $r_{j-1}$, $\langle \hat{g}_{+} \rangle_j$ for $r_{j}$.}
        \State Compute sample variances: $s^2_{\pm} = \frac{1}{T_j-1} \sum_{\alpha=1}^{T_j} (g( \pi_j(E_\alpha) / \pi_{j \pm 1}(E_\alpha)) - \langle \hat{g}_{\pm} \rangle_j )^2$.
        \State Estimate statistical error: $\sigma = \max(s_{+}/\langle \hat{g}_{+} \rangle_j,s_{-}/\langle \hat{g}_{-} \rangle_j) / \sqrt{T_j}$. \Comment{Relative sample estimate uncertainty.}
        
        \State Compute sample subset estimates: $\langle \hat{g}'_{\pm} \rangle_j = \frac{1}{\lceil T_j/2 \rceil} \sum_{\alpha=1}^{\lceil T_j/2 \rceil} g(\pi_j(E_\alpha) / \pi_{j \pm 1}(E_\alpha))$. \Comment{First half of samples.}
        \State Estimate mixing discrepancy $\Delta = \max\left( \frac{|\langle \hat{g}_{+} \rangle_j - \langle \hat{g}'_{+} \rangle_j|}{\langle \hat{g}_{+} \rangle_j}~,~\frac{|\langle \hat{g}_{-} \rangle_j - \langle \hat{g}'_{-} \rangle_j|}{\langle \hat{g}_{-} \rangle_j} \right)$. \Comment{Small if $T_j \gg$ mixing time.}
        \State \textbf{if} $(\sigma + \Delta) > \epsilon/\sqrt{t}$ \textbf{then} $T_j \gets T_j + \lceil \lambda' \cdot T_\text{init} \rceil$; $\lambda'\gets\lambda\cdot\lambda'$\Comment{Increase $T_j$ if relative error is too large.}

    \EndWhile

    \rule{\linewidth}{0.4pt}  
    \Statex \textbf{Estimate ratio using metropolis samples from current round and previous round}  
    \If{$j \geq 2$}  \Comment{Previous round's samples: $\mathcal{M}_{j-1} = [E_0, E'_1, \dots, E'_{T_{j-1}}]$.}
    \State Initialize $c \gets 1$.  
    \For{$i \in \{1, 2, 3\}$} \Comment{Converge on $c$ minimizing error in $\hat{r}_{j-1}$; a few iterations should suffice.}

    \State Compute ratio estimate:~~~$
    \hat{r}_{j-1} = c \cdot \left[ \sum_{\alpha=1}^{T_{j-1}} g \left( \frac{c \pi_{j-1}(E'_\alpha)}{\pi_{j}(E'_\alpha)} \right) \right] / \left[ 
    \sum_{\alpha=1}^{T_j} g \left( \frac{\pi_j(E_\alpha)}{c \pi_{j-1}(E_\alpha)} \right) \right]$.

    \State $c \gets \hat{r}_{j-1}$.
    \EndFor

    \State Update failure probability estimate: $\hat{P} \gets \hat{P} \cdot \hat{r}_{j-1}$.
    \EndIf
\EndFor
\State \Return $\hat{P}$.
\end{algorithmic}
\end{algorithm}

\begin{algorithm}
\caption{Metropolis Step for Sampling from $\pi(E | \mathcal{F}) := p^{|E|}(1-p)^{N-p}/P(p)$}
\label{alg:metropolis-step}
\begin{algorithmic}[1]
\State \textbf{Input:} Current failing bitstring $E \in \mathcal{F}$, probability $p$.
\State \textbf{Output:} Updated failing bitstring $E' \in \mathcal{F}$.

\State Flip a random bit in $E$ to form a trial bitstring $E'$.
\State Compute $q = \min\bigl(1, \tfrac{\pi(E')}{\pi(E)}\bigr)$
\State With probability $1-q$, \Return $E$ \Comment{To ensure detailed balance, stay at $E$}
\If{$E' \in \mathcal{F}$} \Comment{Ensure failing bitstring by testing with decoder}
    \State \Return $E'$.
\EndIf
\State \Return $E$.
\end{algorithmic}
\end{algorithm}

\textbf{Selecting splitting probabilities: }
\alg{splitting-method} requires a list of splitting probabilities as an input.
The following heuristic choice\footnote{To motivate this choice we note that the $w_j$ provides a rough estimate of  the average weight $|E|$ of a random bitstring $E\in \calF$ drawn from the distribution
$\pi_j(E|\calF)$. 
Since the probability of each bit is independent and $p_j\ll 1$, one should expect that the random variable $|E|$ is concentrated near its mean with standard deviation $O(\sqrt{w_j})$.
Therefore one can use a bound
\[
\left(\frac{p_{j+1}}{p_j}\right)^{-O(\sqrt{w_j})} \le \frac{P_{j+1}(E|\calF)}{\pi_j(E|\calF)} \le
\left(\frac{p_{j+1}}{p_j}\right)^{O(\sqrt{w_j})}
\]
for all `typical' bitstrings $E$. 
}
of the splitting sequence was proposed as a reasonable tradeoff between the statistical error and the number of splitting steps in Ref.~\cite{bravyi2013simulation}:
\begin{equation}
\label{steps1}
p_{j\pm1}=p_j 2^{\mp1/\sqrt{w_j}}, \quad w_j=\max{(D/2,p_j N)}.
\end{equation}
One can start with $p_0$ and use Eq.~\eqref{steps1} to define smaller $p_0>p_1>p_2>\dots$ and larger $p_0<p_{-1}<p_{-2}<\dots$ physical error rates. The ``downward" splitting sequence $p_0,p_1,p_2,\dots$ can be analyzed as described above. The ``upward" splitting sequence $p_0,p_{-1},p_{-2},\dots$ can be analyzed similarly (simply relabel the negative indices with their absolute values and the same description above applies). Although logical error rates at $p_{-1},p_{-2},\dots$ are likely to be easy to estimate with Monte Carlo if that is the case at $p_0$, we include some upward splitting just to compare the results against Monte Carlo. We note however that some deviation may be expected as the (pseudo)threshold of the QEC system is approached \cite{bravyi2013simulation}.

\subsection{Example: asymmetric, unrotated toric code}\label{sec:asymm_toric}
We use asymmetric toric codes with $Z$ (phase-flip) noise to demonstrate the difficulty of achieving a fast mixing time for a potential pitfall of the splitting method applied to general QEC codes -- the difficulty of mixing between different logical sectors.

In an asymmetric toric code, denoted UT($d_1,d_2$) with $d_1<d_2$, there is a basis of four logical Pauli operators $Z_S$, $X_L$, $Z_L$, $X_S$, in which ``short" operators $Z_S$ and $X_S$ are weight $d_1$ and ``long" operators $Z_L$ and $X_L$ are weight $d_2$. To study $Z$ noise, we begin Markov chains using either logical operator $Z_S$ or $Z_L$ as the initial failing configurations. If the chains are continued long enough to reach the mixing time, we would expect the same logical error estimates from either initialization. If the mixing time is not reached, however, then chains initialized with $Z_L$ will tend to linger on atypically high-weight failing configurations, and are therefore likely to predict a lower error rate compared both to chains initialized with $Z_S$ and to Monte Carlo sampling. This situation is indeed what is observed in our simulations, see \fig{asymm_toric}.

\begin{figure}[t]
(a)\includegraphics[width=0.46\linewidth]{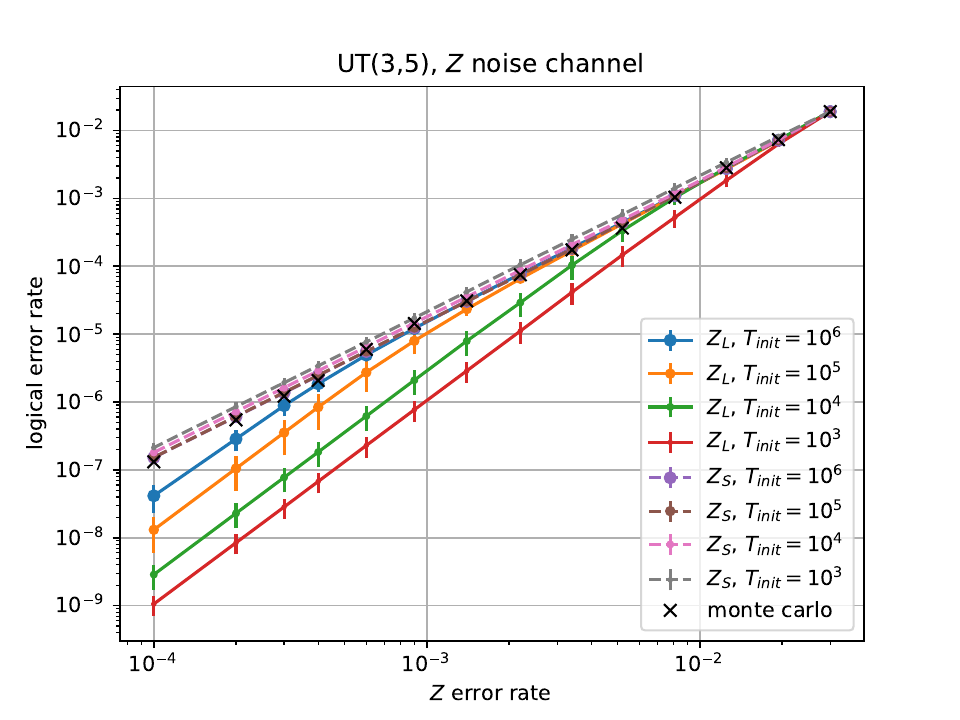}
(b)\includegraphics[width=0.46\linewidth]{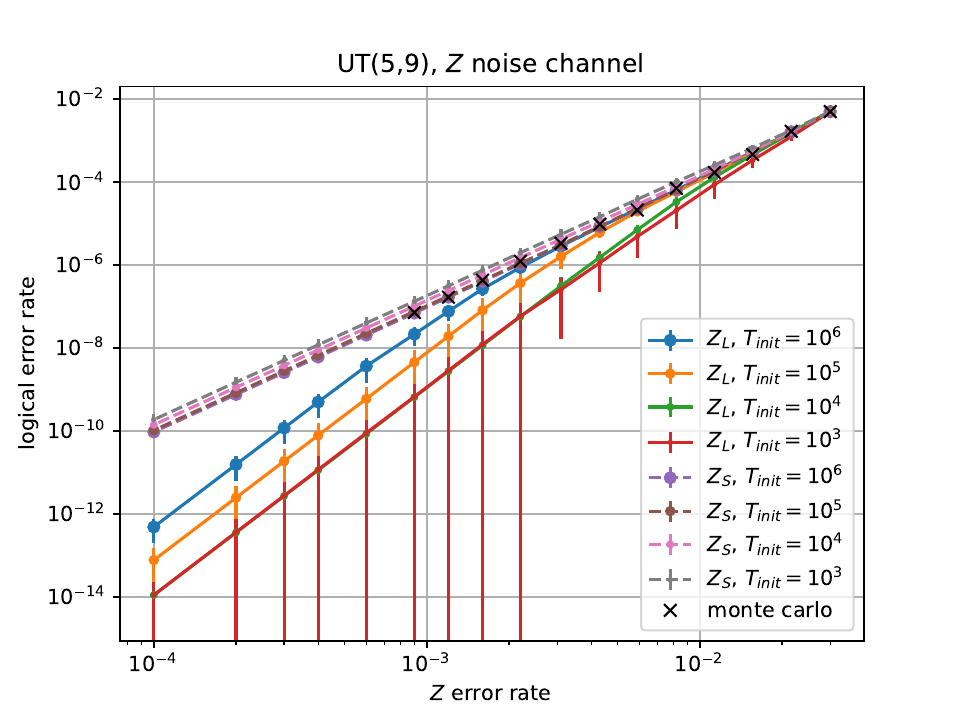}
\caption{\label{fig:asymm_toric}The splitting method used to estimate logical error rates of asymmetric toric codes through a $Z$ noise channel decoded by matching. We initialize all Markov chains with either the short logical $Z_S$ or the long logical $Z_L$ and note a significant effect on the splitting method estimates. In these examples, the two logicals $Z_S$ and $Z_L$ have weights (a) $3$ and $5$, (b) $5$ and $9$. Initializing with the short logical $Z_S$ tends to match the Monte Carlo estimates closely. However, as the initial chain length $T_{\text{init}}$ is increased, estimates initialized with the long logical $Z_L$ converge to the Monte Carlo points as well. We run the splitting method 10 times for each colored data set, averaging the results to plot the points shown and taking the standard deviation to obtain error bars. Data points are joined by lines just to better visualize the trends. Only downward splitting is done, and $p_0=0.03$. Monte carlo samples include 100 logical fails per point (and 1000 fails at $p_0$) and so have negligible error.}
\end{figure}

\subsection{Multi-seeded splitting and bivariate bicycle code results}\label{sec:chain_init}

For the asymmetric toric code, it appears obvious in hindsight that we should just initialize chains with the lowest weight logical operator to get good estimates of logical error rate at low physical error rate. 
However, it is not clear that this initialization strategy should be generalized to other QEC systems, particularly those modeling circuit noise. 
In general, we want to explore all logical sectors, and ideally encourage chains to hop between them, since we cannot necessarily expect one sector to dominate.
For instance, at finite physical error rates, many logical operators of high weight may be more important to the logical error rate than a relatively small set of minimum weight logicals, e.g.~see \fig{failure-spectrum-transform}(b).

To address this issue, we propose the following strategy that we feel is a natural choice for chain initialization. The main idea is to initialize chains at $p_0$ with Monte Carlo sampled failing configurations, and then initialize chains at physical error rates $p_j$, $j>0$, with the last failing configurations in chains at $p_{j-1}$.

\begin{minipage}{0.95\textwidth}
\noindent\textbf{Multi-seeded splitting} (with parameters $L$ and $M$):
\newline

\noindent \text{\space}Repeat all the following $L$ times (in parallel).
\begin{enumerate}[label=(\roman*)]
\item \textit{Sample.} At physical error rate $p_0$, use Monte Carlo to sample a failing configuration $E_0$. 
\item Repeat the following steps $M$ times (with different random seeds and in parallel)
\begin{enumerate}
\item \textit{Zeroth chain.} Initialize the Markov chain at $p_0$ with $E_0$ and complete the chain to obtain some final failing configuration $E^{(0)}_{T_0}$. 
\item \textit{Downward chains.} For $i=1,2,\dots,I_+$ in sequence, initialize the Markov chain at $p_i$ with failing configuration $E^{(i-1)}_{T_{i-1}}$ and complete the chain. Call the chain's final failing configuration $E^{(i)}_{T_{i}}$. May be performed in parallel with step (c).
\item \textit{Upward chains.} For $i=-1,-2,\dots,I_-$, initialize the chain at $p_i$ with $E^{(i+1)}_{T_{i+1}}$ and complete the chain. Call the chain's final failing configuration $E^{(i)}_{T_{i}}$. May be performed in parallel with step (b).
\end{enumerate}
\end{enumerate}
\end{minipage}

The result of multi-seeded splitting is $LM$ sets of chains with indices $I_-$ through $I_+$, which can then be analyzed by the splitting method, \sec{splitting}, to get $LM$ estimates of logical error rates. In plots, we average the $LM$ estimates from multi-seeded splitting and find their standard deviation to draw as error bars. This means we ultimately end up averaging estimates derived from $L$ different Monte Carlo sampled failing configurations at $p_0$, which may also help mitigate the effect of insufficient mixing time.

We make a few additional remarks. First, at physical error rate $p_0$, we perform Monte Carlo anyway to obtain an estimate of the logical error rate $P(p_0)$ as required by the splitting method. Thus, finding the required failing configuration in step (i) of multi-seeded splitting is easily doable and moreover configuration $E_0$ should be a typical failing configuration. Second, chains within steps (iib) and (iic) must be generated sequentially because we require the final failing configuration from the previous chain to initialize the next. However, steps (iib) and (iic) can be done in parallel as they both require just $E^{(0)}_{T_0}$ from step (iia) to get started.

\fig{mc+s_example} shows the results of multi-seeded splitting applied to the asymmetric, unrotated toric codes with $L=26$ and $M=3$, showing good agreement with the available Monte Carlo values.

\begin{figure}[t]
(a)\includegraphics[width=0.49\linewidth]{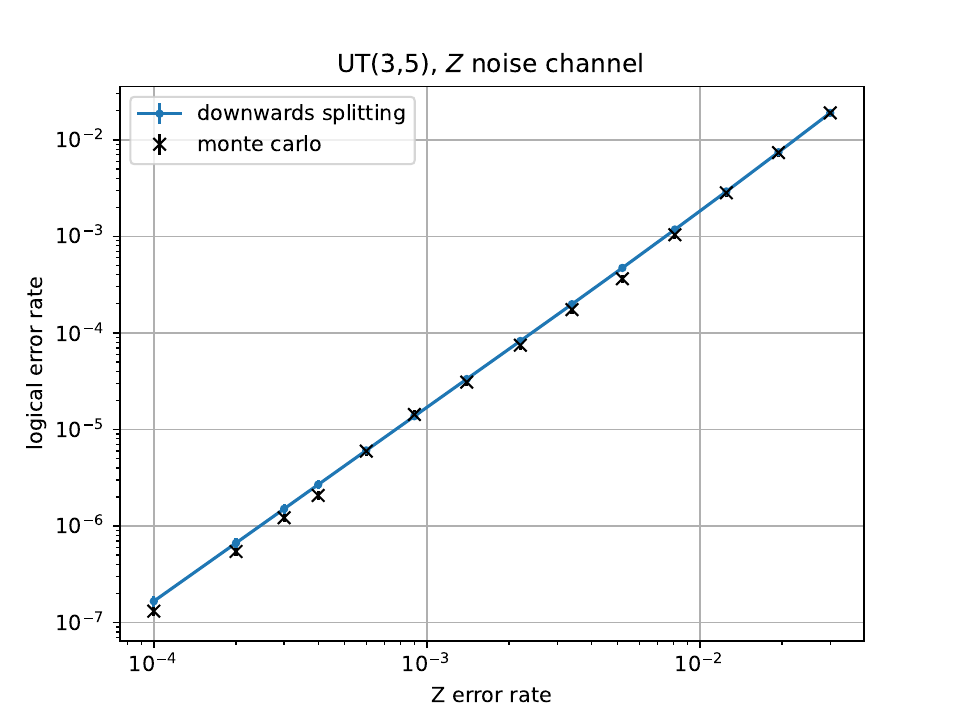}
(b)\includegraphics[width=0.49\linewidth]{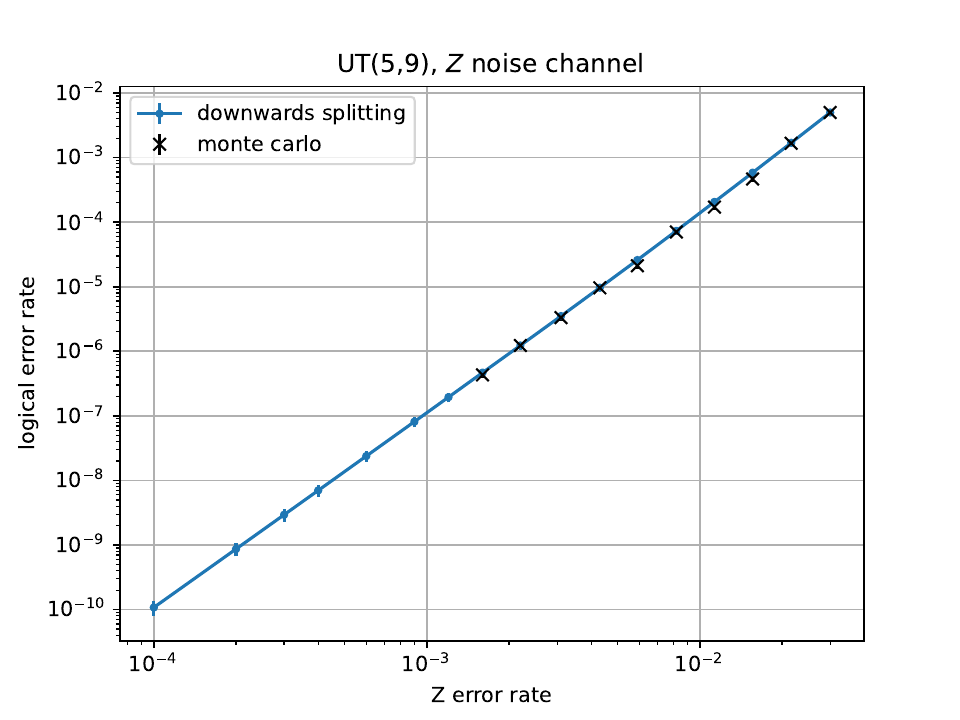}
(c)\includegraphics[width=0.46\linewidth]{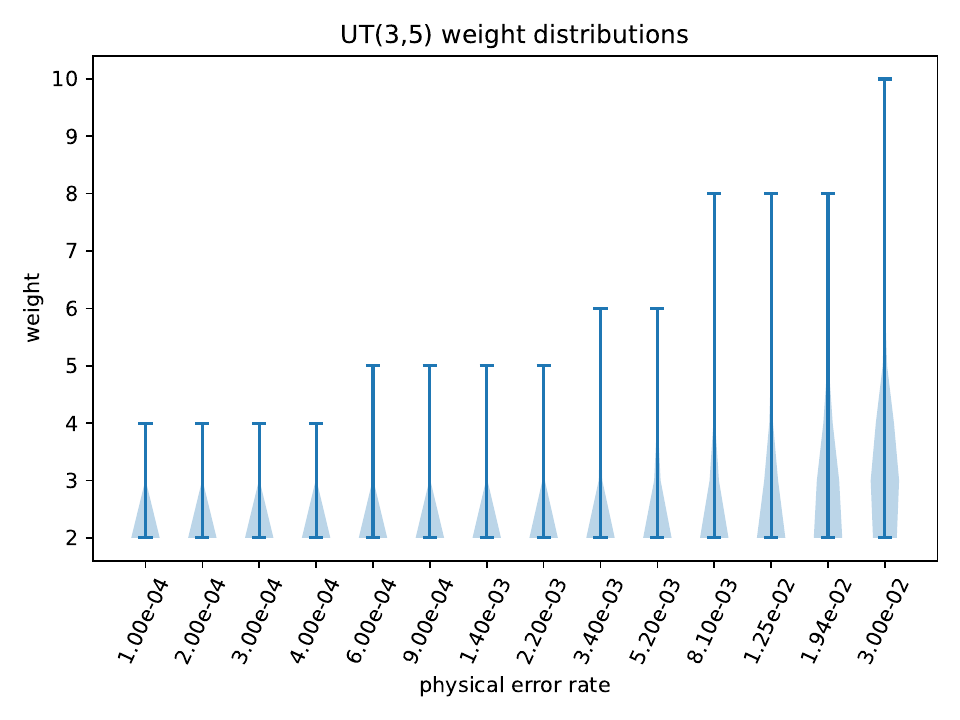}\hspace{15pt}
(d)\includegraphics[width=0.46\linewidth]{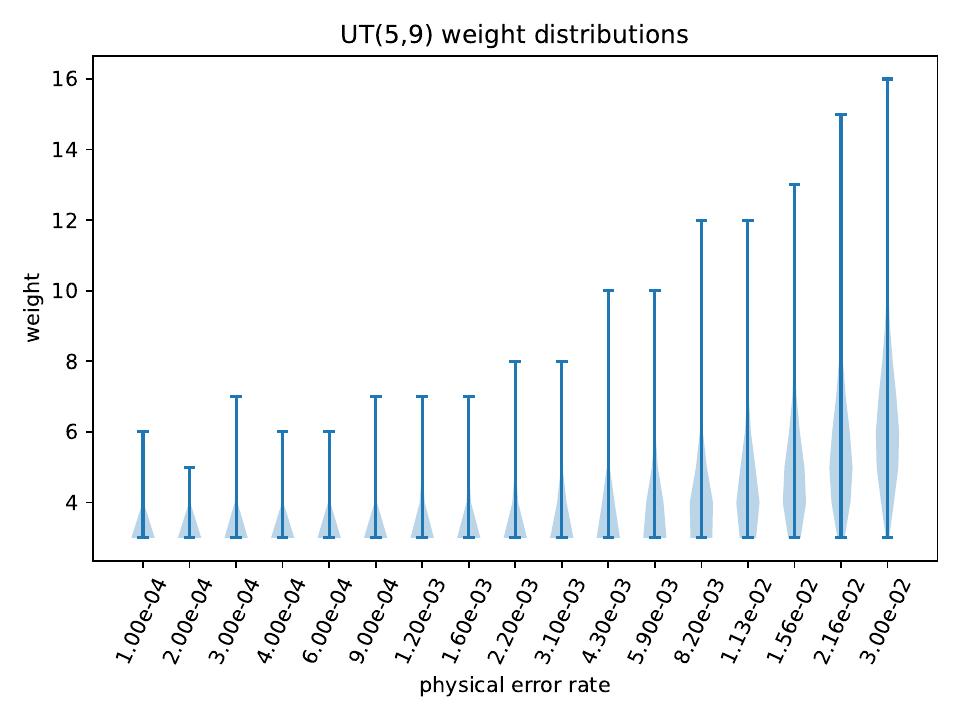}
\caption{\label{fig:mc+s_example} (a,b) The multi-seeded splitting method described in this section (with $p_0=0.03$, so only downward splitting is done) provides logical error rate estimates for asymmetric toric codes that are consistent with the Monte Carlo values. For the data shown $T_{\text{init}}=10^5$, but it is observed that the main effect of changing this parameter is to change the size of the error bars. (c,d) We show the distributions of failing configuration weights in all the $LM$ chains at each physical error rate. As expected, the average weight of a failing configuration gradually reduces as the physical error rate is decreased.}
\end{figure}

We also employ multi-seeded splitting to estimate logical error rates of BB($d$) codes with Relay decoding. The results are shown in \fig{splitting_bb}. Due to, it seems, the large amount of compute time required to obtain reasonable chain mixing, we do not include results for BB($18$).\footnote{Initial, quicker studies of BB($18$) did not yield sensible results, though these studies used relatively small values for $T_{\text{init}}$ (from $6\times10^5$ and smaller) and chain initialization on low- or min-weight circuit logicals instead of Monte Carlo sampled failing configurations.} In Figures \ref{fig:final-results-decoder-comparison} and \ref{fig:final-results-code-comparison}, we show data from the splitting method for other QEC systems, the BB codes with the BPLSD decoder and RS codes with matching, and the parameters for these simulations can be found in Table~\ref{tab:splitting_method_parameters} in Appendix~\ref{app:additional_splitting}. Also in Appendix~\ref{app:additional_splitting}, we provide timing data, decoder call counts, and other additional information about the splitting method simulations.

\begin{figure}[th]
(a)\includegraphics[width=0.49\linewidth]{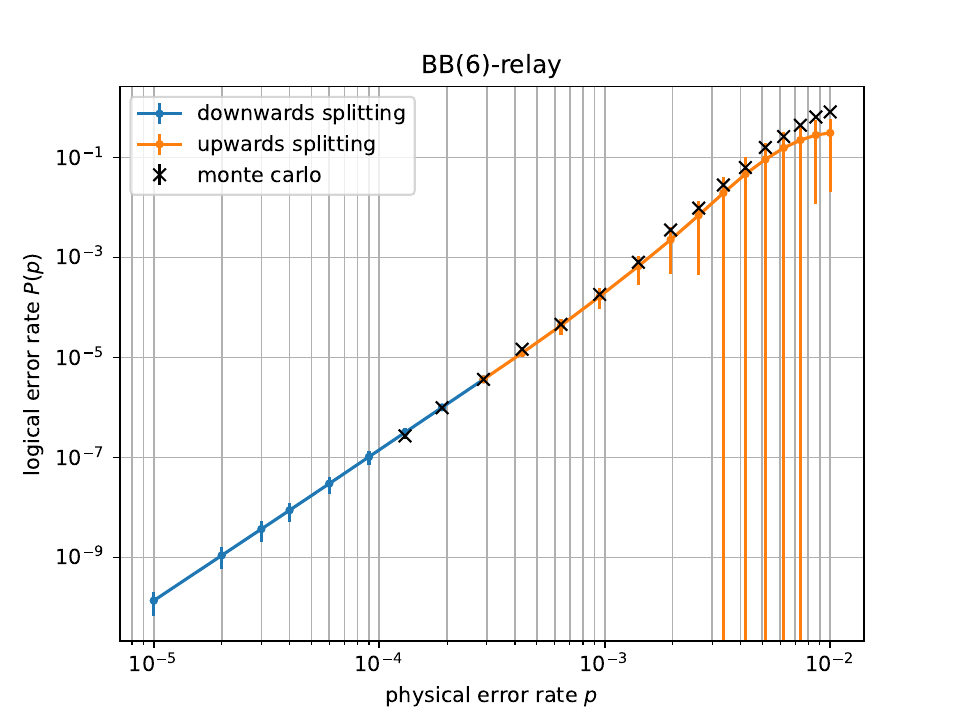}
(b)\includegraphics[width=0.49\linewidth]{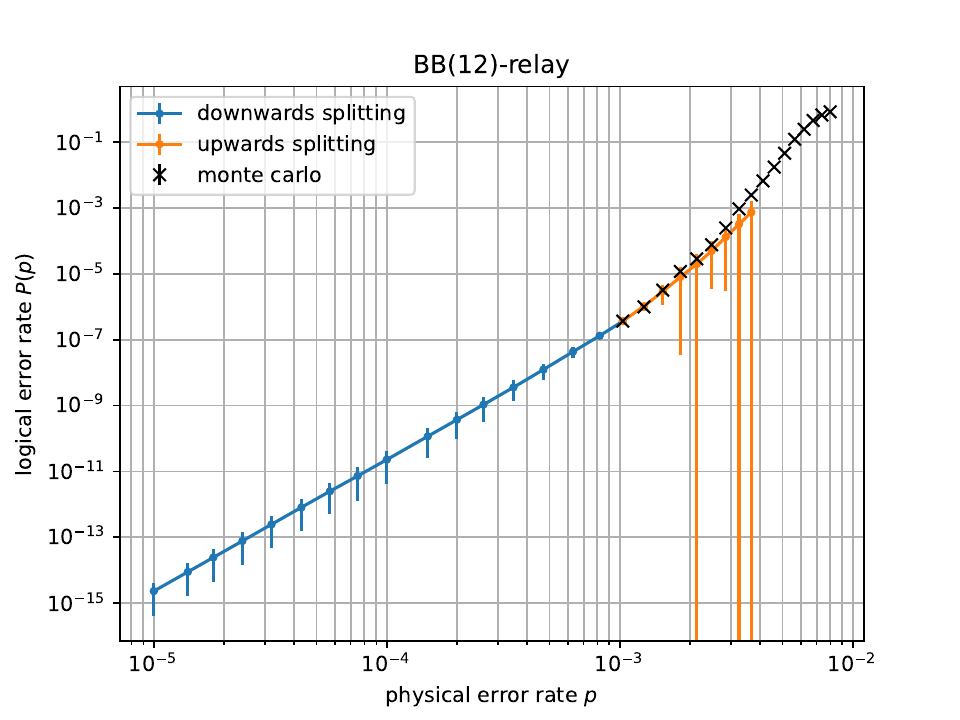}
(c)\includegraphics[width=0.46\linewidth]{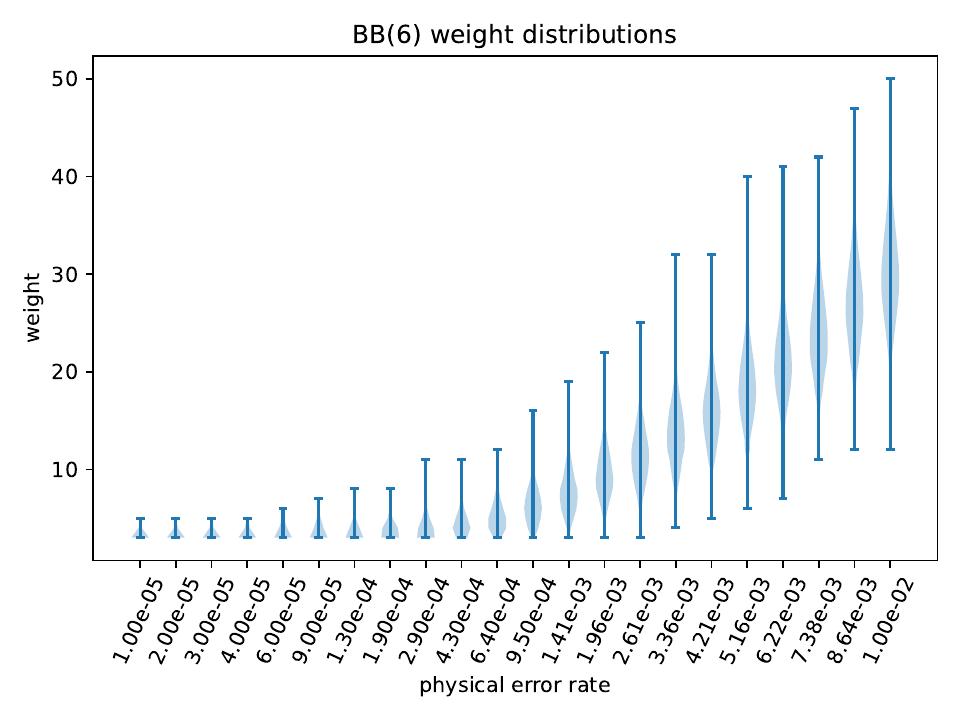}\hspace{15pt}
(d)\includegraphics[width=0.46\linewidth]{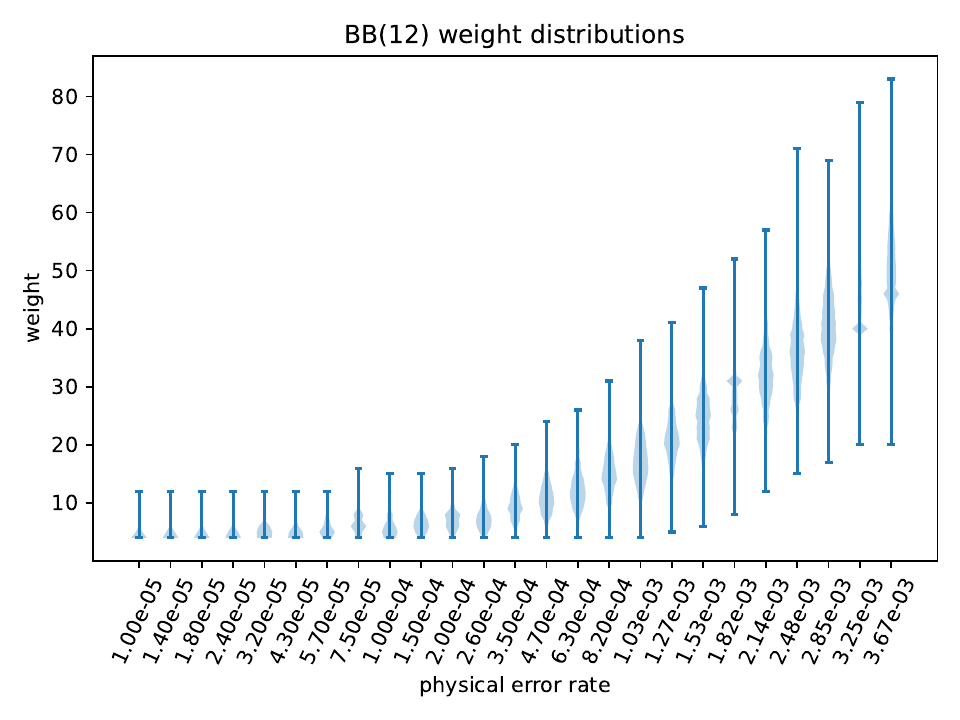}
\caption{(a,b) Using multi-seeded splitting to estimate $Z$-type logical error rates of bivariate bicycle codes BB($d$) for $d=6,12$ against circuit noise using the Relay decoder \cite{RelayPaper}. We use $T_{\text{init}}=10^5$ for BB($6$) and $T_{\text{init}}=10^6$ for BB($12$), and $M=3$ and $L=12$. The plotted points are from the average of the $LM=36$ instances of multi-seeded splitting and error bars are from their standard deviation. Data points are joined by lines just to better visualize the trends. Note that for BB($6$) we did not choose the starting point $p_0$ to be the lowest point with Monte Carlo data, so that we could observe the downward splitting agreeing with Monte Carlo. (c,d) We plot the distributions of weights of failing configurations visited in the chains at each physical error rate. We do note large error bars in the upward splitting and some choppiness in the BB($12$) weight distributions, both of which may indicate some struggle to converge or fully mix the chains. Also for the BB($12$) code, we note that 2 of the 36 instances of multi-seeded splitting step (iic), upward splitting, did not finish in a reasonable time and only chains that did complete are included in data for plots (b,d). All 36 instances of step (iib), downward splitting, were completed, however.}
\label{fig:splitting_bb}
\end{figure}

We end this section by remarking on some limitations we still see in applying the splitting method to general QEC systems. First, it is still unclear how one should choose $T_{\text{init}}$ to ensure sufficient chain mixing, or even how one might determine more rigorously that a chain has mixed sufficiently. Techniques addressing both these limitations can be found in the extensive literature on Metropolis sampling methods. To encourage faster mixing, parallel tempering (e.g.~\cite{neal1996sampling}) allows hopping between chains at different temperatures (or, in our case, different physical error rates), and, to evaluate convergence, it can be helpful to compare multiple independent chains \cite{gelman1992inference}. However, we leave application of these two techniques to rare events in QEC systems to future work.
A simple method that we did try was to require, in addition to the half-sample convergence check we already do, that the chains transition (to a different failing configuration) some minimum number of times before terminating. In preliminary studies, this tends to make chains at low physical error rates longer (because transition rates are naturally low there, see \app{additional_splitting}), but it was not clear this provided more accurate logical error estimates. 

Second, and relatedly, it appears important to improve the proposal of new configurations so that the Markov chains transition more frequently and more quickly visit new failing configurations. In the presented simulations, the only transitions we allow are single bit flips (i.e.~the introduction of one new fault or removal of one existing fault). A tempting possibility is to use symmetries in the QEC system to take a current failing configuration and propose a new fault configuration that is guaranteed, or at least very likely, to also be failing. For instance, this might be done by exploiting space or time symmetries in the QEC system. This is similar to the symmetry method used for computing logicals in \sec{logicals}, though now the symmetries are of failing configurations instead, which necessarily means the decoder's behavior is involved.

\clearpage
\section{Application: Bivariate bicycle codes at low logical error rates}
\label{sec:validate-relay}

In this section we use our three techniques of failure-spectrum ansatz fitting, min-weight analysis and multi-seeded splitting, to analyze bivariate bicycle codes under circuit noise across a wide range of error rates.

We apply all three techniques, along with Monte Carlo sampling to bivariate bicycle codes decoded with Relay and BP-LSD in \fig{final-results-decoder-comparison}, and the surface code decoded with matching in \fig{final-results-code-comparison}.
Our three analysis techniques show consistent agreement with each other and with Monte Carlo across all datasets and accessible error rates.
Multi-seeded splitting, implemented as described in \sec{chain_init}, yields large inferred uncertainties at low error rates for the bivariate bicycle codes. 
Nevertheless, its estimated values remain consistently close to the ansatz curve, even when the nominal uncertainty is substantial.
This consistency helps to validate all techniques. 

In \fig{final-results-decoder-comparison}, we evaluate Relay and BP-LSD using the settings described in \sec{system-examples}. 
For each decoder, we fit the five-parameter ansatz $f^{(5)}_{\text{ansatz}}(w)$ by $\chi^2$ minimization to the sampled failure spectrum, fixing $w_0$ to the smallest failure weight found by multi-seeded splitting for BB(6) and BB(12), and setting $w_0 = 9$ for BB(18).
The onset bound is obtained from our min-weight bound on $|\mathcal{F}(w_0)| / \binom{N}{w_0}$ from \tab{min-weight-properties}. 
Two observations emerge from the data.
First, Relay substantially outperforms BP-LSD, consistent with our implementation of BP-LSD being a lightweight variant using order zero. 
Second, for BB(12) and BB(18), the Relay failure spectrum trends toward an onset well above our bound, implying either that the bound is far from tight or that these codes have substantial headroom for improved performance through enhanced decoding.
Given our extensive methods for identifying minimum-weight logical operators and validating high fractional coverage in \sec{computing-min-weight-properties}, together with the discovery of numerous failing errors of weight below $D/2$, we conjecture that the latter explanation is more likely.

\begin{figure}[ht]
    \centering
    \includegraphics[width=0.49\linewidth]{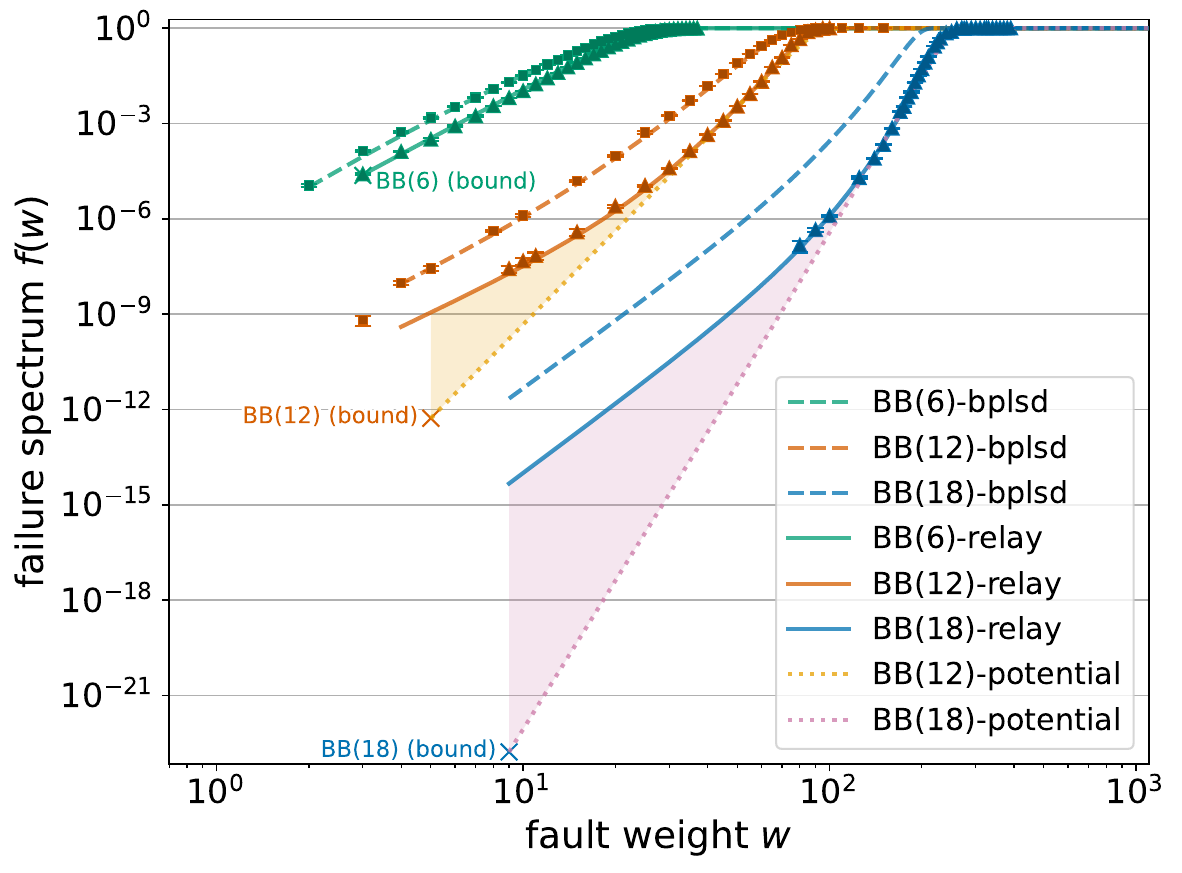}
    \includegraphics[width=0.49\linewidth]{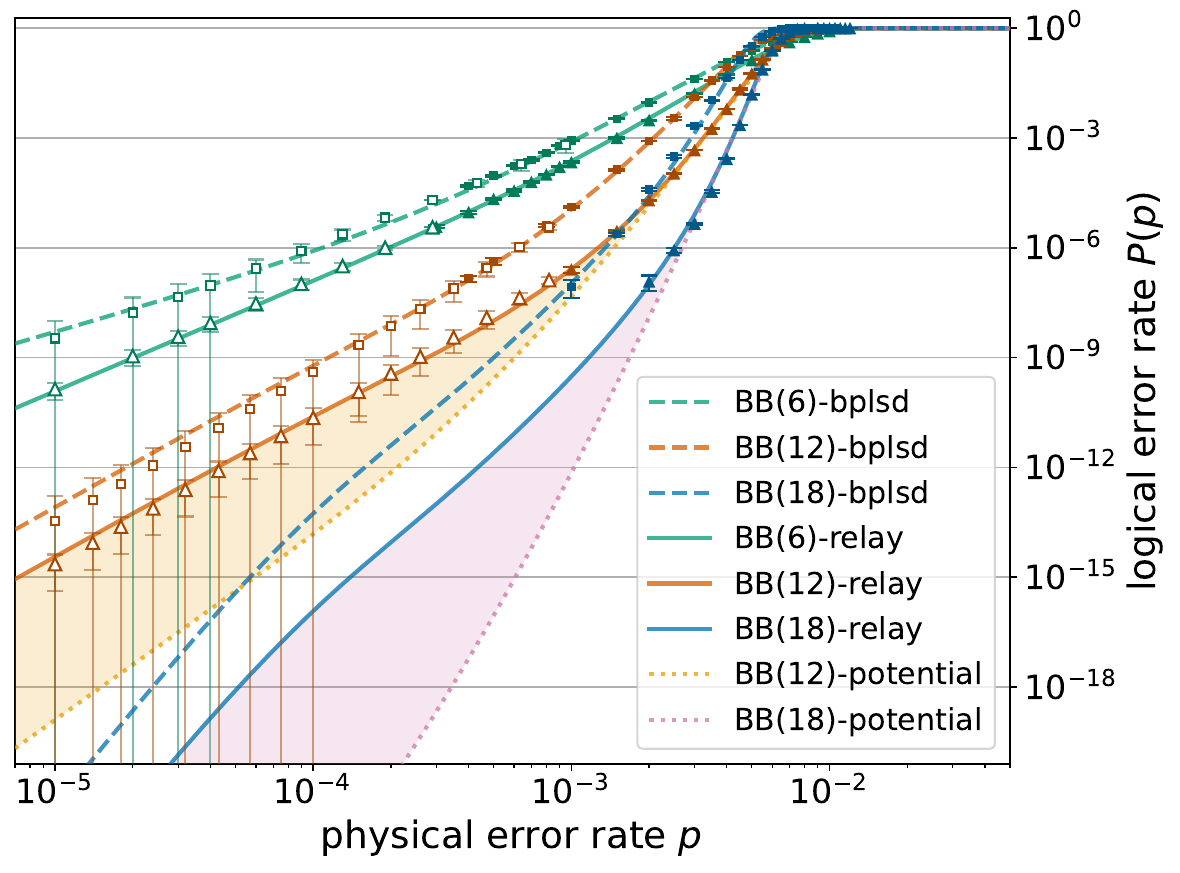}
    \caption{
    \textbf{Decoding bivariate bicycle codes with Relay versus BP-LSD.}
    Sampled data for both BP-LSD (squares) and Relay (triangles), displaying the failure spectrum (left) and logical error rate (right) for three bivariate bicycle codes under circuit noise. 
    For each decoder, we fit the five-parameter ansatz $f^{(5)}_{\text{ansatz}}(w)$ to the failure-spectrum data, both freely and with the onset fixed to the bound on the optimal onset. 
    We include multi-seeded splitting data for BB(6) and BB(12) (hollow markers), which aligns with the ansatz extrapolations.
    Relay clearly outperforms BP-LSD, yet its inferred onset for BB(12) and BB(18) lies well above the bound, indicating potential for substantial decoder improvement.
    Note that in the case of BB(12)-bplsd, we fit the ansatz to data for $w \geq 4$, and include the $w=3$ term separately using the generalization described toward the end of \sec{model-ansatz}.
    }
    \label{fig:final-results-decoder-comparison}
\end{figure}

In \fig{final-results-code-comparison}, we analyze the rotated surface code decoded using matching. 
As in \fig{final-results-decoder-comparison}, we fit the five-parameter ansatz $f^{(5)}_{\text{ansatz}}(w)$ by $\chi^2$ minimization to the sampled failure spectrum although in this case we fix $w_0 = D/2 = d/2$, since matching is a provable minimum-weight decoder for surface codes. 
The observed failure spectrum trends toward an onset that appears likely to approach, but remain slightly above, our bound on the optimal onset (as expected since matching is a min-weight but not a max-class decoder).
Our extrapolation of the low-error-rate behavior of the two largest bivariate bicycle codes decoded with Relay suggests that their logical error rates will intersect that of the surface code below $p = 10^{-3}$ but above $p = 10^{-4}$.
However, the projection based on the theoretical onset bound indicates that with an improved decoder, bivariate bicycle codes may outperform the surface code by several orders of magnitude in this range.

\begin{figure}[ht]
    \centering
    \includegraphics[width=0.49\linewidth]{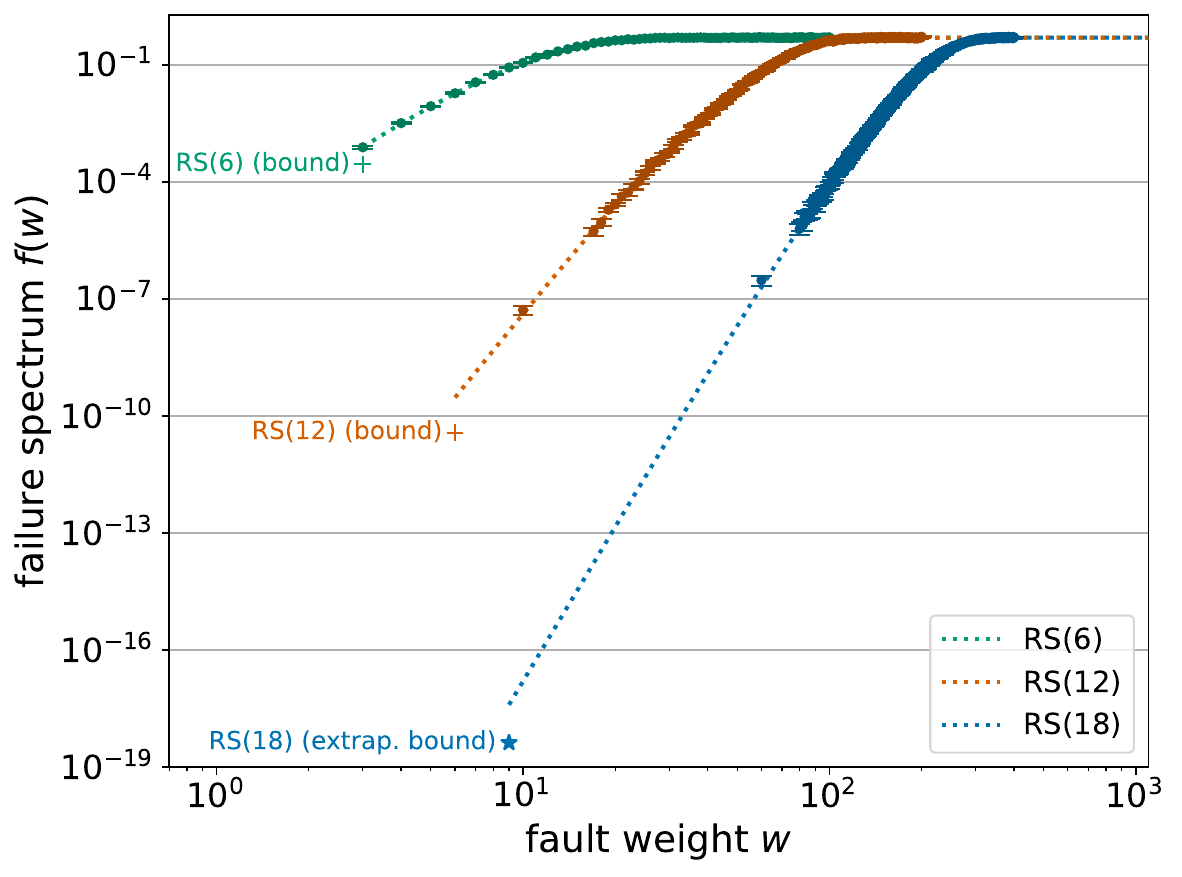}
    \includegraphics[width=0.49\linewidth]{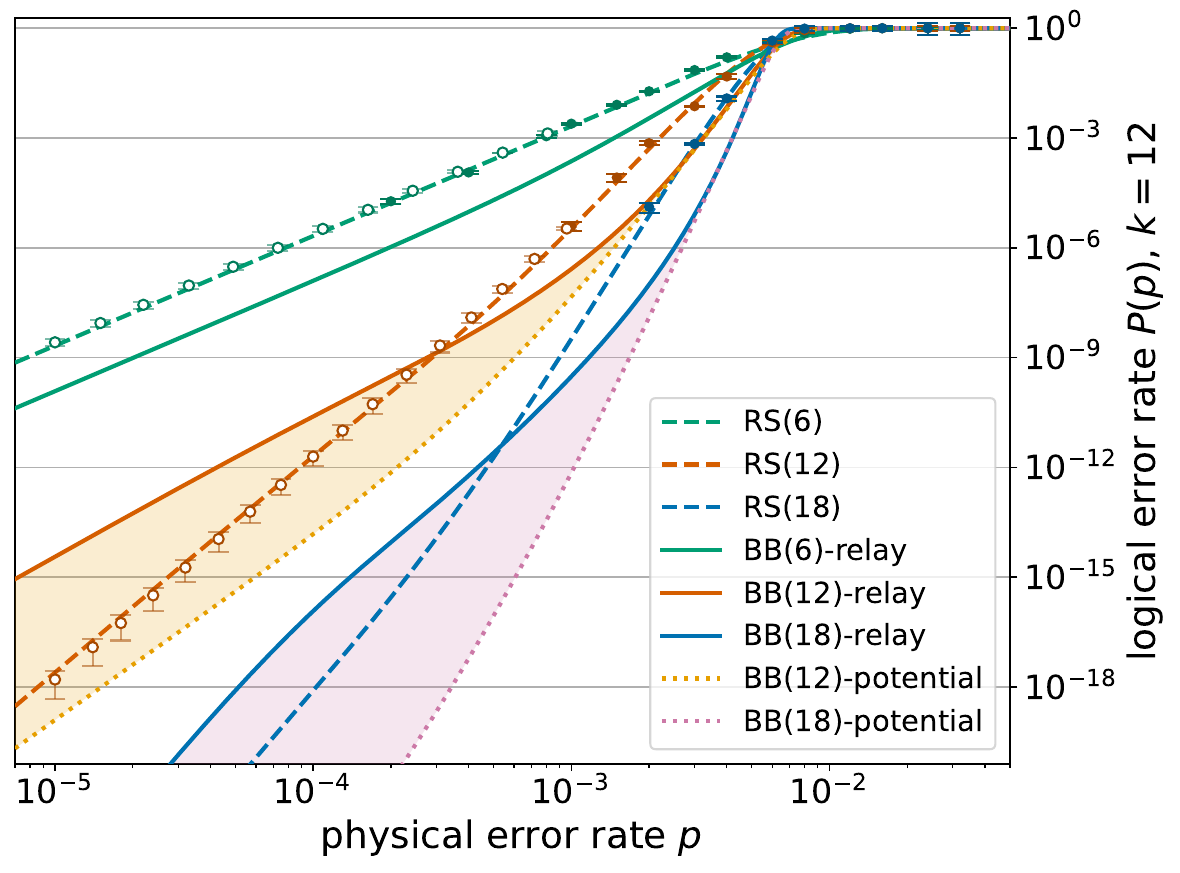}
    \caption{
    \textbf{Surface codes versus bivariate bicycle codes.}
    Sampled failure-spectrum (left) and logical-error-rate (right) data for rotated surface codes under circuit noise, with fits of $f^{(5)}_{\text{ansatz}}(w)$ and $w_0 = D/2 = d/2$. 
    We include multi-seeded splitting data for RS(6) and RS(12) (hollow markers), which aligns with the ansatz extrapolations.
    Relay and potential ansatz fits for bivariate bicycle codes from \fig{final-results-decoder-comparison} are overlaid for comparison. 
    The surface-code data trend toward an onset slightly above the bound, consistent with matching being min-weight but not max-class. 
    Extrapolations suggest that BB(12) and BB(18) decoded with Relay will intersect that of the surface code at some point in the range $10^{-4} < p < 10^{-3}$, but could outperform surface codes by orders of magnitude in this regime with improved decoding.
    }
    \label{fig:final-results-code-comparison}
\end{figure}

\textbf{Concluding remarks.}
For several QEC systems we have shown that our three complementary techniques: the failure-spectrum ansatz, min-weight analysis, and multi-seeded splitting provide descriptions which are consistent with each other and with Monte Carlo across regimes. 
The close agreement between these techniques indicates that together they offer a reasonably reliable characterization of these QEC systems.

To study a new QEC system, if resources allow, we recommend applying all methods and checking for consistency. 
Each technique has weaknesses: the ansatz may not capture the true behavior when extrapolated too far beyond the data, the onset bounds may not be tight if not all min-weight logicals were found, and the Markov chains in multi-seeded splitting may fail to mix. 
Agreement between methods gives confidence.
A gap between the optimal onset bound and the onset which is observed or projected from the ansatz can indicate opportunities to improve the QEC system's decoder. 

Future improvements should aim to make these techniques faster and more accurate. 
This includes gathering larger datasets using faster decoders on large compute clusters for improved ansatz fits, developing faster methods for identifying min-weight logicals for improved bounds, and modifying the transition function and incorporating parallel tempering to improve multi-seeded splitting. 

Our techniques apply to noise models in which faults occur with arbitrary relative probabilities scaled by a global parameter $p$, though several challenges remain.
If the relative probabilities change, the onset bounds can be updated directly without re-identifying min-weight logicals; however, failure-spectrum data must be resampled, and splitting chains rerun.
Extensions could adapt these techniques more efficiently to noise models governed by multiple global parameters.
Even more ambitiously, one could seek generalizations of these techniques which can handle adaptive circuits.

\section*{Acknowledgments}
We would like to thank Sergey Bravyi and Giacomo Fregona for discussions regarding the splitting method and its application to BB codes.
We would like to thank Patrick Rall for discussions, assistance with numerical implementation of importance sampling and providing an initial Lagrange multiplier method example which led to the analysis in \app{LMmethod}.
We thank Tristan M\"{u}ller for sharing the parameterizations of Relay-BP used in Ref.~\cite{RelayPaper} and advice for running the decoder.
We thank Lev Bishop for interesting discussions on various aspects of this work.

\clearpage
\appendix
\section{Appendix}
\subsection{Correlated errors in separate X and Z decoding}
\label{app:XYZ-vs-XZ}

In most of this paper, we use CSS decoding and decode X and Z components separately.
Moreover, we treat the generation of noise of X and Z components  separately too, declaring failure for the X system if a logical X error occurs ignoring the Z system. 
In this appendix we relax that approximation and analyze the impact of correlations of the noise when the X and Z systems are decoded separately.

A naïve estimate of the overall failure rate would be $\hat{P}_X(p)+\hat{P}_Z(p)$, where $\hat{P}_X$ and $\hat{P}_Z$ are computed for $X$ and $Z$ decoding in isolation. 
This ignores correlations from $Y$-type errors that correlate errors that arise in the $X$ and $Z$ components. 
A consistent procedure is to sample the joint noise (including $X$–$Z$ correlations), extract the $X$ and $Z$ components, decode them independently, and declare failure if either has a logical error.
These considerations also arise when considering the failure spectrum, and for the optimal onset bounds. 

 
To handle correlated noise, we sample a full error vector $b\in\{0,1\}^N$ on the check matrix $H$: either i.i.d.\ with per-fault probability $p/15$ (for $P(p)$) or uniformly among weight-$w$ vectors (for $f(w)$). 
We then extract $b_X$ and $b_Z$ (some entries are shared), form syndromes $\sigma_X=H_X b_X$ and $\sigma_Z=H_Z b_Z$, and decode each part separately to obtain $\hat b_X=\mathcal{C}(\sigma_X)$ and $\hat b_Z=\mathcal{C}(\sigma_Z)$. 
The $X$ ($Z$) part succeeds iff $H_X\hat b_X=H_X b_X$ and $A_X\hat b_X=A_X b_X$ (resp.\ $H_Z\hat b_Z=H_Z b_Z$ and $A_Z\hat b_Z=A_Z b_Z$). 
A sample is marked as a fail if either part fails. 
Results for the BB-circuit systems appear in \fig{decoder_comp}, showing behavior consistent with the $Z$-only case in the main text.

We now explain an approach to use bounds on the optimal onsets of the failure spectra of the X and Z systems $f_X(w)$ and $f_Z(w)$ to infer a bound on the optimal onset for the failure spectrum of the full system $f(w)$.
Our approach is based on the approximation $\hat{P}(p) = \hat{P}_X(p) + \hat{P}_Z(p)$, and also $\hat{P}_X(p) = \sum_{w=w_0} \hat{f}_X(w) {N_X \choose w} p^w (1-p)^{N_X-w}$ and $\hat{P}_Z(p) = \sum_{w=w_0} \hat{f}_Z(w) {N_Z \choose w} p^w (1-p)^{N_Z-w}$.
The onset fraction can be decomposed as $|F(w_0)| = |F_X(w_0)| + |F_Z(w_0)|$. 
Assuming $|F_X(w_0)| = |F_Z(w_0)|$ and noting $f_Z(w_0) = \frac{|F_Z(w_0)|}{{N_Z \choose w_0}}$ then:
\begin{equation}
f_Z(w_0) = \frac{{N \choose w_0} f(w_0)-|F_X(w_0)|}{{N_Z \choose w_0}} = \frac{f(w_0)}{2}\frac{{N \choose w_0}}{{N_Z \choose w_0}}.
\end{equation}
Using this relationship we estimate the bounds, shown in \fig{decoder_comp}, from the $Z$ and $X$ optimal bound in the paper. 
The BB(6)-circuit bound on the onset is in close agreement with the observed onset estimated for Relay, similar to what is seen for the $Z$ system in the main text. 
The fit parameters are provided in \tab{circ_system_fits_XnZ}. 

\begin{figure}
    \centering
    (a)\includegraphics[width=0.4\linewidth]{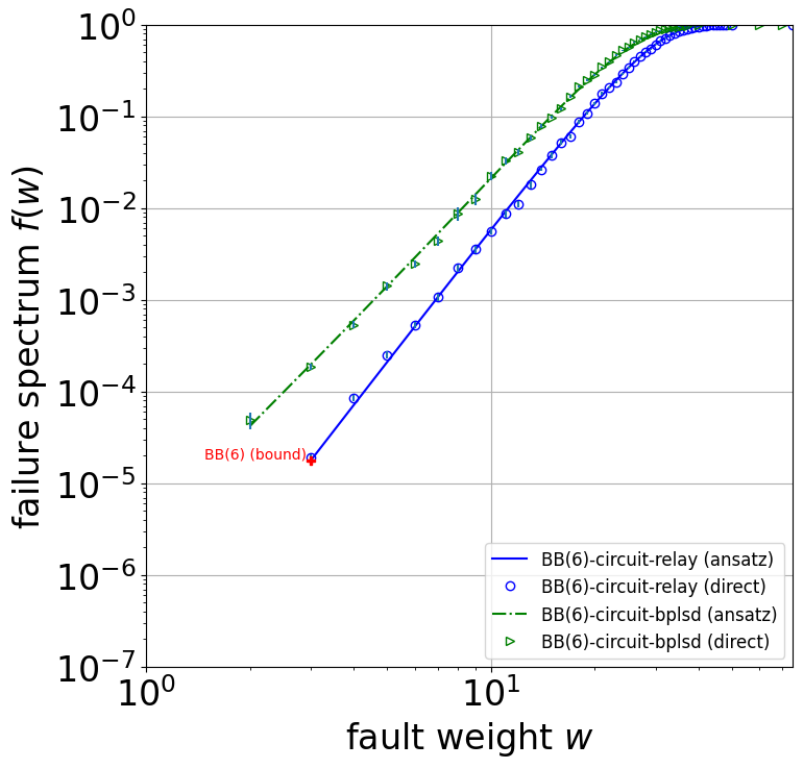}
    (b)\includegraphics[width=0.4\linewidth]{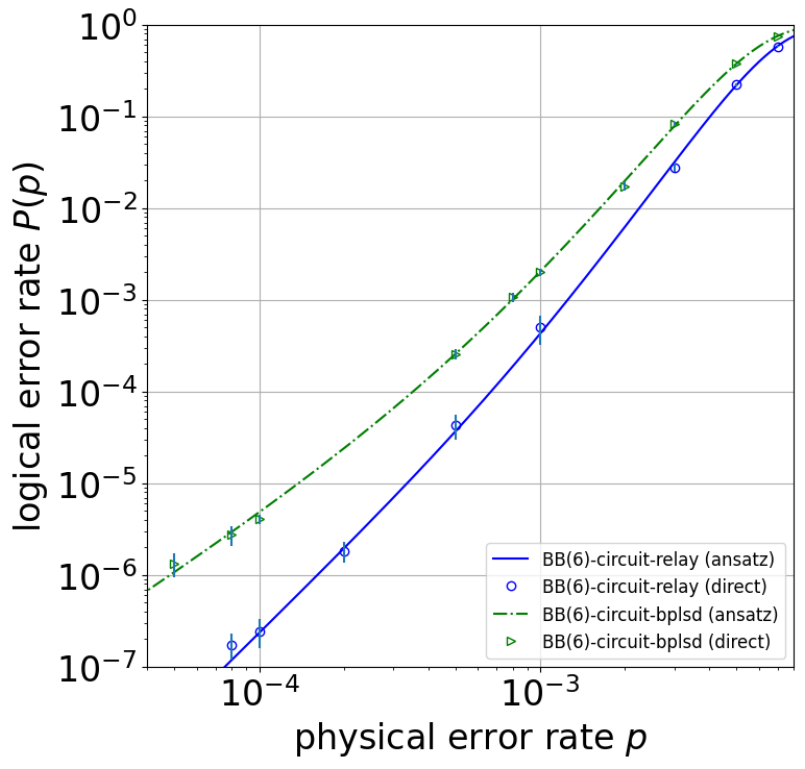}
    (c)\includegraphics[width=0.4\linewidth]{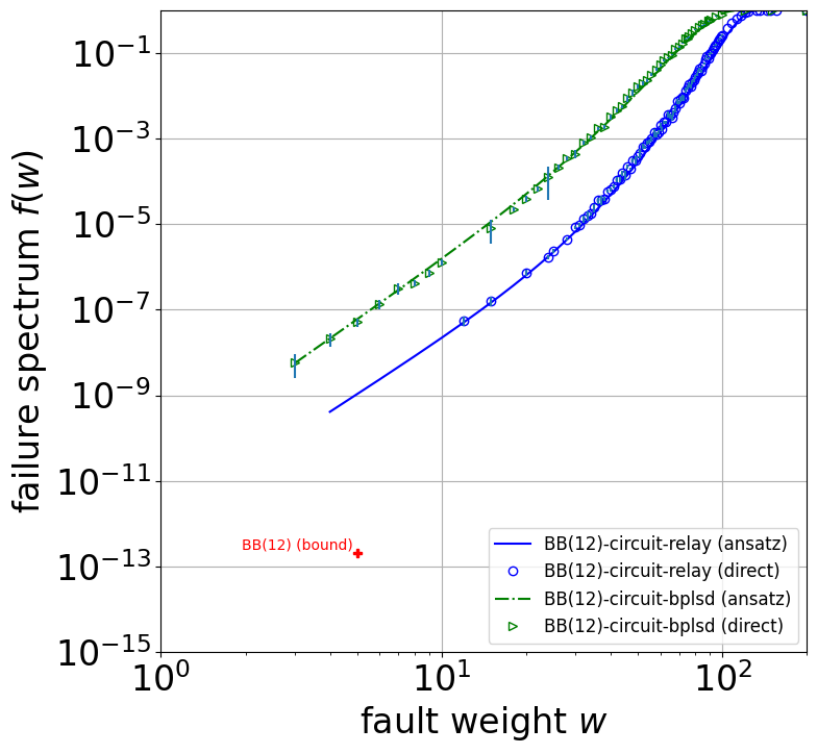}
    (d)\includegraphics[width=0.4\linewidth]{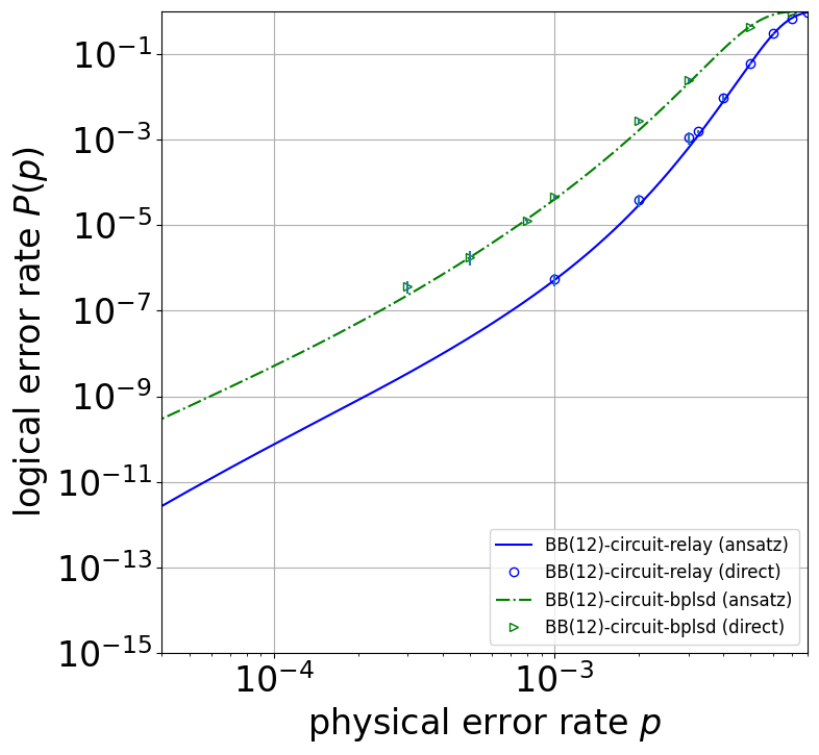}
    (e)\includegraphics[width=0.4\linewidth]{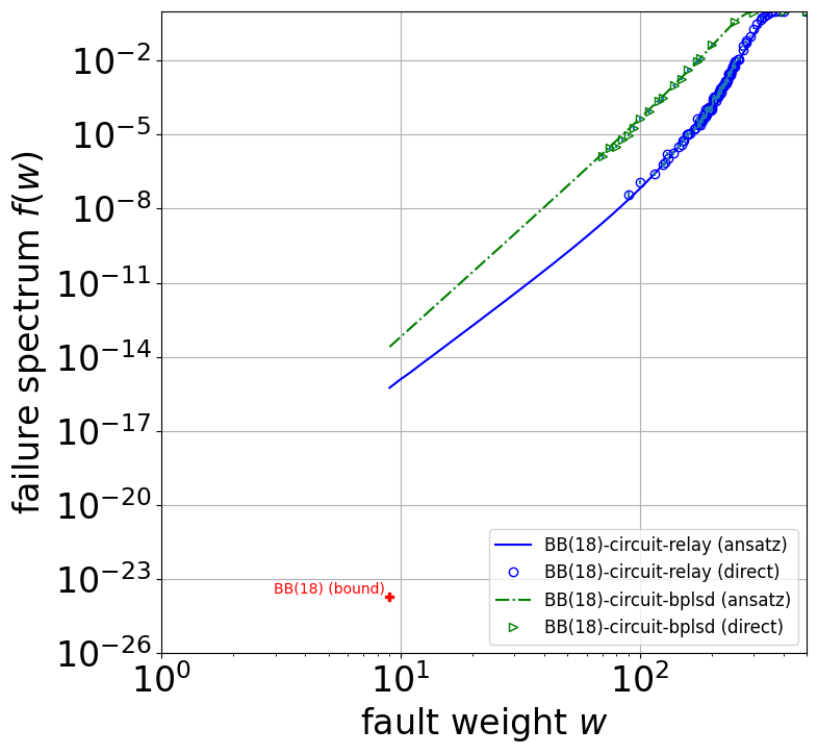}
    (f)\includegraphics[width=0.4\linewidth]{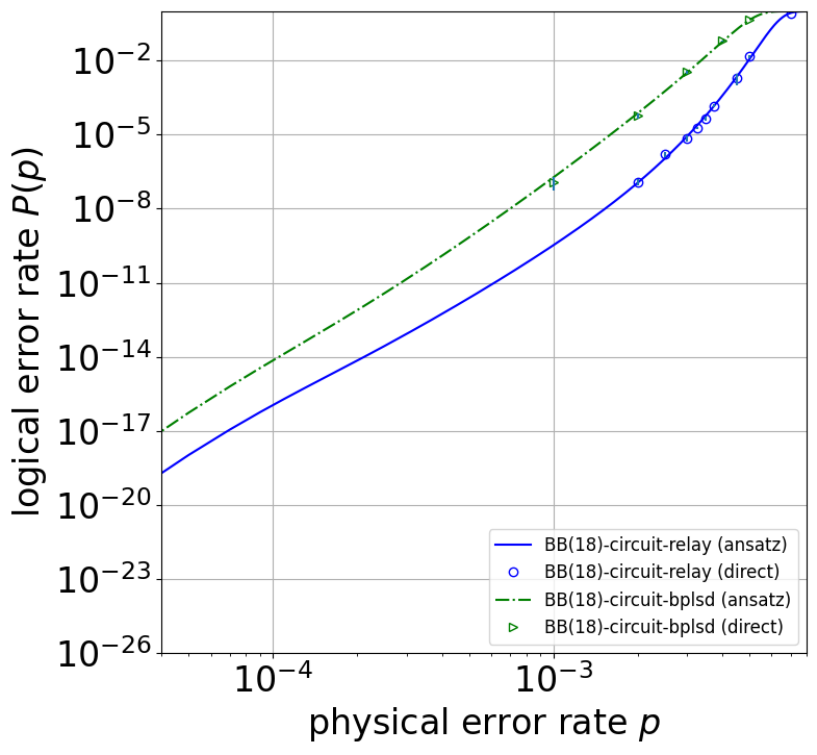}
    \caption{Logical error, $P(p)$, and failure fraction, $f(w)$, for BB(6), BB(12), BB(18) circuits for correlated noise but separate $X$ and $Z$ decoding using either BP-LSD or Relay. 
    The combined $f(w)$ and $P(p)$ data are fit to $f^{(5)}_\text{ansatz}(w)$, which is used to estimate $P(p)$ in the low noise regime and $f(w)$ for values of $w$ that are too small to practically sample. Optimal onsets for the BB(6) and BB(12) circuits are also shown (red crosses). 
    Fit parameters are listed in \tab{circ_system_fits_XnZ}.}
    \label{fig:decoder_comp}
\end{figure}

\begin{table}[h]
    \centering
    \begin{tabular}{lcccccc}
        \toprule
         & \multicolumn{5}{c}{\textbf{BB(6)-circuit}}  \\
        & $\hat{f_0}$ & $\hat{\gamma_1}$ & $\hat{\gamma_2}$& $w_0$ & $w_c$  & $N$ \\
        \midrule
        $\textbf{BP-LSD}$ & $1.8\times10^{-5}$ & $4.9$ & $4.4$ & $2$ & $15$ & $63936$ \\
        $\textbf{Relay}$ & $1.0\times10^{-5}$ & $3.3$ & $4.0$ & $3$ & $10$ & $63936$\\
        \midrule
        & \multicolumn{5}{c}{\textbf{BB(12)-circuit}} \\
        $\textbf{BP-LSD}$ & $5.6\times10^{-9}$ & $4.6$ & $8.1$ & $3$ & $39$ & $255744$\\
        $\textbf{Relay}$ & $4.1\times10^{-10}$ & $4.2$ & $12$ & $4$ & $48$ & $255744$ \\     
        \midrule
         & \multicolumn{5}{c}{\textbf{BB(18)-circuit}} \\
        $\textbf{BP-LSD}$ & $2.5\times10^{-14}$ & $8.7$ & $18$ & $9$ & $400$ & $762912$ \\
        $\textbf{Relay}$ & $5.5\times10^{-16}$ & $7.2$ & $27$ & $9$ & $275$ & $762912$\\
        \bottomrule
    \end{tabular}

    \caption{Fit parameters of $f^{(5)}_\text{ansatz}(w)$ for BB-circuit systems for correlated noise but separate $X$ and $Z$ decoding. 
    }
    \label{tab:circ_system_fits_XnZ}
\end{table}

We note there are several small differences in the QEC systems analyzed in this appendix compared to the rest of the paper: 
(1) the check matrices analyzed here were generated using STIM \cite{gidney2021stim} using circuit schedules in Ref.~\cite{BicycleArchitecturePaper}, rather than using the code in Refs.~\cite{bravyi2024high,bravyi2024github} as in the rest of the paper.  
(2) Here, decoder priors for $P(p)$ were adjusted to match $p$ for direct Monte Carlo sampling and were set using error rate $p=10^{-4}$ for importance sampling.
In the rest of the paper, we set the priors using error rate $p=10^{-4}$ for all decoding instances.
We do not observe any noticeable inconsistencies in the data that arise from these slightly different versions.

\clearpage
\subsection{Diversity of failure spectrum shapes}
\label{app:beyond-monomial}

In \sec{ansatz-approach}, \fig{QEC-system-examples-bitflip} and \fig{QEC-system-examples-circuit} present the failure spectra of several QEC systems, revealing a variety of shapes, some more intricate than $f_\text{model}(w)$ that we derived for the simple min-fail enclosure model in \sec{qec-system-model}.
Here, we provide additional detail on the mechanisms that can give rise to these differing forms of failure spectra.

Firstly, we note that the defining assumptions of the min-fail enclosure model are clearly too simple to accurately describe a real QEC system.
These assumptions are that (1) a bitstring fails if and only if it encloses a min-weight failing bitstring, and (2) min-weight failing bitstrings are independent.
We made an approximation in \eq{model-line4} when deriving an expression for $f_\text{model}(w)$ where we replaced $\binom{w}{w_0}$ by $(w/w_0)^{w_0}$, which we will refer to as assumption (3).

The shape of $f_\text{model}(w)$ is $~ (w/w_0)^{w_0}$, before turning over for large values of $w$, which is a straight line with gradient $w_0$ on a log-log plot.
We observe that the shapes of many of the QEC system failure spectra in \fig{QEC-system-examples-bitflip} and \fig{QEC-system-examples-circuit} have the straight-line shape, but with gradient is typically larger than $w_0$.
One explanation for this could be that assumption (1) is not accurate, and that there are large numbers of irreducible failing bitstrings with weight larger than $w_0$ which result in a larger gradient.

\paragraph{Binomial behavior before the asymptotic regime}
Another feature we observe is that some systems exhibit a failure spectrum that initially increases more steeply before transitioning to a shallower slope. One possible explanation is that assumption (3) does not hold in these cases. 
In \fig{binomial-fits}, we compare the binomial function $\binom{w}{w_0}$ with the simpler approximation $(w/w_0)^{w_0}$ and with the more flexible form $(w/w_0)^{\gamma}\left(\frac{1 + (w/w_c)^2}{1 + (w_0/w_c)^2}\right)^{(w_0 - \gamma)/2}$ used in $f^{(5)}_\text{ansatz}(w)$ in \eq{model-ansatz}. 
The latter provides a significantly better match to the binomial behavior across the full range of $w$, which encourages us that $f^{(5)}_\text{ansatz}(w)$ is expressive enough to make up for any inaccuracy introduced by assumption (3) in our model.

\begin{figure}[ht]
    \centering
    \includegraphics[width=0.9\linewidth]{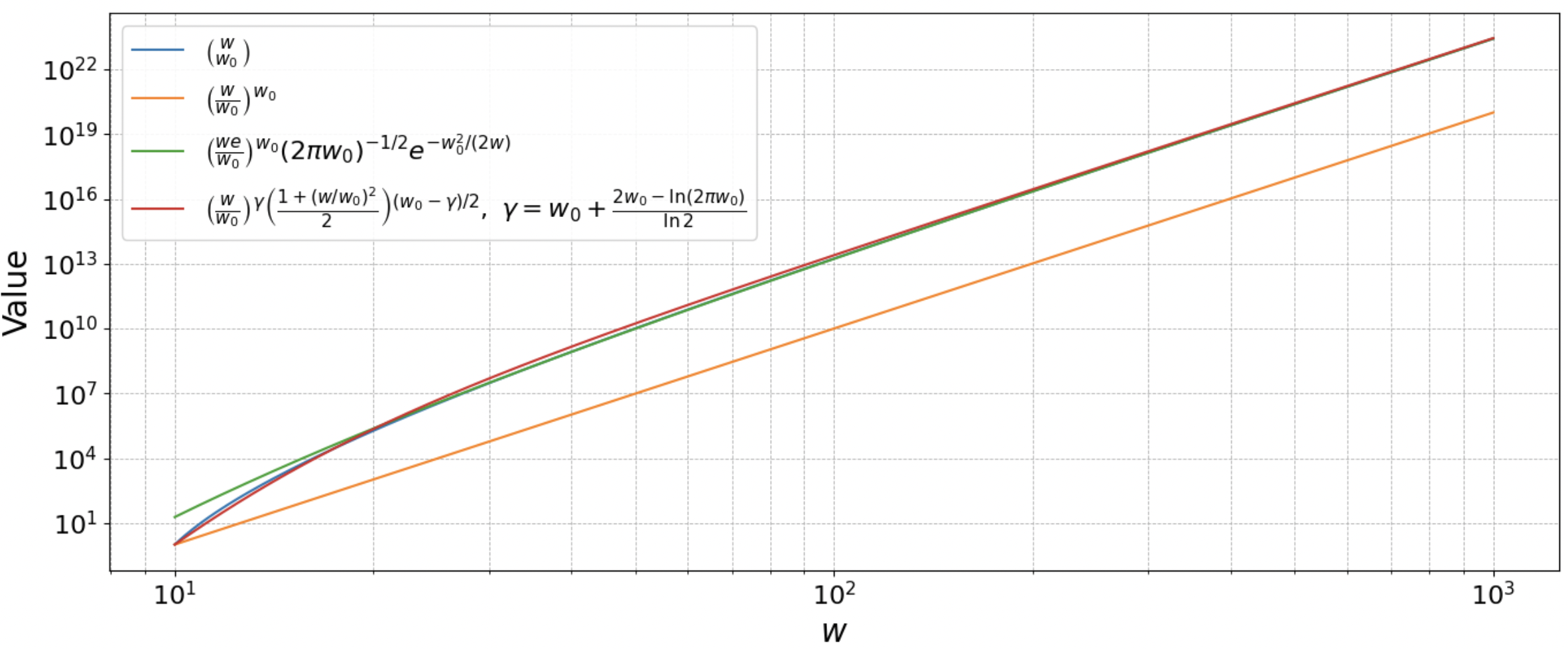}
    \caption{
    Comparison of the binomial function $\binom{w}{w_0}$ with several analytic parameterizations ($w_0 = 10$ for illustration). 
    The simple power-law form used in $f_\text{model}(w)$ matches exactly at $w = w_0$ but underestimates for large $w$. 
    The asymptotic form captures the correct large-$w$ behavior but deviates near $w = w_0$. 
    The more flexible form used in $f^{(5)}_\text{ansatz}(w)$ provides an accurate interpolation across the full range of $w$. 
    }
    \label{fig:binomial-fits}
\end{figure}

\paragraph{Composite QEC systems}
In some of the failure spectra in \fig{QEC-system-examples-circuit}, we observe a shallow gradient at lower $w$ on a log-log plot which increases for larger $w$. 
Here we consider a thought experiment to form a potential explaination of this behavior. 
Consider two QEC systems, $A$ and $B$, each well described by a constant $\gamma$ ansatz.
For simplicity, we assume that  
\begin{eqnarray}
f^A(w) &= f_0^A \left(\frac{w}{w_0^A} \right)^{\gamma^A}, \
f^B(w) &= f_0^B \left(\frac{w}{w_0^B} \right)^{\gamma^B},
\end{eqnarray}
where $f^A(w)$ and $f^B(w)$ vanish below $w_0^A$ and $w_0^B$, respectively, and $A$ and $B$ have $N^A$ and $N^B$ faults.

We can then consider a composite system $AB$ with $N^A + N^B$ faults and ask what its failure spectrum $f^{AB}(w)$ is.
The number of weight-$w$ failure configurations can be approximated as
\begin{eqnarray}
f^{AB}(w) \approx \sum_{w'}  \frac{\left[ f^{A}(w')\binom{N^A}{w'} \binom{N^B}{w-w'} + f^{B}(w')\binom{N^B}{w'} \binom{N^A}{w-w'} \right] }{\binom{N^A+N^B}{w}}.
\label{eq:composite-failure-spectrum}
\end{eqnarray}
In typical cases, the two terms in the numerator of~\eqref{eq:composite-failure-spectrum} are sharply peaked around distinct values, $w^A(w)$ and $w^B(w)$, indicating regimes where either subsystem $A$ or $B$ dominates the sum. 
A standard statistical mechanics approach is to estimate such sums with a second-order expansion of the logarithm around the peak, yielding tractable Gaussian integrals. 
The crossover point $w_c$ could then be estimated by solving a transcendental equation equating dominant contributions. We do not pursue this analysis, as it adds limited insight, and instead illustrate the behavior with two examples in \fig{enhanced-functional-complexity}.

\begin{figure}[h]
    \centering
    (a)\includegraphics[width=0.45\linewidth]{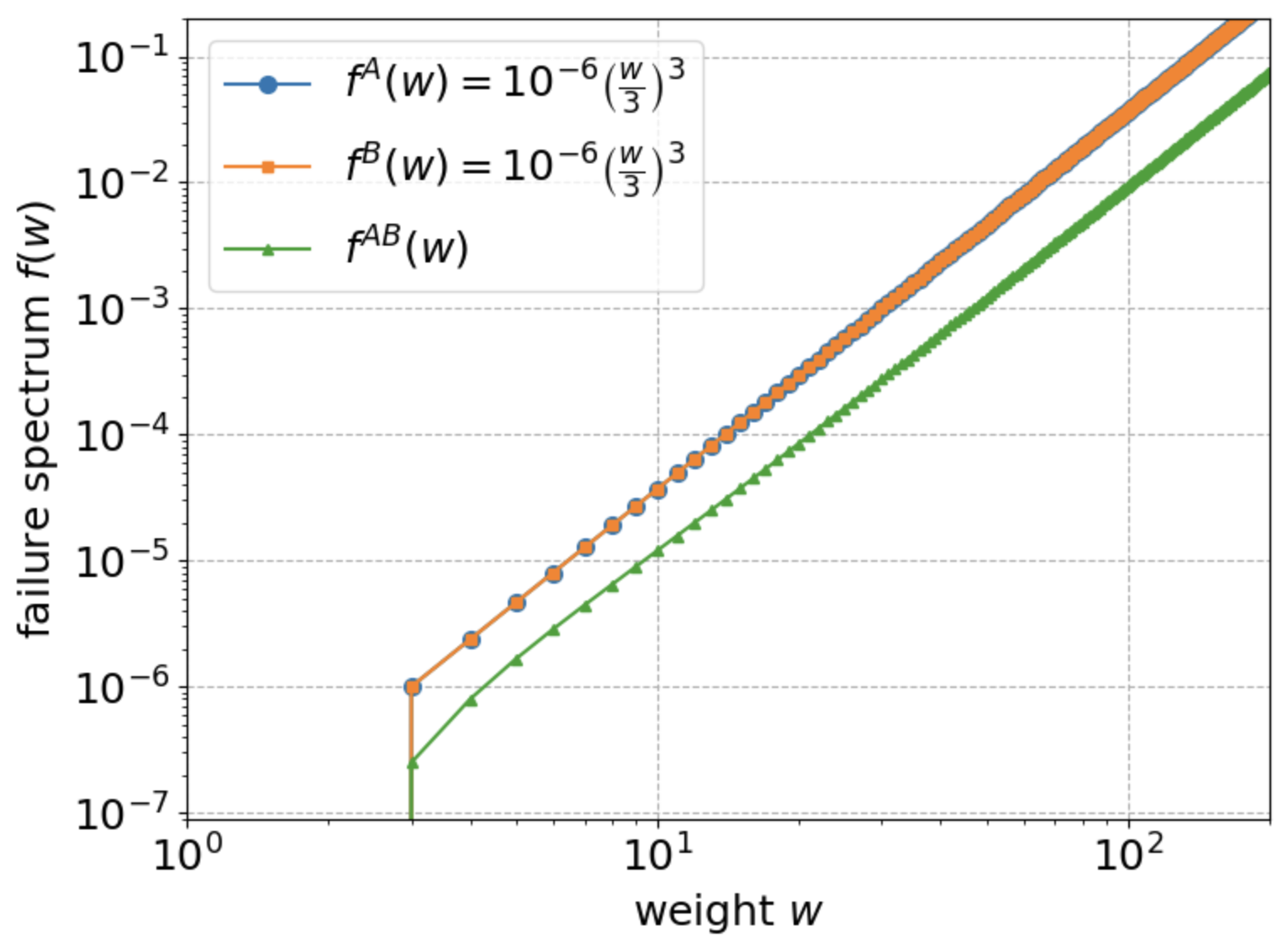}
    (b)\includegraphics[width=0.45\linewidth]{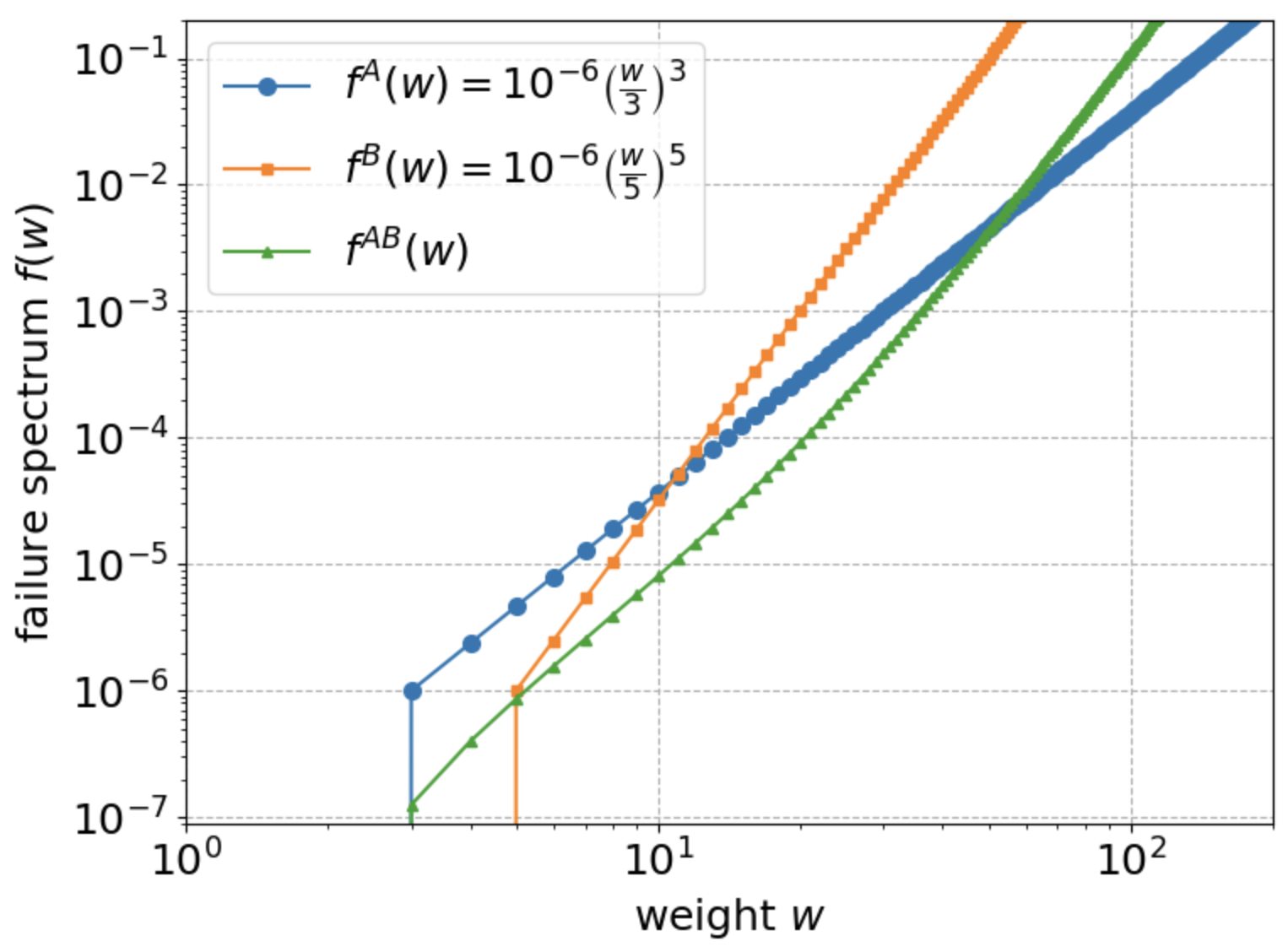}
    \caption{
    Failure spectra $f^A(w)$, $f^B(w)$, and $f^{AB}(w)$ for subsystems $A$, $B$ (each with $N^A = N^B = 1000$ faults), and their composite $AB$ ($N^{AB} = 2000$).
    (a) When $A$ and $B$ follow the same power law, $f^{AB}(w)$ is approximately a power law with the same slope, aside from deviations at small $w$.
    (b) When $A$ and $B$ follow different power laws, $f^{AB}(w)$ is no longer well described by a single power law and exhibits a crossover between the two subsystem slopes.
    }
    \label{fig:enhanced-functional-complexity}
\end{figure}

\paragraph{$\bar{Y}_1$ in-module measurement on the gross code}
In \cite{BicycleArchitecturePaper}, the authors observed some deviation from the ansatz fit at higher weights $w > 80$ for the $\bar{Y}_1$ in-module measurement on the gross code and excluded those points from the fit. 
This is a significant example of behavior where the failure spectrum starts with a shallow gradient at lower $w$ on a log-log plot and increases for larger $w$.
Here we present an extended set of simulation results for the same $\bar{Y}_1$ measurement circuits over fault weights between $15$ and $200$. The extended data is listed in Table~\tab{Ysystemdata}. In this new data set, there are enough samples to see more than 100 failures per fault weight for most data points. Furthermore, logical failures are categorized according to their type: memory errors where one or more of the 12 logical qubits is incorrect (with or without outcome errors), outcome errors where the observed $\bar{Y}_1$ eigenvalue is incorrect (with or without memory errors), and collective failures of both memory and outcome.

\begin{table}[h]
\begin{center}
\begin{tabular}{ccc|ccc|c}
$w$ & $T(w)$ & $F(w)$ & $F_\text{mem}(w)$ & $F_\text{out}(w)$ & $F_\text{both}(w)$ & $\bar{T}_\text{dec}$ \\ \hline
15 & 102753082 & 75 & 52 & 47 & 24 & 0.7 \\
20 & 40933905 & 82 & 58 & 38 & 14 & 0.8 \\
25 & 19914819 & 120 & 73 & 73 & 26 & 1.5 \\
30 & 8789630 & 99 & 57 & 63 & 21 & 1.3 \\
35 & 5557862 & 109 & 53 & 71 & 15 & 1.8 \\
40 & 2233581 & 107 & 31 & 86 & 10 & 1.9 \\
45 & 1077153 & 102 & 34 & 80 & 12 & 3.0 \\
50 & 623354 & 115 & 31 & 94 & 10 & 3.8 \\
53 & 623354 & 127 & 24 & 112 & 9 & 3.9 \\
55 & 360738 & 109 & 32 & 85 & 8 & 3.4 \\
58 & 250513 & 106 & 21 & 88 & 3 & 5.4\\
60 & 208761 & 108 & 23 & 93 & 8 & 5.9 \\
70 & 83898 & 121 & 25 & 103 & 7 & 7.3 \\
80 & 33718 & 111 & 32 & 89 & 10 & 15.0 \\
90 & 7844 & 100 & 53 & 67 & 20 & 48.7 \\
100 & 1826 & 94 & 72 & 38 & 16 & 81.8 \\
110 & 426 & 88 & 80 & 29 & 21 & 235.2 \\
120 & 120 & 56 & 54 & 18 & 16 & 189.5 \\
130 & 100 & 83 & 82 & 28 & 27 & 241.3 \\
140 & 100 & 97 & 97 & 37 & 37 & 307.4 \\
150 & 100 & 100 & 100 & 38 & 38 & 326.4 \\
200 & 100 & 100 & 100 & 38 & 38 & 306.4
\end{tabular}
\caption{Extended Monte-Carlo fault sampling data for the $\bar{Y}_1$ in-module measurement defined in \cite{BicycleArchitecturePaper}. The columns from left to right are: weight $w$, trials $T(w)$, failures $F(w)$, failures with memory errors with or without outcome errors $F_\text{mem}(w)$, failures with outcome errors with or without memory errors $F_\text{out}(w)$, failures with both memory and outcome errors $F_\text{both}(w)$, and the mean decoding time in seconds per sample $\bar{T}_\text{dec}$. The failure counts satisfy $F(w)=F_\text{mem}(w)+F_\text{out}(w)-F_\text{both}(w)$.\label{tab:Ysystemdata}}
\end{center}
\end{table}

The failure spectrum of the $\bar{Y}_1$ in-module measurement is plotted in \fig{Ysystemplot} (left). We fit each of the ansatz $f^{(i)}_{\textrm{ansatz}}(w)$ from \sec{model-ansatz}, finding poor fits with the $i=2,3$ parameter forms and significantly better fits using the $i=5,6$ parameter forms. To gain some insight into the shape of the failure spectrum, we plot the marginal failure spectra $f_\text{mem}(w)$ and $f_\text{out}(w)$ in \fig{Ysystemplot} (right). These are estimated as $f_\text{type}(w)=F_\text{type}(w)/T(w)$ for each $w$, where $F_\text{type}$ is the number of trials that resulted in decoder failure with a logical action of the given type, and $T(w)$ is the total number of trials. We fit each spectrum to the ansatz $f^{(6)}_{\textrm{ansatz}}(w)$. The inset of \fig{Ysystemplot}(b) plots the ratio $f_\text{mem}(w)f_\text{out}(w)/f_\text{both}(w)$, which provides some measure of how independent memory and outcome failures are as a function of the fault weight. We observe a transition near the fault weight $w=100$ that roughly coincides with a) crossing of $f_\text{mem}(w)$ and $f_\text{out}(w)$ and b) decoder failure probabilities that quickly exceed $20\%$ as $w$ increases.

\begin{figure}
    \centering
    \begin{subfigure}[t]{0.5\textwidth}
        \centering
        \includegraphics[height=2.2in]{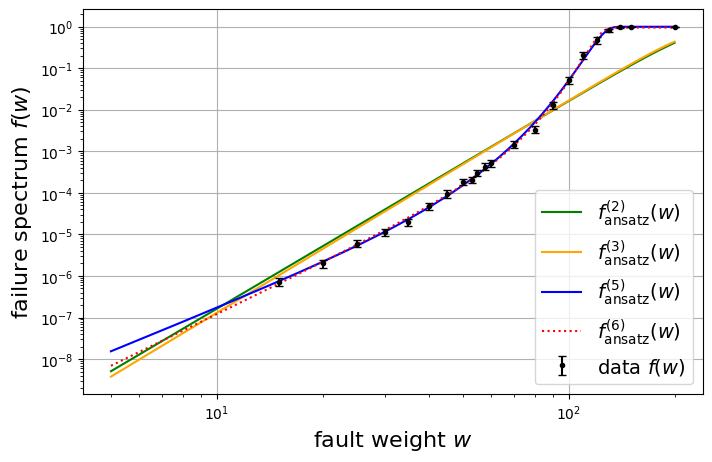}
    \end{subfigure}%
    ~ 
    \begin{subfigure}[t]{0.5\textwidth}
        \centering
        \includegraphics[height=2.2in]{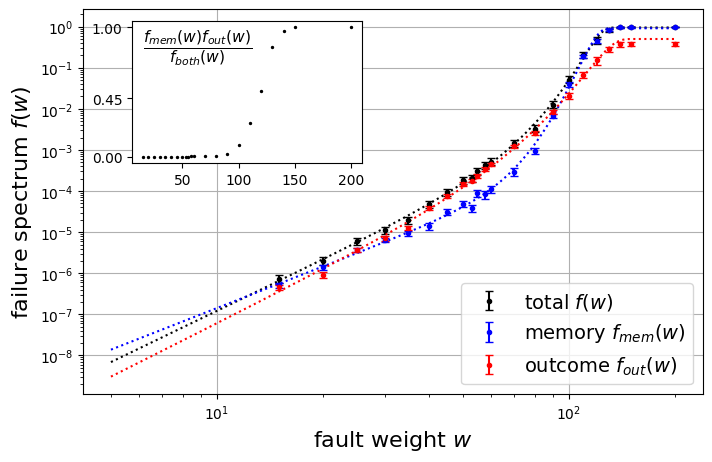}
    \end{subfigure}
    \caption{Failure spectra and ansatz fits for the $\bar{Y}_1$ in-module measurement on the gross code. (left) Each ansatz $f^{(i)}_{\textrm{ansatz}}(w)$ from \sec{model-ansatz} is fit to the data $f(w)=F(w)/T(w)$ in \tab{Ysystemdata}. (right) The ansatz $f^{(6)}_{\textrm{ansatz}}(w)$ is fit to the total, memory, and outcome failure spectra. The inset shows a plot of $f_\text{mem}(w)f_\text{out}(w)/f_\text{both}(w)$ as a measure of independence of memory and outcome failures.\label{fig:Ysystemplot}}
\end{figure}

\clearpage
\subsection{Comparison of failure spectrum ansatz to other fit formulas}
\label{app:fit-comparisons}
There are several fitting forms for $P(p)$ in the literature. 
A ubiquitous option is a simple power law $\kappa(\frac{p}{p_\text{th}})^{\frac{d+1}{2}}$ which finds utility in a number of instances \cite{fowler2012surface} but is limited considering that frequently $P(p)$ will have a more complex dependence on $p$. 
Alternatives with more parameterization have also been suggested. 
Here we show a comparison of a fitting using a power law with corrections suggested in Ref.~\cite{bravyi2013simulation}, $p^{d/2}e^{a+bp+cp^2}$, fit to Monte Carlo results of the Gross code decoded using Relay, \fig{Brav_f5_comp} (a) (black). 
The fit is compared to calculating $P(p)$ using $f^{(5)}_\text{ansatz}$, \fig{Brav_f5_comp} (a) (blue line) taking parameters that were used to fit the ansatz to the importance sampled failure spectrum shown in \fig{Brav_f5_comp} (b) (blue circles and blue line). We further show the failure spectra produced using the $f^{(5)}_\text{ansatz}$ achieved with parameters that reproduce the $P(p)=p^{d/2}e^{a+bp+cp^2}$ fit to the direct sampling of $P(p)$ \fig{Brav_f5_comp} (a) (green line) and (b) (black line). The failure spectra begin to differ at approximately $w < 20$.

Obtaining the failure fraction at $w \approx 10$ and above, for the BB(12)-circuit, allows direct calculation at $P(p) \approx 2\times10^{-4}$ and above. This requires at least $\mathcal{O}(10^{-7})$ shots. With similar direct sampling counts, values for $P(p\approx10^{-3})$ are obtainable. The importance sampling ansatz provides interpolated calculation of $P(p)$ to error rates several orders of magnitude less than direct sampling.  We further see that non-negligibly different and inaccurate predictions are produced when relying only on direct sampling of $P(p)$ fits compared to what can be obtained with importance sampling and ansatz method.

\begin{figure}[h]
    \centering
    \begin{subfigure}[b]{0.45\textwidth} 
    \centering 
    \includegraphics[width=0.84\linewidth]{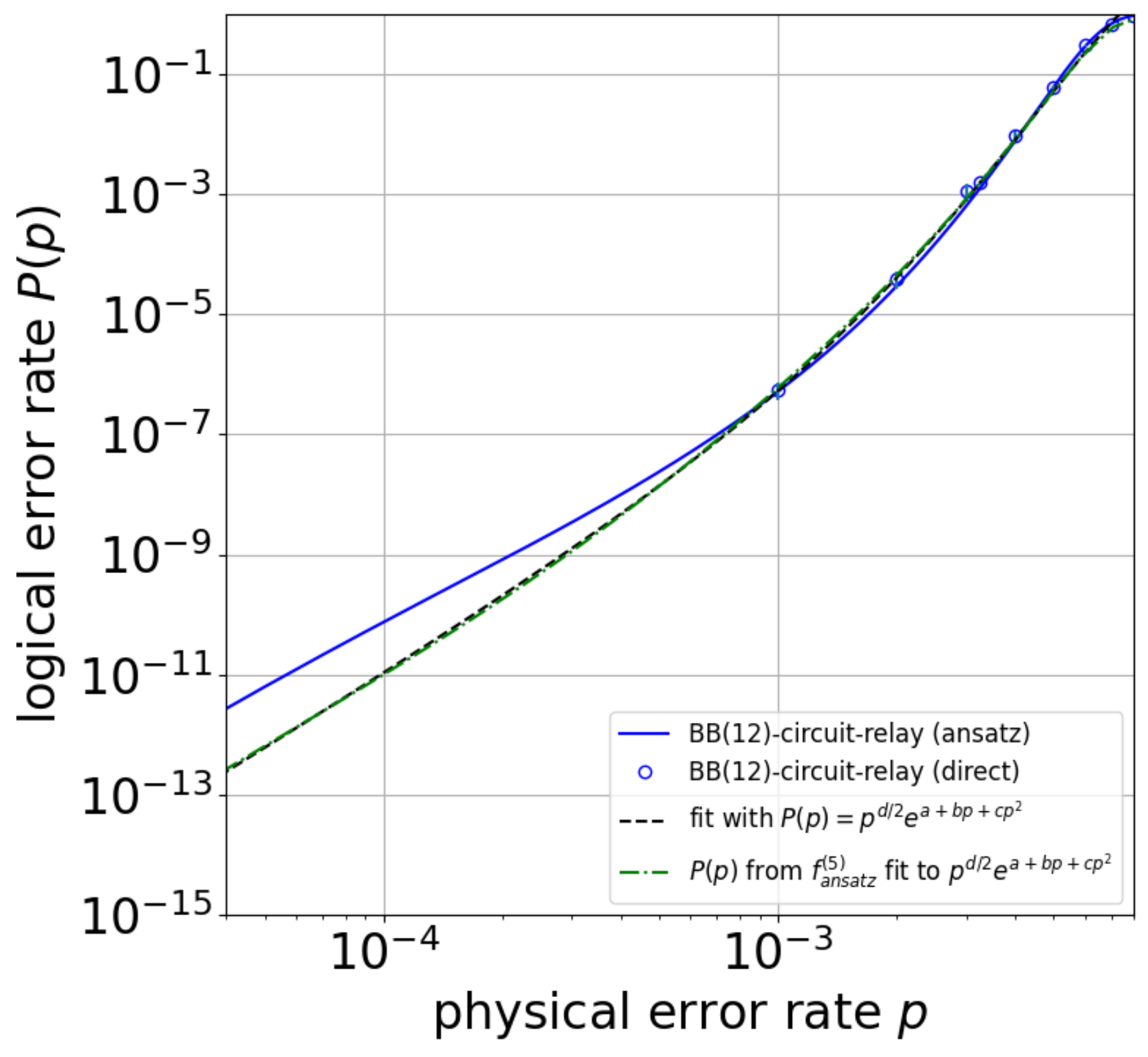}
    \caption{}
    \end{subfigure}
    \begin{subfigure}[b]{0.45\textwidth} 
    \centering 
    \includegraphics[width=0.84\linewidth]{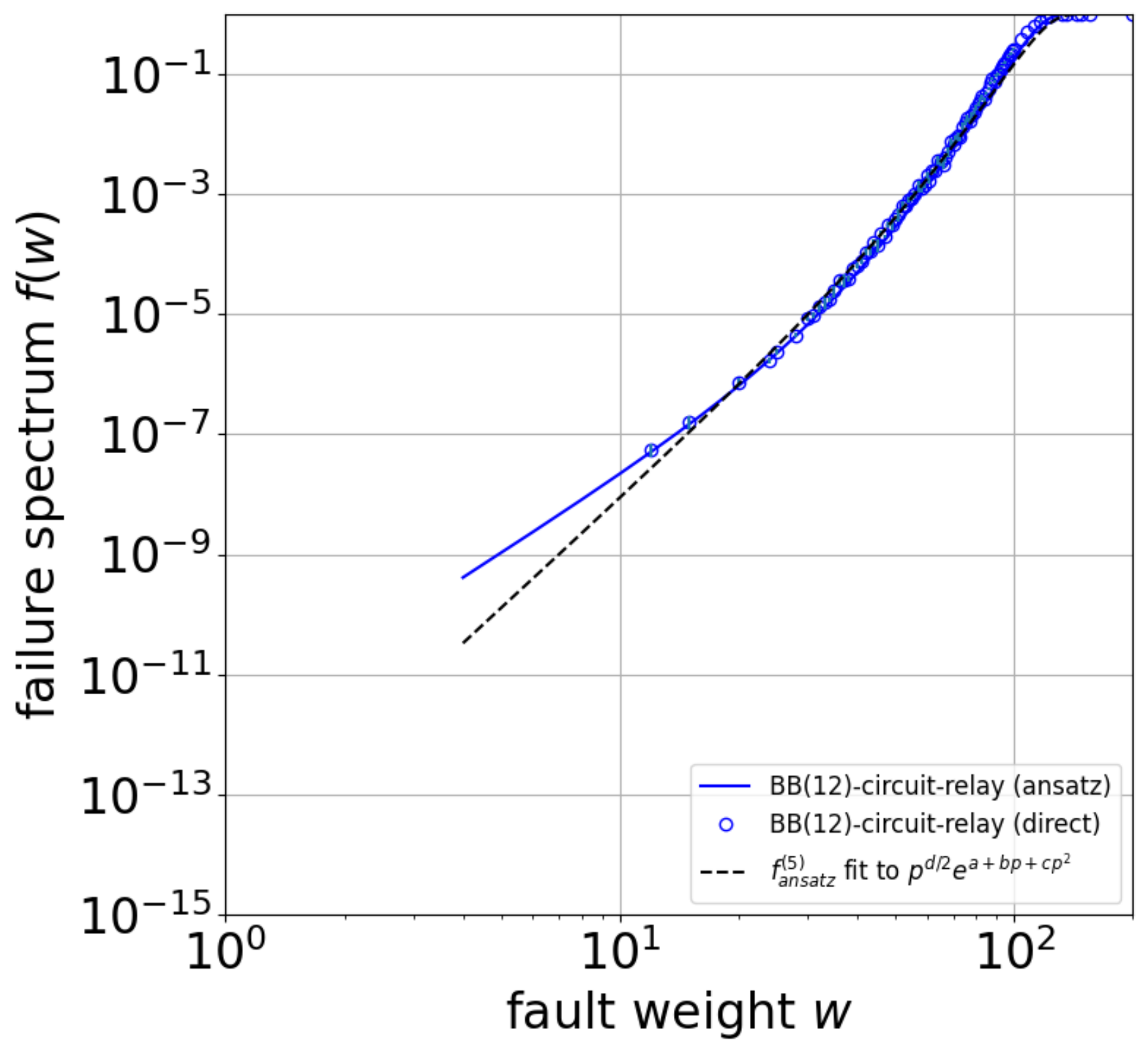}
    \caption{}
    \end{subfigure}

    \caption{Comparison of the two fitting approaches is shown in (a). The failure spectrum of the Gross code is fit using $f^{(5)}_\text{ansatz}$ (blue) (b) and $P(p)$ is obtained from $f(w)$. 
    This is compared to using $p^{d/2}e^{a+bp+cp^2}$ fit directly to the available Monte Carlo points, the most expensive number of trials being similar to that done for the failure spectrum. Also shown is an $f^{(5)}_\text{ansatz}$ fit that approximately reproduces the $p^{d/2}e^{a+bp+cp^2}$ fit (green), which highlights the differences in $f(w)$ for these two cases in the failure spectra (black and blue lines). The fit parameters for $p^{d/2}e^{a+bp+cp^2}$ are $[d,a,b,c] = [8,11.43,1881,-10303]$. The $f^{(5)}_\text{ansatz}$ fit to the modified power law for $P(p)$ are $[w_0,f_0,\gamma_1,w_c,\gamma_2] = [4,3.3\times10^{-11},6.05,68.18,11]$. The parameters for the $f^{(5)}_\text{ansatz}$ fit to the importance sampled failure spectra are the same as those indicated in table  \ref{tab:circ_system_fits_XnZ}.}
    \label{fig:Brav_f5_comp}
\end{figure}

\clearpage
\subsection{Theoretical analysis of failure-spectrum fitting}
\label{app:LMmethod}

Here we find an analytic optimization of the sample number $T=\sum_{w \in \mathcal{W}} T(w)$ in the setting where the goal is to use a failure-spectrum fit to estimate $P(p)$ at a specific value $p$ with standard deviation, $\sigma_P(p)$.

\paragraph{Lagrange multiplier method}

For this illustrative analysis we assume a priori knowledge of $P(p)$ and the $f(w)$. 
In practice, this approach might be used to update future sampling choices once initial fit parameters are obtained from early samples. 
Along these lines, a suggested approach using a heuristic seed to find an initial set of weights is discussed later in this appendix (next section).    

We now assume the observed set of decoder success and failures that produced our estimator $\hat{P(p)}$ has a standard deviation $\sigma_{0}^2(p)=\hat{P}(p)(1-\hat{P}(p))$. 
The standard error for $T_\text{MC}$ trials, up to a factor to account for the choice of confidence interval, is:
\begin{equation}
\begin{split}
    \sigma_{\hat{P}(p)}^2 = \frac{\sigma_{0}^2}{{T_\text{MC}}}=\frac{\hat{P}(1-\hat{P})}{T_\text{MC}} \\
\end{split}
\end{equation}

Importance sampling contributes uncertainty in the estimate from each of its terms. Adding the errors in quadrature:
\begin{equation}
\begin{split}
    \sigma_{\hat{P}_f}^2 = \sum \sigma^2_{\hat{P_w}} = \sum \frac{\hat{P}_w(p,w)(1-\hat{P}_w(p,w))}{T_w}  \\
\end{split}
\end{equation}
where the fail fraction, $f(w)$, enters through the importance sampled estimator for $P(p)$ as:
\begin{equation}
\label{eq:Pfestimator}
\begin{split}
    \hat{P}_f(p) = \sum_{i=w_0}^{N} \hat{P}_w(p,w) = (1-p)^{N}\sum_{i=w_0}^{N} f(w)\binom{N}{w}(\frac{p}{1-p})^{w}
\end{split}
\end{equation}
The total number of trials for sampling the single point $P(p)$ is $T_\text{IS}=\sum{T_w}$.

We now use the Lagrange multiplier method to find an optimization of the set of $T_w$ to get $T_\text{IS}$ for a particular choice of $p$ and target standard error, $\sigma^2$. We will also compare this to $T_\text{MC}$ for the same $p$ and targeted standard error. We first find a partitioning of $\sigma_f(w)$ using the Lagrange multiplier method. We define the Lagrangian $L$ as:

\begin{equation}
    \begin{split}
        L = \sum T_w - \lambda(\sigma^2-\sum \sigma^2_{\hat{P_w}}) \\
                L \approx \sum \frac{f(w)(1-f(w))}{\sigma_{f}^2(w)} - \lambda(\sigma^2 - \sum (\frac{\hat{P}(p,w)}{T_w })^2 ) \\
        L \approx \sum \frac{f(w)}{\sigma_f^2(w)} - \lambda(\sigma^2 - \sum (B_w\sigma_{f}(w))^2)
    \end{split}
\end{equation}
where $p$ is small enough that to good approximation $(1-P)\approx1$ and furthermore $(1-f(w))\approx1$ because for all weights that we are interested in optimizing $f(w) \ll 1$. 
The quantity sampled to obtain $\hat{P}_w(p,w)$ is the estimated failure fraction $\hat{f}(w)$ with a standard deviation $\sigma^2_{f}(w) = \frac{ \hat{f}(w) (1- \hat{f}(w) ) }{T_w}$. 
The standard deviation of the each weights contribution, $\hat{P_w}$, to the total $P(p)$, is $\sigma_{\hat{P_w}} =B_w \sigma_{f}(w)$, where $B_w=\binom{N}{w}(\frac{p}{1-p})^w(1-p)^N$, is the binomial distribution term, and $P_w(p,w)=B_wf(w)$. 
The standard deviation, $\sigma_{\hat{P_w}} =B_w \sigma_{f}(w)$ follows from the linear relationship between $P_w$ and $f(w)$, $\sigma_{\hat{P_w}}= \frac{\partial P(p,w)}{\partial f(w)}\sigma_{f}(w) = B_w\sigma_f(w)$.

Evaluating the system of equations:
\begin{equation}
    \begin{split}
        \frac{dL}{d\sigma_{f}(w)} = \frac{-2f(w)}{\sigma_{f}^3(w)} + 2\lambda B_w\sigma_{f}(w) = 0 \\
        \frac{dL}{d\lambda} = \sigma^2 - \sum (B_w\sigma_{f}(w))^2 = 0 
    \end{split}
\end{equation}
solving first for $\lambda$, we rewrite $\frac{dL}{d\lambda}=0$:
\begin{equation}
    \begin{split}
        \lambda = \frac{f(w)}{B_w\sigma_{f}^4(w)}=\frac{f(k)}{B_i\sigma_{f}^4(k)} \\
        \sigma_f^2(k) = \sqrt\frac{B_wf(k)}{B_kf(w)}\sigma_f^2(w)
    \end{split}
\end{equation}
now using $(B_w\sigma_f(w))^2 = {\sigma^2-\sum_{k \neq w}(B_k\sigma_f(k))^2}$ we get:
\begin{equation}
    \begin{split}
        \sigma_f^2(w) = \frac{\sigma^2 - \sigma_f^2(w)\sum_{k \neq w}B_k^2  \sqrt\frac{B_wf(k)}{B_kf(w)}}{B_w^2}, \\
        \sigma_f^2(w) = \frac{\frac{\sigma^2}{B_w^2}}{1+\sum_{k\neq w}(\frac{B_k}{B_w})^2\sqrt{\frac{B_wf(k)}{B_kf(w)}}}, \\
        \sigma_f^2(w) = \frac{\frac{\sigma^2}{B_w^2}}{\sum^{N}_{k=w_0}(\frac{B_k}{B_w})^2\sqrt{\frac{B_wf(k)}{B_kf(w)}}}
    \end{split}
    \label{eq:optSig}
\end{equation}
Returning to $\sigma_f(w) = \frac{f(w)(1-f(w))}{T_w}$ and the optimized $\sigma_f(w)$ provides an optimized number of trials for each weight of:
\begin{equation}
\label{eq:opt_trial_w}
    T_w = f(w)(1-f(w))\frac{B_w^2}{\sigma^2}\sum_{k=w_0}^{N} (\frac{B_k}{B_w})^2\sqrt{\frac{B_wf(k)}{B_kf(w)}}
\end{equation}

We may also compare $T_\text{IS}$ to the number of direct samples at $p$ to obtain the same target standard error, $T_\text{MC}$: 
\begin{equation}
    \frac{T_\text{IS}}{T_\text{MC}} = \frac{ \sum^{w_{h}}_{i=w_{l}} T_w }{ \frac{\hat{P}(p)(1-\hat{P}(p))}{\sigma^2} }
\end{equation}
where $T_\text{MC}=\frac{P(p)(1-P(p))}{\sigma^2}$ and $w_{l,h}$ are the smallest and largest weight sampled directly to obtain an accurate estimation of $P(p)$. We show the choice of samples for each weight $T_w$ for a target standard error for several choices of $p$ in Fig. \fig{Tws}. We also see qualitative agreement between this weight distribution and the dominant distribution of weights found, for example, in the splitting method shown in \fig{splitting_bb}.

\begin{figure}[ht!]
    \centering
    (a)\includegraphics[width=0.45\linewidth,clip,trim=0mm 0mm 0mm 0mm]{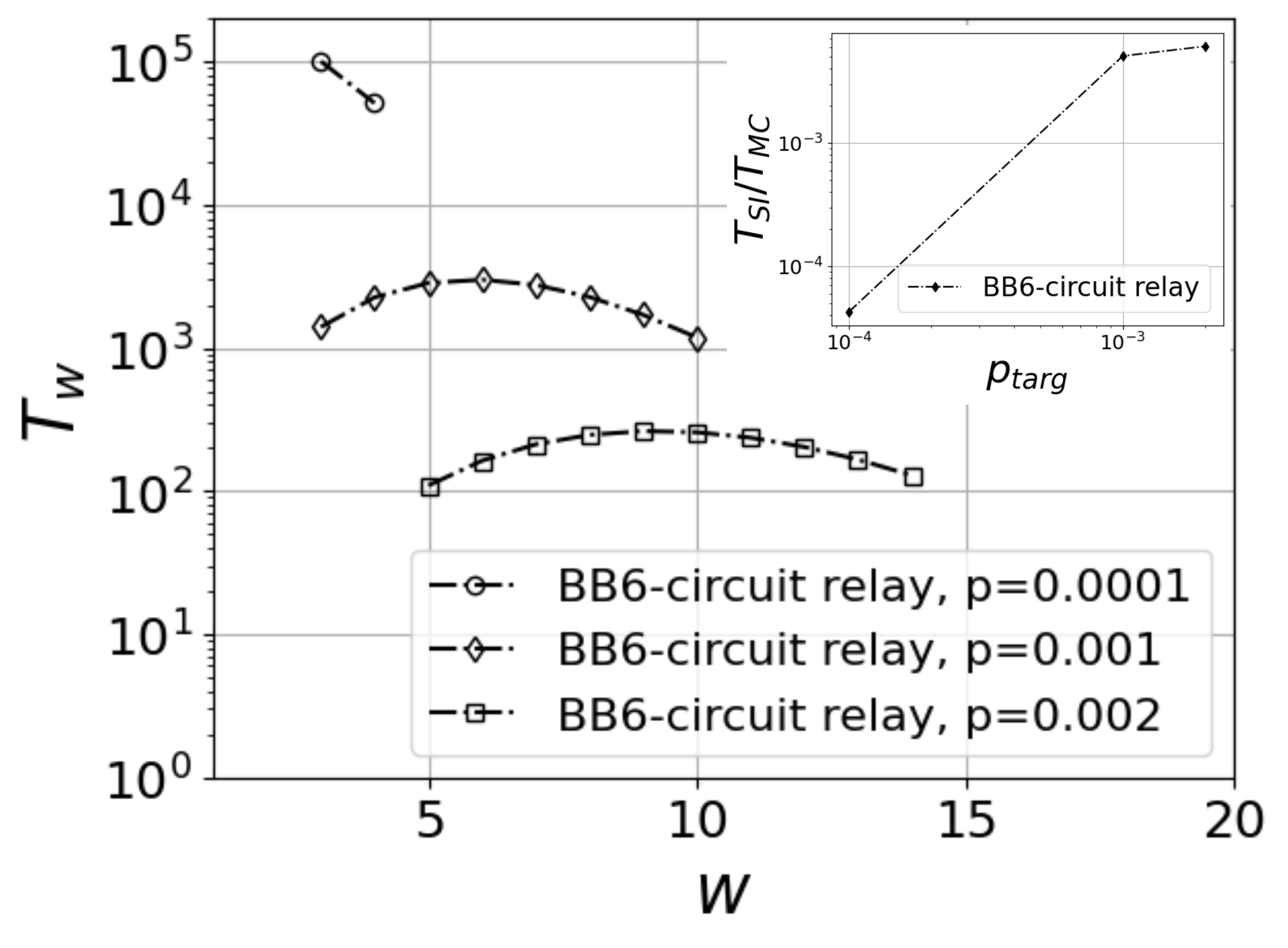}    
    (b)\includegraphics[width=0.445\linewidth,clip,trim=0mm 0mm 0mm 0mm]{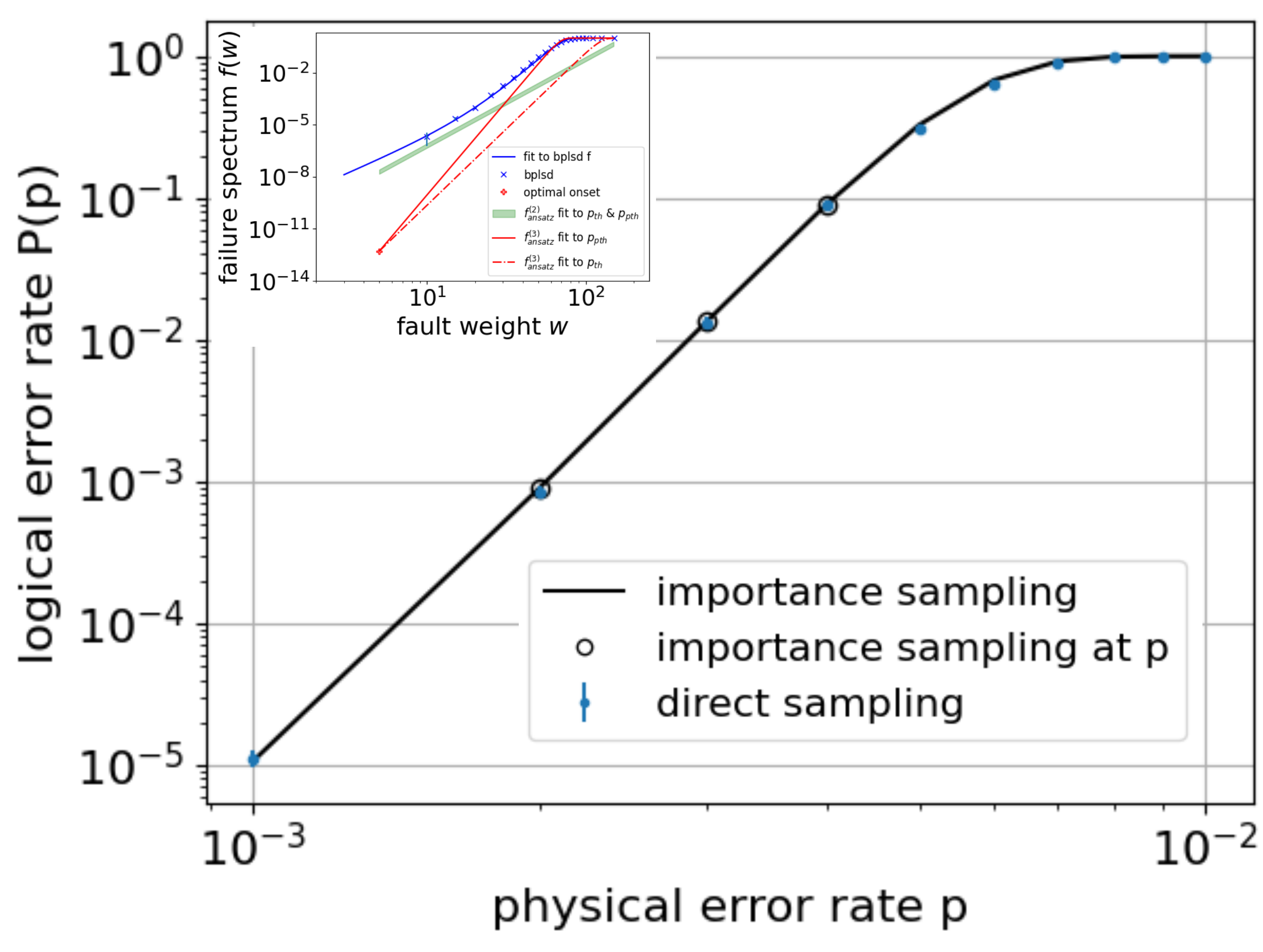}
    
    \caption{
    (a) Optimized number of trials for BB(6)-circuit for each weight, $T_w$, to obtain a standard deviation $\sigma = 0.05\times P(p)$ for target $p = {0.0001, 0.001, 0.002}$. The $T_w$s are found for each $p$ assuming $f^{(3)}_\text{ansatz}(w)$. The $w_\text{low}$ and $w_\text{hi}$ are selected to sample $\geq$0.95 of the contributions to $P(p)$. The inset is a comparison of the total number of trials using importance sampling (SI) or the Monte Carlo method (MC) for each $p$. (b) $\hat{P}_{f}(p)$, black circles and solid line, found using a heuristic choice of starting weights described in section \app{LMmethod}, and compared to the direct sampling of the BB(12)-circuit using the BP-LSD decoder parameters. The inset shows the heuristic seeds $f^{(2)}_\text{ansatz}$ and $f^{(3)}_\text{ansatz}$ when solving $P_f(p_\text{pth})=p_\text{pth}$ for $f_0$ or $\gamma$. The 'guess' for the pseudothreshold is either the literature value for the circuit noise threshold of the code, $\approx 0.007$ \cite{bravyi2024high}, or the empirically observed pseudothreshold for BB(12)-circuit using BP-LSD, $\approx 0.0021$. The optimal onset bound is used when solving for $\gamma$ in $f^{(3)}_\text{ansatz}$. The onset was set at $w_0=5$ the nominal circuit distance for the BB(12)-circuit. The total number of direct sampling and importance sampling trials to obtain $\hat{P}(p)$ or $\hat{P}_f(p)$ for $p=\{0.004,0.003,0.002\}$ was $\approx10M$ and $\approx300k$, respectively, resulting in standard errors or bootstrap estimated errors of $[4.5\times10^{-4},5\times10^{-5},1.3\times10^{-5}]$ and $[5.7\times10^{-3},6.5\times10^{-4},3.3\times10^{-5}]$, respectively. The reduced chi for the three target $p$ and for all points for which direct sampling was available were $0.001$ and $0.003$, respectively. 
    }
    \label{fig:Tws}
\end{figure}

\paragraph{Heuristic weight and trial schedule for target list of physical error rates}

A common task is to obtain $P(p)$ for a set of physical error rates $\{p_\text{hi}, ...,p_\text{low}\}$. Importance sampling offers an opportunity to find $P(p)$ with fewer trials. A practical challenge, however, is that the failure spectrum, $f(w)$, is not known for a new QEC system making it unclear with which weights and trials to start. 

The Lagrange multiplier method above provides quantitative choices for which weights and number of trials to sample. However, it relies on an estimate of the local failure spectrum as an input. Once an accurate parameterization of a local ansatz is obtained, projections of which weights and trials to sample for a subsequent $p$ can be made. The problem is how to choose seed parameters for an initial ansatz.  

We suggest the following heuristic approach to identify initial parameters for an ansatz when an initial upper bound of the circuit distance, $D$, and a magnitude estimate of the pseudothreshold, $p_\text{pth}$ is available. When these two parameters are available, which is often the case just before characterizing a code's performance with MC sampling, we observe that $f_0$ is the only unknown in $f^{(2)}_\text{ansatz}$, when setting $w_0 = \lceil D/2 \rceil$. The $f_0$ can be solved for by setting $p_\text{pth}=(1-p_\text{pth})^N\sum_{i=w_0}^N f^{(2)}_\text{ansatz}(w) \binom{N}{w} (\frac{p_\text{pth}}{(1-p_\text{pth})})^w$. This defines initial parameters that can be used to select weights for a target $p$. The ansatz parameters are then immediately updated after getting estimates of $\hat{f}(w)$. Furthermore, if the optimal onset, $f^*(w_0^*)$, is known from fault counting or some other method, then an improved initial ansatz can be made in a similar way but now solving for $\gamma$ and using $f_0 = f^*(w_0^*)$ in $p_\text{pth}=(1-p_\text{pth})^N\sum_{i=w_0}^N f^{(3)}_\text{ansatz}(w) \binom{N}{w} (\frac{p_\text{pth}}{(1-p_\text{pth})})^w$. We now enumerate a sketch of this heuristic method which was implemented to obtain the results shown in Fig. \fig{Tws} (b): 

\begin{enumerate}
    \item Numerically solve for $f_0$ (or $\gamma$) in $p_\text{pth}=(1-p_\text{pth})^N\sum_{i=w_0}^N f^{(2)\,\text{or}\,(3)}_\text{ansatz}(w) \binom{N}{w} (\frac{p_\text{pth}}{(1-p_\text{pth})})^w$ (a transcendental equation) using a best guess for the onset $w_0$ (and also $f_0$) and the pseudothreshold $p_\text{pth}$ of the QEC system. 
    \item Choose a target minimum accuracy and find the minimum number of weights that satisfies the accuracy target for $p_\text{hi}$ according to equation \eq{Pfestimator} and using $f^{(2)\,\text{or}\,(3)}_\text{ansatz}(w)$.  
    
    \item Choose a target standard error and evaluate equation \eq{opt_trial_w} using the weights found in the previous step.
    
    \item Obtain an initialization sampling of $f(w)$ at the min, max and median weights indicated by equation \eq{opt_trial_w} and directly sample at $p_\text{pth}$. Choose the trial numbers indicated by eqn. \eq{opt_trial_w} or a minimum floor of samples (e.g., sample until a minimum of $n_\text{min-fail}$ failures set according to a target standard error). 
    
    \item Using this initial set of sampled $f(w)$ and $p_\text{pth}$, fit $f^{(5)}_\text{ansatz}(w)$ with fixed $w_0$. 
    An initial set of priors for expanding to $f^{(5)}_\text{ansatz}(w)$ from  $f^{(2)\,\text{or}\,(3)}_\text{ansatz}(w)$ might be $[f_0, w_0, \gamma_1=w_0$ (or $\gamma), w_c=2w_0, \gamma_2 = w_0$ (or $\gamma)$].
    
    \item Sequentially select the $p$ values in the target set starting highest to lowest (i.e., start with $p_\text{hi}$). Find the target trial numbers from eqn. \eq{opt_trial_w}. Select and sample the min, median and max weights. Refit the ansatz and then repeat for the next $p$.
    \item Use the last fit parameters of the ansatz to calculate the $P(p)$ in the target set also noting that the updated parameters allow for ansatz calculated interpolation and extrapolation around the domain of the target set. 
\end{enumerate}

\clearpage
\subsection{Numerical analysis of failure-spectrum fitting}
\label{app:fit-strategies-all-p}

Here we consider strategies for characterizing a QEC system by sampling and fitting a failure spectrum ansatz. 
Rather than deriving an optimal strategy, we evaluate representative strategies on a specific QEC system and distill general principles (presented at the end of the subsection) to guide sampling and fitting in broader contexts.

For the analysis here, we define a \emph{sampling strategy} by specifying a set of weights $\mathcal{W}$, a set of error rates $\mathcal{P}$, and the number of samples allocated to each, denoted $T(w)$ for $w \in \mathcal{W}$ and $T(p)$ for $p \in \mathcal{P}$. 
The total number of samples is then $T = \sum_{w \in \mathcal{W}}T(w) + \sum_{p \in \mathcal{P}}T(p)$.
For simplicity, we assume uniform per-sample cost and a fixed total sample budget $T$, although in practice decoding cost may vary across regimes.
Those samples yield estimates $\hat{f}(w)$ and $\hat{P}(p)$ with corresponding standard errors $\sigma_f(w)$ and $\sigma_P(p)$ at values $w \in \mathcal{W}$ and $p \in \mathcal{P}$.  
We then fit the parameters $\vec{\xi}$ of the ansatz $f_\text{ansatz}(w;\vec{\xi})$ by minimizing the combined weighted least-squares error:
\begin{equation}
\chi^2(\vec{\xi}) =
\sum_{w \in \mathcal{W}} \frac{\bigl[\hat{f}(w) - f_\text{ansatz}(w;\vec{\xi})\bigr]^2}{\sigma_f^2(w)} 
+ \sum_{p \in \mathcal{P}} \frac{\bigl[\hat{P}(p) - P_\text{ansatz}(p;\vec{\xi})\bigr]^2}{\sigma_P^2(p)}.
\label{eq:chi-sq}
\end{equation}

\textbf{Benchmarking.}
We perform numerical experiments with different sampling strategies to identify effective approaches.  
We use RT(12)-bitflip as a well-characterized reference system. 
Over the range shown in \fig{failure-spectrum-transform}, $f(w) > 10^{-4}$, allowing estimates with 1\% relative error using a feasible number of samples and defining the reference curve $P_\text{ref}(p)$.
The samples collected according to a given strategy are fit to the ansatz $f_\text{ansatz}^{(6)}(w\,|\,\vec{\xi})$, yielding parameters $\vec{\xi}_\text{strat}$ and the corresponding function $f_\text{strat}(w) := f_\text{ansatz}^{(6)}(w\,|\,\vec{\xi}_\text{strat})$.  
This defines the inferred curve $P_\text{strat}(p)$, which can be compared to $P_\text{ref}(p)$.
We quantify the discrepancy between $P_\text{strat}(p)$ and $P_\text{ref}(p)$ using the \emph{maximum deviation} $\mathcal{R}(P_\text{strat}(p),P_\text{ref}(p))$, defined as\footnote{In practice, we evaluate this numerically for 300 logarithmically spaced values in the range $p \in [10^{-5}, 0.5)$, and for the asymptotic value as $p\rightarrow 0$ using the ratio between $f_\text{strat}(w_0)$ and $f_\text{ref}(w_0)$.}
\[
\mathcal{R}(P_\text{strat}(p),P_\text{ref}(p)) = \max_{p \in (0, 0.5]} \max \left\{ 
\frac{P_\text{strat}(p)}{P_\text{ref}(p)},\,
\frac{P_\text{ref}(p)}{P_\text{strat}(p)}
\right\}.
\]  
By maximizing over $p \in (0, 0.5]$, this notion serves as a figure of merit across regimes. 
In section \app{LMmethod} we additionally consider an analytically informed strategy using the failure spectrum ansatz to select weights for target values of $p$.
Because $P_\text{strat}(p)$ fluctuates across repeated applications of the same strategy due to sampling randomness, we repeat each experiment $Q$ times with independent samples to obtain 
$$\{ P_\text{strat}^{(1)}(p),\, P_\text{strat}^{(2)}(p),\, \dots,\, P_\text{strat}^{(Q)}(p) \}.$$  
We then report the median of the resulting max-deviations 
$$\{ \mathcal{R}(P_\text{strat}^{(1)}(p),P_\text{ref}(p)), \mathcal{R}(P_\text{strat}^{(2)}(p),P_\text{ref}(p)), \dots, \mathcal{R}(P_\text{strat}^{(Q)}(p),P_\text{ref}(p))\},$$ 
along with a bootstrap-estimated uncertainty of the median value.  
Values closer to one indicate that the strategy more accurately reproduces $P_\text{ref}(p)$ in typical runs.

\paragraph{Failure spectrum sampling.}
We first consider sampling strategies that use only failure spectrum data to fit in \eq{chi-sq}, without sampling of $P(p)$.  
We further assume the onset $w_0 = 6$ is known, and that seven weights $[w_1, w_2, \dots, w_7]$ are sampled, with corresponding sample allocations $[T_1, T_2, \dots, T_7]$. 
There are seven data points and five free parameters of the ansatz $f_\text{ansatz}^{(6)}(w\,|\,\vec{\xi})$ (having fixed $w_0$), so the problem is mildly over constrained which is reasonable for finding fit parameters.

We begin with a class of naive strategies, where $[w_1, w_2, \dots, w_7]$ are consecutive integers starting from a chosen $w_1$, and the total budget $T$ is distributed uniformly among them.  
In \fig{deviation}(a) we examine the performance of these strategies as a function of $w_1$ for $T$ ranging from $10^2$ to $10^5$.  
Increasing $T$ consistently reduces the maximum deviation as expected. 
For a fixed $T$, the maximum deviation decreases as $w_1$ decreases, until $w_1$ becomes so small that failures at $w_1$ are too rare to be captured with the allocated $T/7$ samples; beyond this point, performance degrades. 
As $T$ increases, this turnover point shifts to smaller $w_1$. 
Notably, when $T = 10^5$, the value $w_1 = 6$ (the onset weight for this QEC system) achieves the best performance.

\begin{figure}[ht!]
    \centering  
    (a)\includegraphics[width=0.45\linewidth]{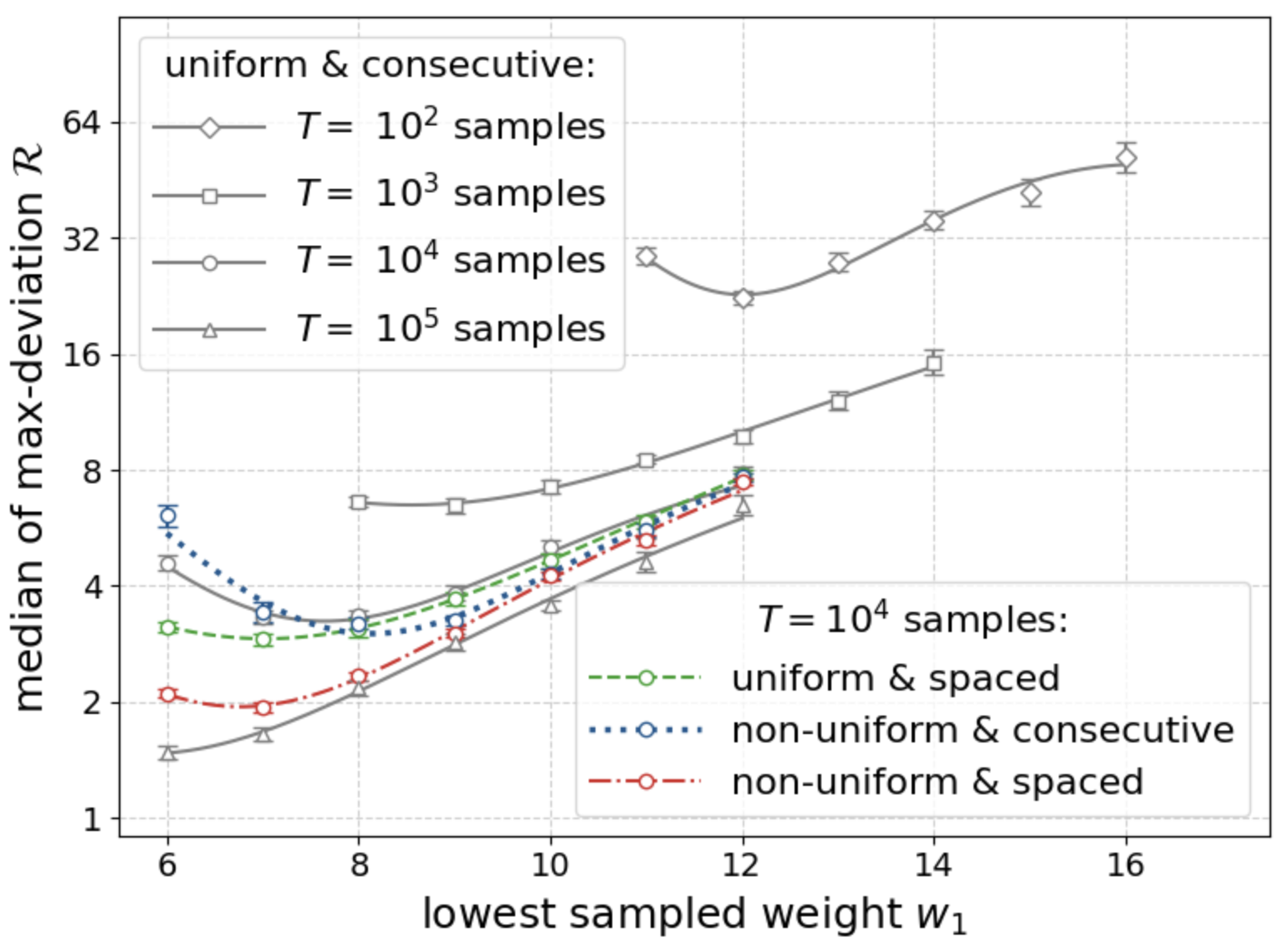}
    (b)\includegraphics[width=0.45\linewidth]{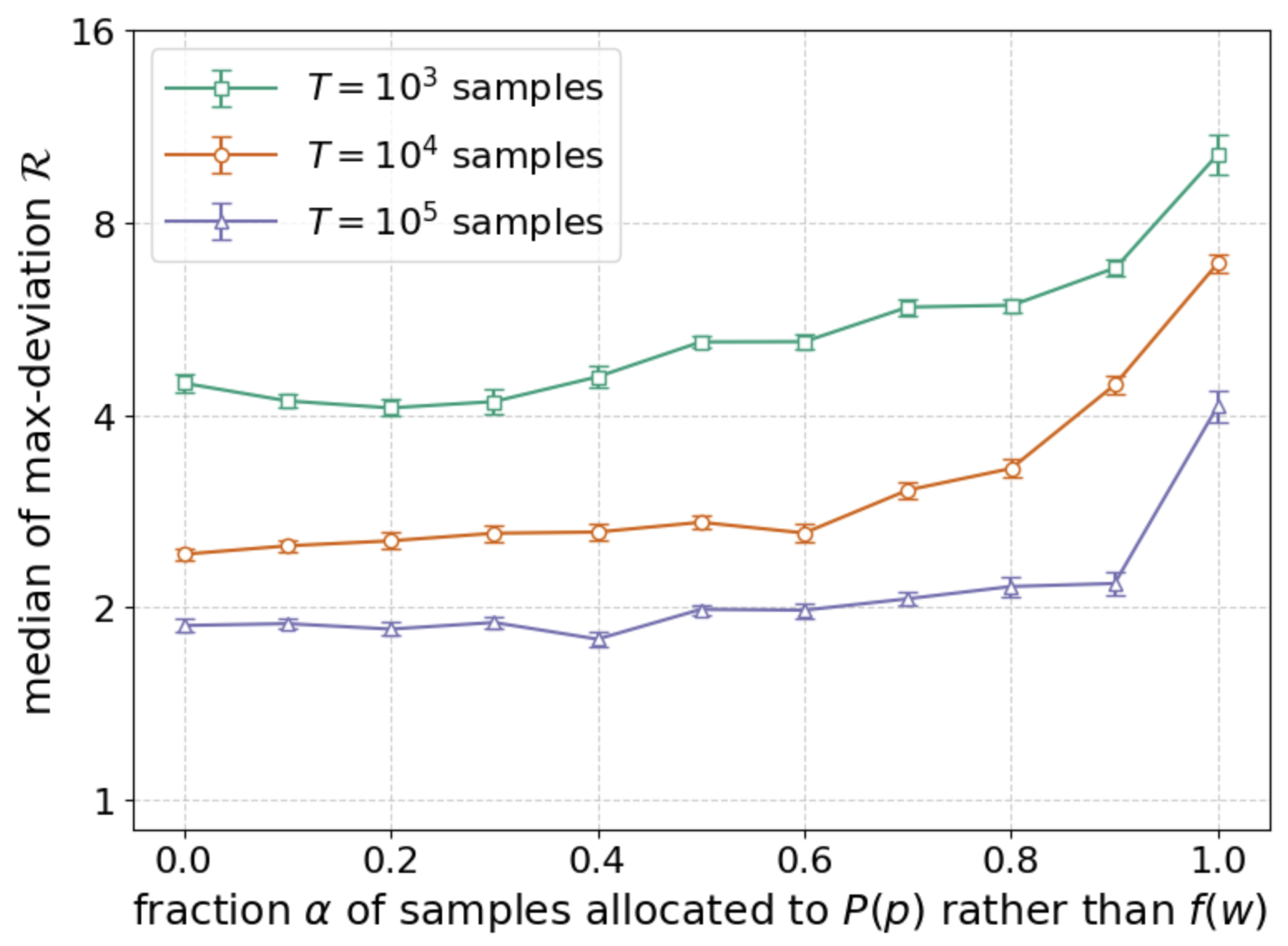}
    \caption{
    Comparison of sampling strategies for the QEC system RT(12)-bitflip.  
    (a) Each strategy specifies seven weights and allocates $T$ samples across them.  
    For each strategy, the maximum deviation $\mathcal{R}(P_\text{strat}(p), P_\text{ref}(p))$ (defined in the text) is computed and the median over many repetitions is plotted as a function of the lowest sampled weight $w_1$, with bootstrap-estimated error bars.  
    Smaller deviations indicate more accurate recovery of $P_\text{ref}(p)$.  
    The naive strategy (gray) uses consecutive weights with samples uniformly allocated; results are shown for total budgets $T = 10^2, 10^3, 10^4, 10^5$.  
    At $T=10^4$, we also compare: (i) spaced weights with uniform allocation, (ii) consecutive weights with non-uniform allocation, and (iii) spaced weights with non-uniform allocation.  
    These results suggest that a non-uniform allocation over spaced weights, with $w_1$ decreasing as $T$ increases, achieves the most accurate fits.  
    (b) Median maximum deviation as a function of the fraction $\alpha$ of the sample budget allocated to logical error rate data $P(p)$, with the remaining $(1-\alpha)$ allocated to failure spectrum data $f(w)$.  
    Results are shown for total budgets $T = 10^3, 10^4, 10^5$ (see text for details).  
    } \label{fig:deviation}
\end{figure}

We also consider the following three modified classes of sampling strategy at $T=10^4$, with results shown in \fig{deviation}(a).
(i) Using spaced weights ($w_{j+1} = w_j + 2$) instead of consecutive weights probes a broader range of $w$.
This moderately improves upon the naive strategy at smaller $w_1$, presumably because the wider range better constrains the fit.  
At larger $w_1$, however, the highest weights enter a regime where $f(w) \approx a$ and contribute little information to the fit (for context, in this QEC system, $f(w) > 0.9a$ for $w \geq 23$; see \fig{failure-spectrum-transform}).
(ii) Allocating samples to equalize the expected number of observed failures across weights. 
To achieve this, one could dynamically sample the data point with the fewest observed failures until the sample budget is exhausted (in our numerical implementation, which assumes pre-allocated samples, we mimicked this by assigning samples in proportion to $1/f(w)$).
This improves upon the naive strategy at intermediate $w_1$ by reducing uncertainty at lower weights, thereby better constraining the fit. 
However, it degrades performance at low $w_1$, since the number of observed failures per data point (approximately $T f(w_1)$) becomes too small, increasing uncertainty and weakening the fit constraint.
(iii) Combining both strategies significantly improves performance, with the median deviation approaching that of the naive strategy with ten times more samples.

\paragraph{Logical error rate sampling.}  
We now consider strategies that incorporate logical error rate samples $P(p)$ into the fit \eq{chi-sq}.  
To illustrate the effect of including $P(p)$ data, we present numerical experiments in \fig{deviation}(b).  
We fix the total sample budget $T$ (ranging from $10^3$ to $10^5$), split into $\alpha T$ for failure spectrum and $(1-\alpha) T$ for logical error rates, with $\alpha$ varied between 0 and 1.  
We assume the onset $w_0 = 6$ is known and sample seven weights $w_j \in \{8,\, 10,\, 12,\, 14,\, 16,\, 18,\, 20\}$, allocating samples to equalize the number of failures across $w_j$. 
We also sample seven probabilities $p_j \in \{0.05,\, 0.07,\, 0.09,\, 0.11,\, 0.13,\, 0.15,\, 0.17\}$, allocating samples to equalize the number of failures across $p_j$.
In \fig{deviation}(b), we observe that the median max-deviation depends only weakly on the fraction $\alpha$ of samples allocated to estimating $P(p)$ rather than $f(w)$, with no choice of $0 < \alpha < 1$ significantly outperforming the $\alpha = 0$ case (fiting to failure spectrum data only). 
As $\alpha $ approaches one performance degrades noticeably, although this may depend on the specific weights $w_j$ and probabilities $p_j$ chosen. 
A more comprehensive assessment would require varying the sets of sampled weights and probabilities.

\paragraph{Known versus unknown onset weight and fraction.}   
The initial studies presented here for RT(12)-bitflip highlight broad principles for allocating samples when fitting a failure spectrum ansatz.  
We assumed throughout that the onset weight $w_0$ is known, while the onset fraction $f_0 = f(w_0)$ was treated as a fit parameter.  
First, we note that $P(p)$ data can be particularly useful for inferring $w_0$ when it is unknown, as $P(p)$ remains sensitive to $w_0$ at any sufficiently small $p$, whereas $f(w)$ alone requires carefully tuned sampling near $w_0$. 
Second, if $f_0$ is known \emph{a priori} (for example, computed using the techniques in \sec{computing-min-weight-properties}) this strongly constrains the fit, and little additional sampling may be needed for an accurate approximation of the failure spectrum.

\clearpage
\subsection{Further details for min-weight analysis}
\label{app:min-weight-extensions}

\paragraph{Odd distance:}
In \sec{fails} we presented approaches to compute or estimate the number of failing bitstrings 
$|\mathcal{F}(\lceil D/2 \rceil)|$ for the max-class min-weight decoder when $D$ is even such that the onset weight $\lceil D/2 \rceil = D/2$.
Here we consider the case of odd $D$, where the onset weight is $\lceil D/2 \rceil = (D+1)/2$. 
In this setting, restrictions of both $\mathcal{L}(D)$ and $\mathcal{L}(D+1)$ can contribute to failures.

Every $r \in \mathcal{L}(D)\big|_{(D+1)/2}$ is necessarily a failure:  
if $r \subset \ell \in \mathcal{L}(D)$, then $\ell \setminus r$ has weight $(D-1)/2$, which is strictly smaller, so a min-weight decoder prefers $\ell \setminus r$.
Note that there can be no error $x$ of weight less than $(D+1)/2$ with the same syndrome as $r$ and with the same logical action, since if there was, there would be a non-trivial logical operator $x+(\ell \setminus r)$ with weight lower than $D$ which contradicts the definition of $D$.
Note also that if $r$ is a restriction of multiple logical operators in $\mathcal{L}(D)$, then there is a weight-$(D-1)/2$ correction for each of them which could be applied by the min-weight decoder, however all of them lead to failure.

Additional failing bitstrings may also arise from $r \in \mathcal{L}(D+1)\big|_{(D+1)/2}$.  
Specifically, for $r' \subset \ell' \in \mathcal{L}(D+1)$, the max-class min-weight decoder decoder may choose between $r'$, $\ell' \setminus r'$ (or $\ell'' \setminus r'$ for some other $\ell'' \in \mathcal{L}(D+1)$).  
Which correction the decoder selects will depend on the set of all weight-$(D+1)/2$ errors with the same syndrome as $r'$.
Recall that the max-class min-weight decoder (see \sec{fails}), $r'$ is classified as a failure unless it lies in one of the largest error classes of min-weight errors consistent with the syndrome.  
If there are $A$ equally large maximal classes and $r'$ is contained in one of them, then it succeeds with probability $1/A$.

Now let us outline the approach to compute the number of min-weight fails.
We assume that we have sets $\mathcal{L}(D)$ and $\mathcal{L}(D+1)$.
For the exact optimal onset computation, we first enumerate the sets $\mathcal{L}(D)|_{(D+1)/2}$ and $\mathcal{L}(D+1)|_{(D+1)/2}$.
Potentially, there could be some elements which appear in both sets.
First we remove from $\mathcal{L}(D+1)|_{(D+1)/2}$ any element which also appears in $\mathcal{L}(D)|_{(D+1)/2}$ to avoid double counting, and we call the resulting set $\mathcal{L}'(D+1)|_{(D+1)/2}$.
All elements of $\mathcal{L}(D)|_{(D+1)/2}$ fail, and so we count them.  
To count the subset of elements of $\mathcal{L}'(D+1)|_{(D+1)/2}$ which fail, we use the procedure described in \sec{fails} to count the sizes of subsets with the same syndrome and logical action.
If this procedure is applied to logical operator subsets $\mathcal{L}_\text{found}(D) \subset \mathcal{L}(D)$ and $\mathcal{L}_\text{found}(D+1) \subset \mathcal{L}(D+1)$ rather than the complete sets of logical operators, this procedure produces a lower bound on the number of min-weight fails.

The approach to estimate rather than compute the number of min-weight fails by sampling can similarly be extended from even $D$ to odd $D$.
The idea is to first sample elements of the set $\mathcal{L}(D)|_{(D+1)/2}$ to estimate its size (by setting $g=1$ in \alg{term_computation}).
We then estimate the size of the subset of failing elements of $\mathcal{L}(D+1)|_{(D+1)/2}$ by sampling elements of $\mathcal{L}(D)|_{(D+1)/2}$, and modifying $g$ from the version in \alg{term_computation} such that it is zero for any sampled element which is also in $\mathcal{L}(D)|_{(D+1)/2}$. 

\paragraph{Surface code min-weight extrapolations:}
In \tab{min-weight-properties}, we presented some extrapolations for RS(18) owing to the very large number of logical operators that would need to be found in order to obtain direct bounds.
Here, in \fig{surface-code-extrapolations} and in \tab{min-weight-properties-surface-codes}, we describe our extrapolation methods.
An exponential curve fits the surface-code data in \fig{surface-code-extrapolations} well, which is  unsurprising because these codes form a regular family.
We do not expect comparable agreement for less regular sets of decoding systems such as our bivariate bicycle code example which are not drawn from a single regular family.

\begin{table}[h]
    \centering
    \begin{tabular}{ccccccccc}
        \toprule
        \multicolumn{4}{c}{\textbf{Parameters}} & \multicolumn{2}{c}{\textbf{Compressed Logicals}} &   \textbf{Logicals} &   \textbf{Restrictions} & \textbf{Fails} \\
        \cmidrule(lr){1-4}
        \cmidrule(lr){5-6}
        $\mathbf{D}$ & $\tilde N$ & $N$ & $M$ & $|\tilde{\mathcal{L}}(D)|$ & coverage & $|\mathcal{L}(D)|$ & $|\mathcal{L}(D/2)|$ & $|\mathcal{F}(D/2)|$ \\
        \midrule
        4  & 154 & 4834 & -- & 132   & 100\% &$1.43 \times 10^{8}$ & $5.79 \times 10^{5}$  & $1.66 \times 10^{5}$  \\
        6  & 554 & 16556 & -- & $3562$   & 100\% & $2.42 \times 10^{12}$ & $9.78 \times 10^{8}$  & $2.13\times 10^{8}$ \\
        8  & 1362 & 39534 & -- & 89995   & 98\% & $3.15 \times 10^{16}$ & $1.48 \times 10^{12}$  & $2.56 \times 10^{11}$ \\
        10  & 2722 & 77560 & -- & 681863   & 30\% & $1.17 \times 10^{20}$ & $1.2 \times 10^{15}$  & $1.6 \times 10^{14}$ \\
        12 & 4778 &  134426 & -- & $4.02\times 10^{7}$ & 84\% & $3.11\times 10^{24}$  & $ 2.6 \times 10^{18}$ & $ 3.1 \times 10^{17}$ \\
        18 & 16562 & 455984 & -- & $7.9\times 10^{11*}$ & 100\% & $9.1\times 10^{37*}$                        & $1.7\times 10^{28*}$                    & $1.0\times 10^{27*}$ \\
        \bottomrule
    \end{tabular}
    \caption{
        Min-weight properties for rotated surface code QEC systems RS($d$)-circuit (with system distance equal to the code distance $D=d$).
        Notation as in \tab{min-weight-properties}.
        We extrapolate to estimate asterisked values for RS(18)-circuit as described in \fig{surface-code-extrapolations}.
        We estimate the `coverage', the percentage of compressed logical operators which have been found, by independently taking 1000 new logical operator samples, and reporting the fraction of those which have already been found.
    }
    \label{tab:min-weight-properties-surface-codes}
\end{table}

\begin{figure}[h]
    \centering
    \includegraphics[width=0.7\linewidth]{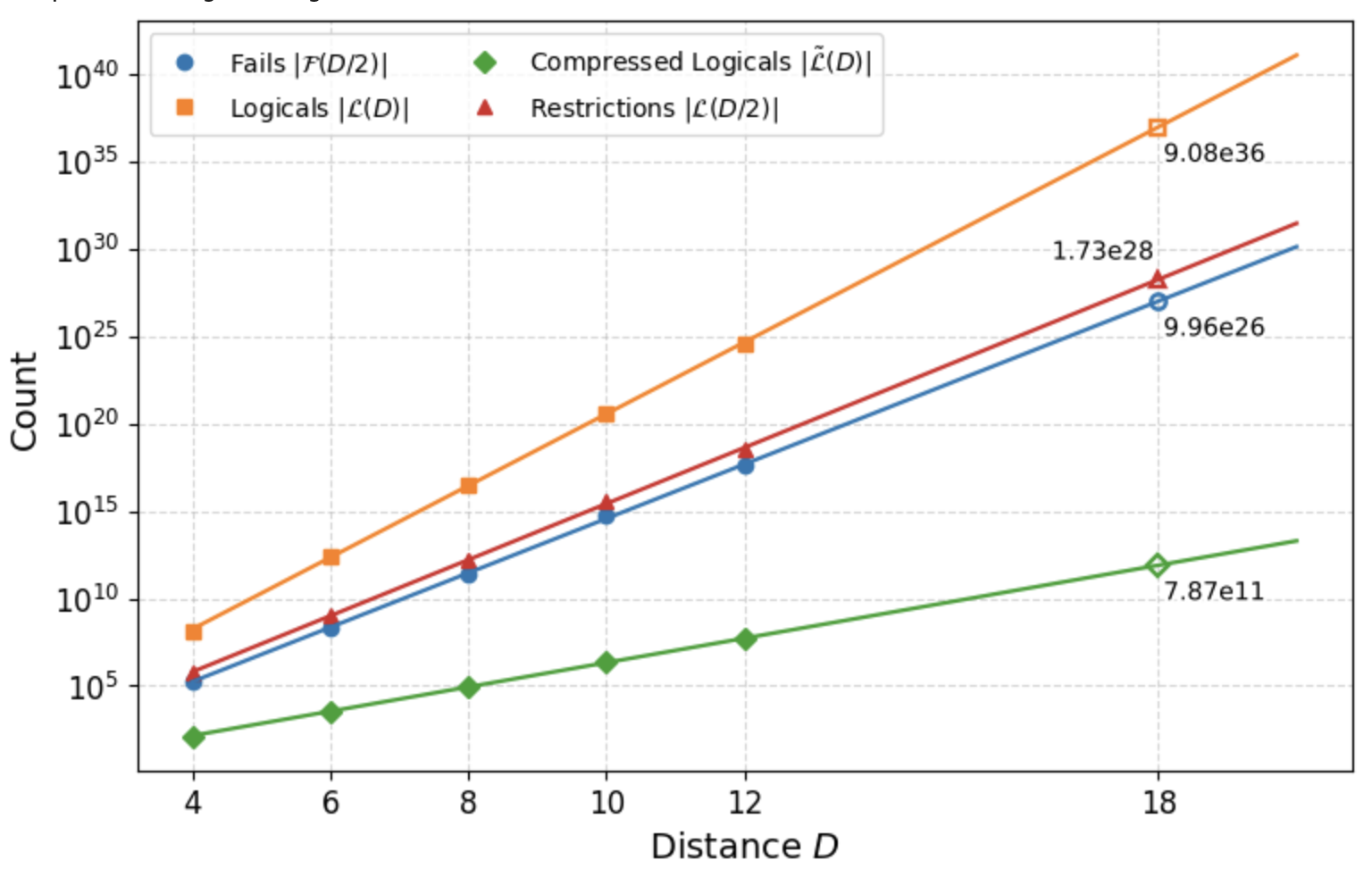}
    \caption{
    We plot the number of min-weight: compressed logicals, logicals, restrictions and fails from \tab{min-weight-properties-surface-codes}, dividing each by the coverage to scale up to the expected total number if all min-weight logicals were found.
    We fit a two-parameter exponential $\alpha e^{\beta D}$ to each of the four quantities for these 5 values of $D$, and use this to extrapolate an estimated value of each at $D=18$.
    }
    \label{fig:surface-code-extrapolations}
\end{figure}

\clearpage
\subsection{Additional data for the splitting method}\label{app:additional_splitting}

Here we provide some additional data about the the splitting method experiments we performed. Recall, these experiments were (1) BB(6) and BB(12) codes with the Relay decoder (2) BB(6) and BB(12) codes with the BP-LSD decoder and (3) RS(6) and RS(12) decoded with matching. Logical error rate estimates resulting from the splitting method can be found in the main text in \fig{splitting_bb}, \fig{final-results-decoder-comparison}, and \fig{final-results-code-comparison}.

See \tab{splitting_method_parameters} for a summary of the splitting method parameters for each case. We note that for the BB codes with BP-LSD decoding we turned up the parameters for downward splitting significantly in an attempt to reduce the error bars in \fig{final-results-decoder-comparison}. While we observed reduced variance as we increased $T_{\text{init}}$ (but not as we increased $L$), significant variance remains. We hypothesize this is because of failing to mix between different failing sectors in the Markov chains, similar to the unrotated toric code example in \sec{asymm_toric}. However, we also find it interesting that the Relay and Matching decoders do not display significant error bars for downward splitting. In future work, we believe that modifying the transition function may prove to be more effective in reducing the error bars than increasing chain length.

Finally, \fig{transitions_and_decodes} shows how often the Markov chains transition and how often the decoder is called, and \fig{chainlengths_and_times} shows how long the chains get and how long it takes in seconds to construct them.

\begin{table}[]
    \centering
    \begin{tabular}{|c|c||c|c|c|}
        \hline
         QEC system & split direction & $T_{\text{init}}$ & $L$ initial failing configs & $M$ repetitions \\\hline\hline
         BB(6)-relay & both & $1\times10^5$ & 12 & 3 \\\hline
         \multirow{2}*{BB(6)-bplsd} & upward & $5\times10^5$ & 12 & 3\\
         & downward & $1\times10^7$ & 50 & 3\\\hline
         RS(6) & both & $1\times10^5$ & 12 & 3\\\hline
         BB(12)-relay & both & $1\times10^6$ & 12 & 3\\\hline
         \multirow{2}*{BB(12)-bplsd} & upward & $1\times10^6$ & 12 & 3\\
         & downward & $2.5\times10^6$ & 50 & 3\\\hline
         RS(12) & both & $1\times10^6$ & 12 & 3\\\hline
    \end{tabular}
    \caption{Parameters for multi-seeded splitting. See the definitions in \sec{chain_init}.}
    \label{tab:splitting_method_parameters}
\end{table}

\begin{figure}[t]
\centering
(a)\includegraphics[width=0.46\linewidth]{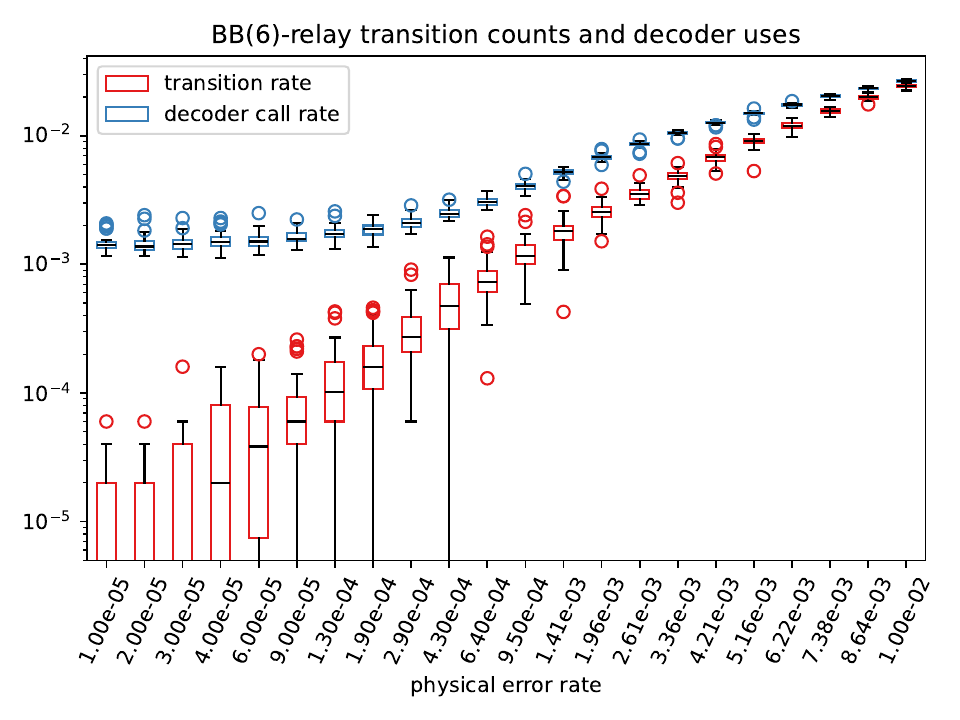}
(b)\includegraphics[width=0.46\linewidth]{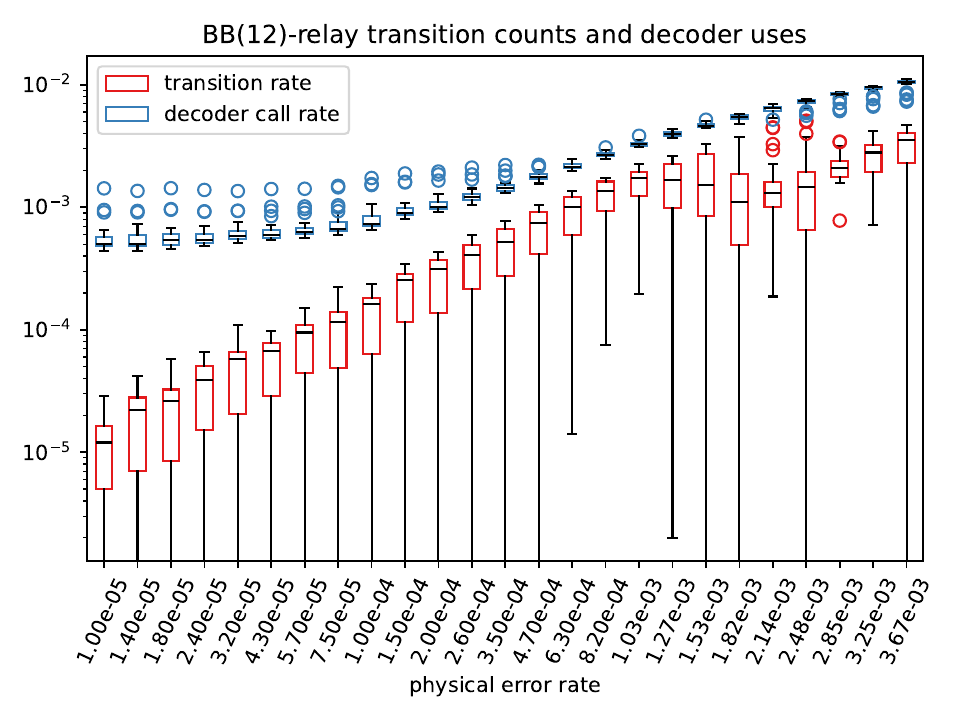}
(c)\includegraphics[width=0.46\linewidth]{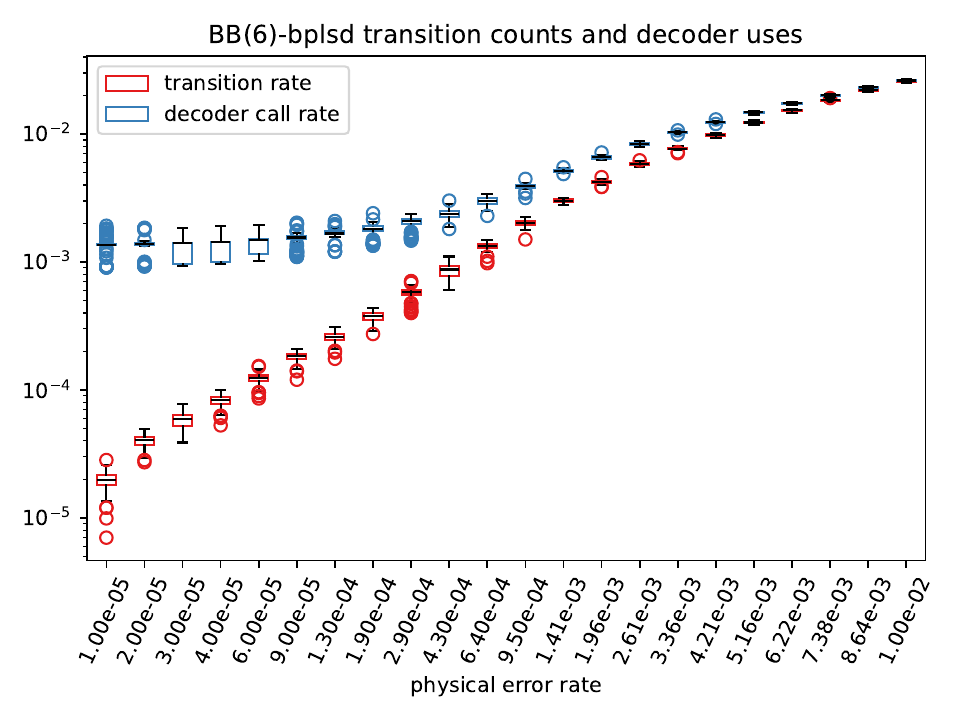}
(d)\includegraphics[width=0.46\linewidth]{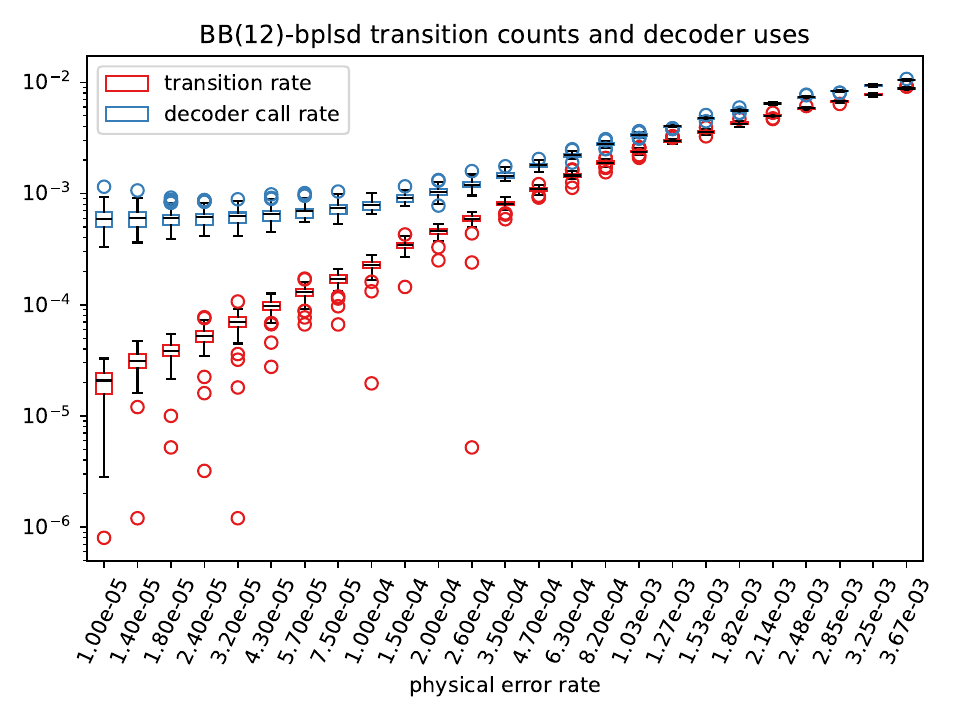}
(e)\includegraphics[width=0.46\linewidth]{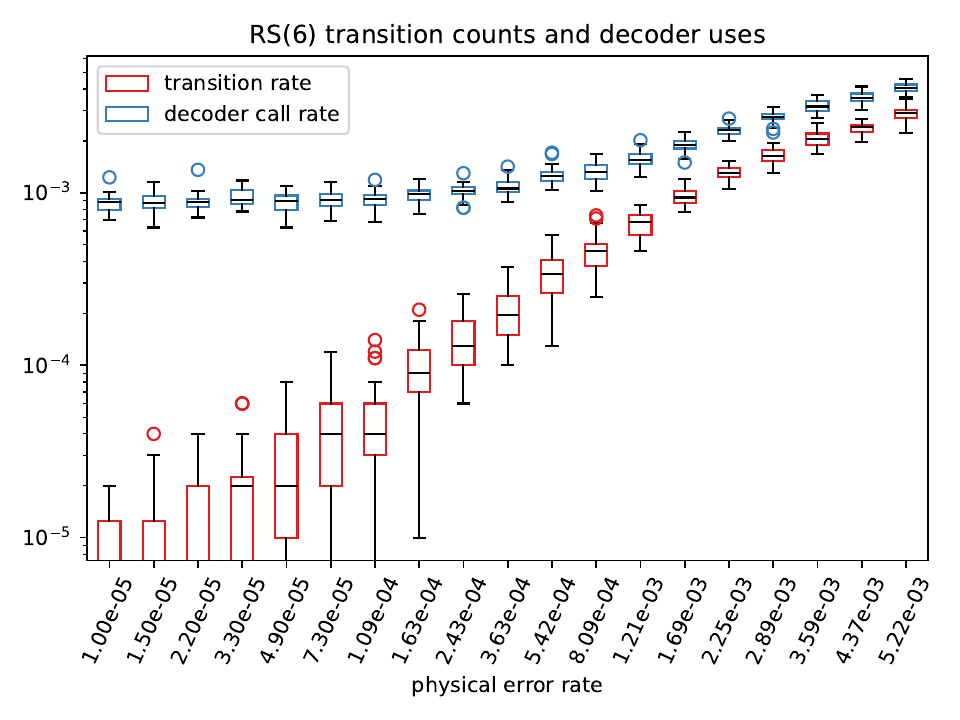}
(f)\includegraphics[width=0.46\linewidth]{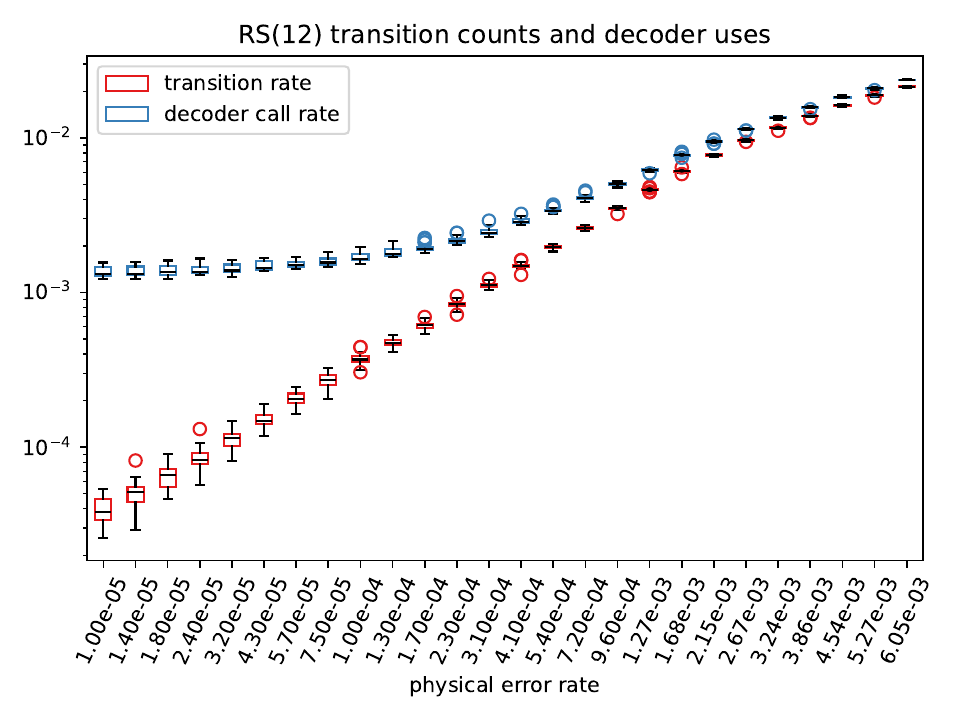}
\caption{\label{fig:transitions_and_decodes}Showing the transition rates and decoder call rates for Markov chains used in the splitting method (boxplot whiskers are 1.5 times the inter-quartile range). Here, the transition rate is the number of times the current failing fault configuration is updated to a different failing configuration divided by the total chain length. Likewise, the decoder's call rate is the total number of times the decoder is called to construct the chain divided by the chain length. Note that the decoder is necessarily called once for every transition, so the call rate is at least the transition rate. We note that when boxes or whiskers go below the bottom of the plot, it is because the transition rate is zero.}
\end{figure}

\begin{figure}[t]
\centering
(a)\includegraphics[width=0.46\linewidth]{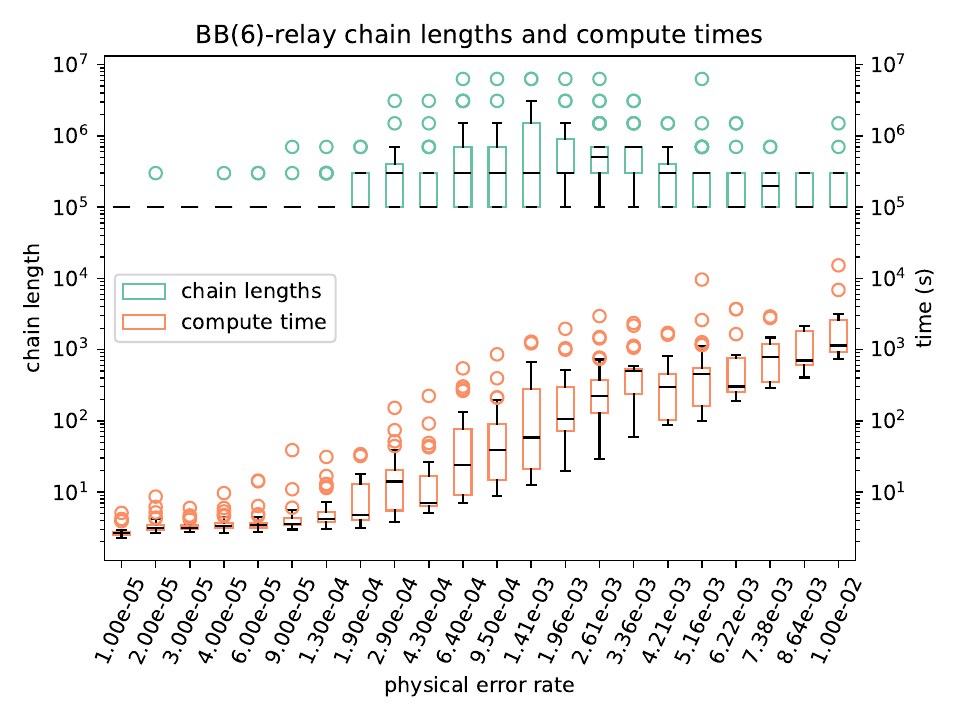}
(b)\includegraphics[width=0.46\linewidth]{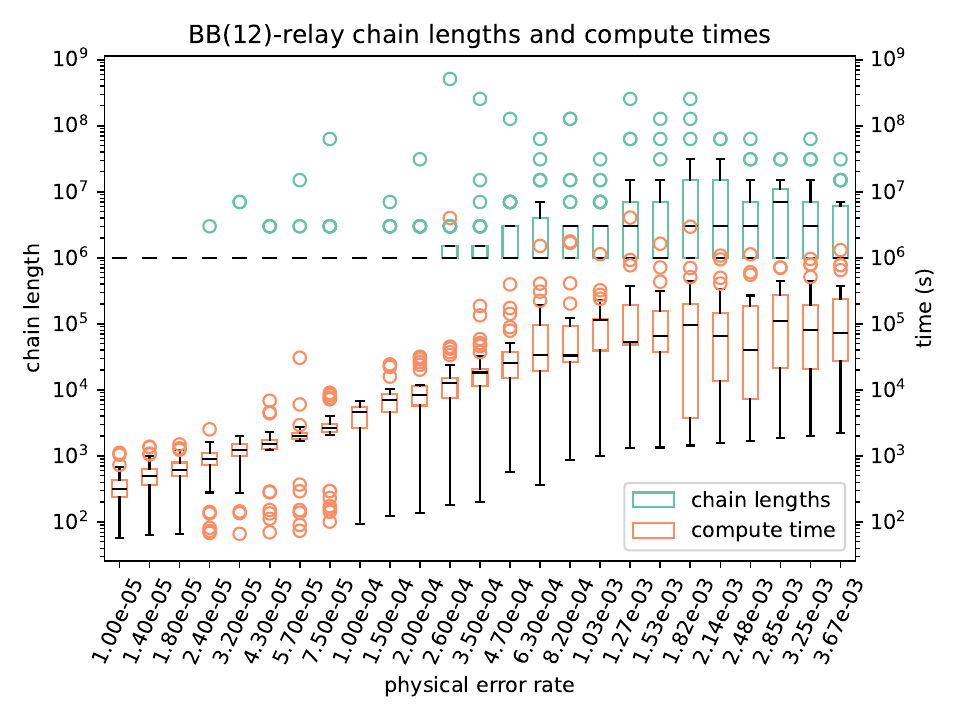}
(c)\includegraphics[width=0.46\linewidth]{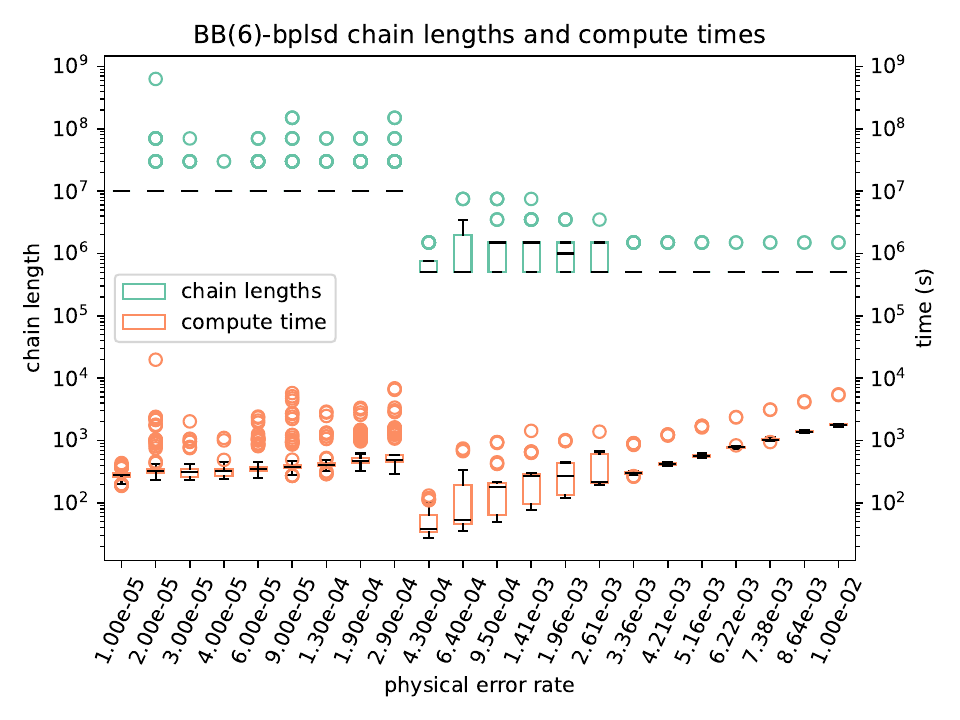}
(d)\includegraphics[width=0.46\linewidth]{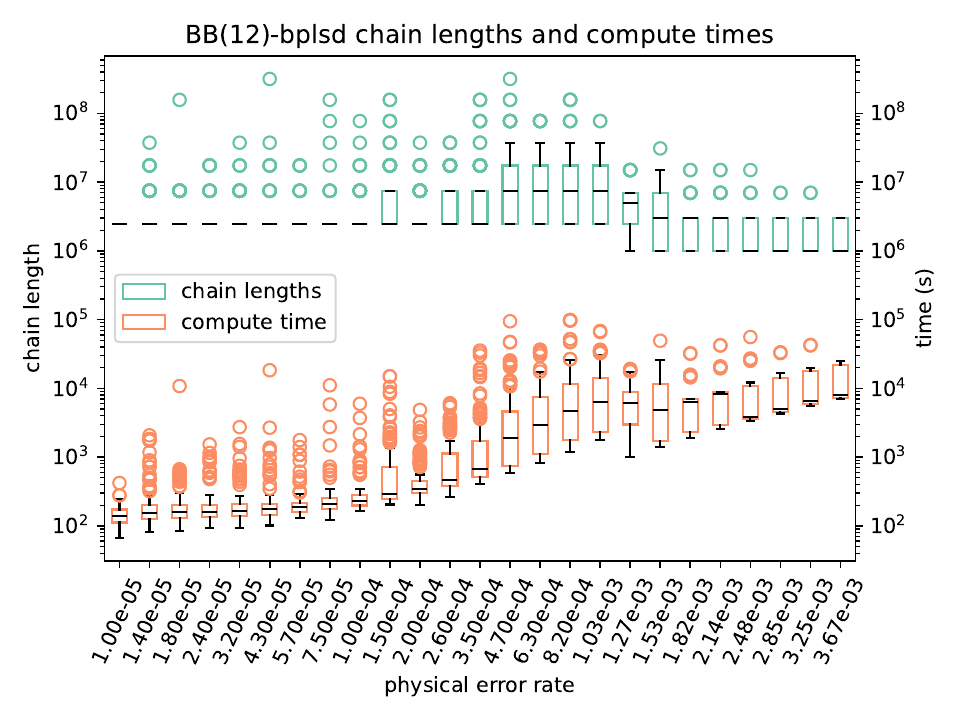}
(e)\includegraphics[width=0.46\linewidth]{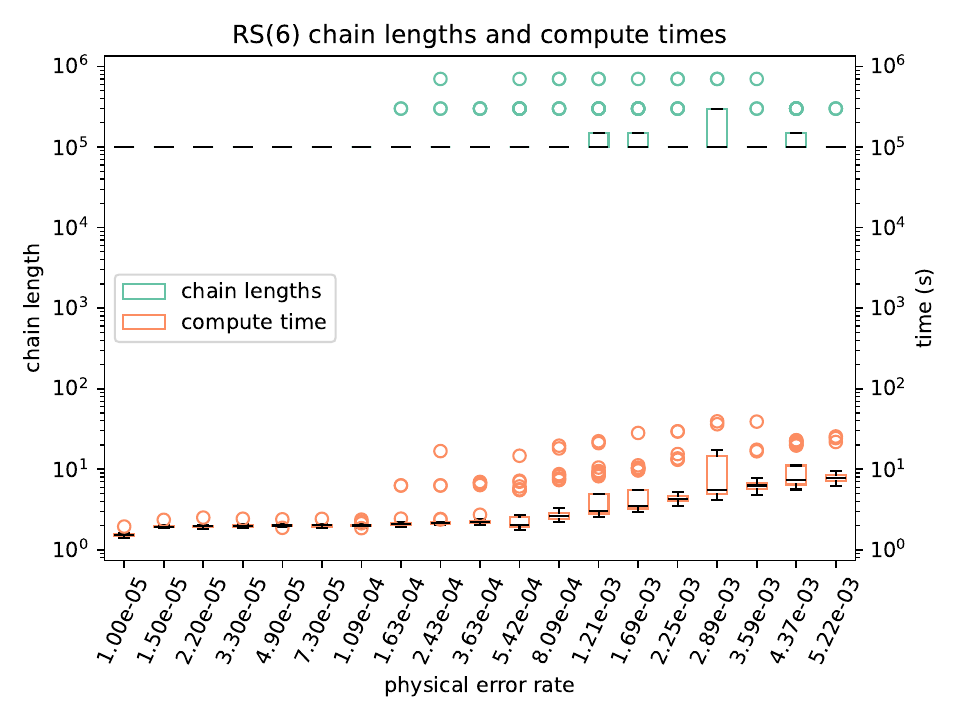}
(f)\includegraphics[width=0.46\linewidth]{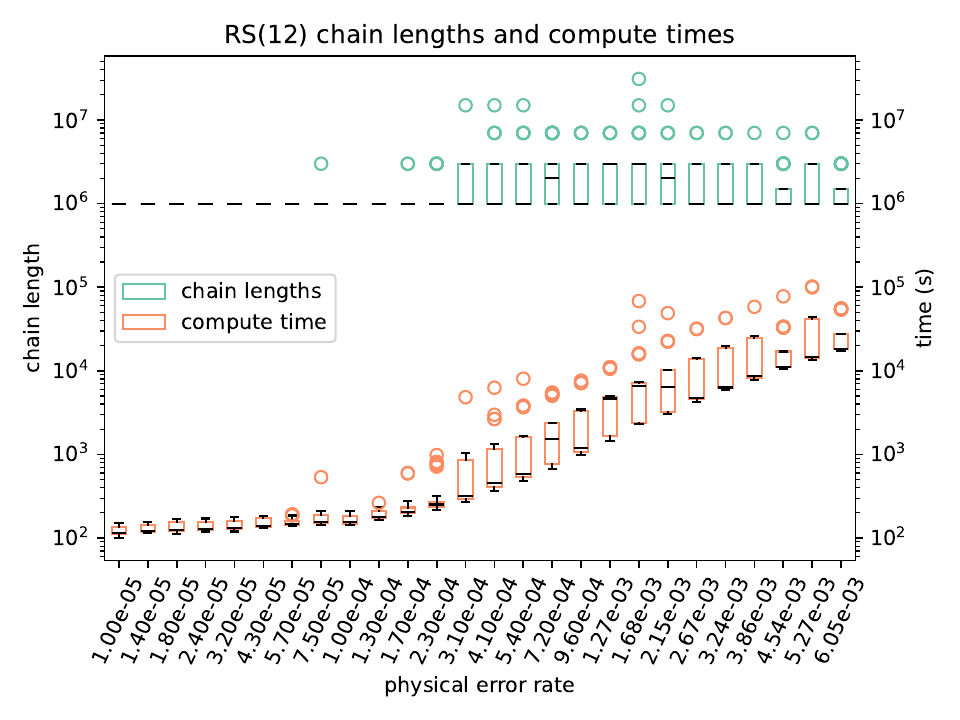}
\caption{\label{fig:chainlengths_and_times}Showing the chain lengths and compute times for Markov chains used in the splitting method (boxplot whiskers are 1.5 times the interquartile range). Note 1 day is 86,400 seconds. With the serialization required in multi-seeded splitting steps (iib,iic) this means the method applied to BB(12) takes roughly several days or a couple weeks, and can take even more time for some instances of upward splitting, see caption of \fig{splitting_bb}. Note the timing is just for constructing the chains, not analyzing them to obtain logical error rate estimates. However, this latter step is usually quite fast in comparison, and even for BB($12$) where the chains are longest it is on the scale of an hour.}
\end{figure}

\clearpage
\bibliographystyle{alpha}
\bibliography{references}

\end{document}